\theoremstyle{plain}
\newtheorem{thm}{Theorem}[section]
\newtheorem*{thm*}{\bf Theorem }
\newtheorem{lem}[thm]{Lemma}
\newtheorem{prop}[thm]{Proposition}
\newtheorem*{prop*}{\bf Proposition}
\newtheorem{cor}[thm]{Corollary}
\theoremstyle{definition}
\newtheorem{defn}[thm]{Definition}
\newtheorem*{example}{Example}
\theoremstyle{remark}
\newtheorem*{rem}{Remark}
\newcommand{\cupdot}{\mathbin{\dot{\cup}}}
\newcommand{\set}[1]{ \{1,\ldots,   {#1} \} }
\newcommand{\m}[1]{\mathcal{#1}}
\newcommand{\mb}[1]{\mathbb {#1}}
\newcommand{\ti}[1]{\tilde {#1}}
\newcommand{\ds}{/\!/}
\newcommand{\un}[1]{\underline{#1}}
\author{Marko Berghoff}
\title{Wonderful Compacifications in Quantum Field Theory}
\begin{document}
\begin{abstract}
In \cite{bbk} it was shown how so-called wonderful compactifications can be used for renormalization in the position space formulation of quantum field theory. This article aims to continue this idea, using a slightly different approach; instead of the subspaces in the arrangement of divergent loci, we use the poset of divergent subgraphs as the main tool to describe the whole renormalization process. This is based on \cite{fe} where wonderful models were studied from a purely combinatorial viewpoint. The main motivation for this approach is the fact that both, renormalization and the model construction, are governed by the combinatorics of this poset. Not only simplifies this the exposition considerably, but also allows to study the renormalization operators in more detail. Moreover, we explore the renormalization group in this setting by studying how the renormalized distributions behave under a change of renormalization points. 
\end{abstract}
\maketitle

\section{Introduction}

Quantum Field Theory (QFT), the unification of quantum mechanics and special relativity, is the last century's most successful physical theory. Although plagued with infinities and ill-defined quantities all over the place, perturbative calculations are in astonishing agreement with data obtained from particle physics experiments.
The art of taming those mathematical monsters, i.e.\ extracting physical sensible data from a priori divergent expressions, is called renormalization. Over the years it has turned from a ``black magic cooking recipe'' into a well-established and rigorous formulated theory, at the latest since the 90's when Kreimer discovered a Hopf algebra structure underlying renormalization. The main implication is that (perturbative) QFT is governed by the combinatorics of Feynman diagrams. This has proven to be a very powerful tool, both in computational problems as 
well as in improving our understanding of QFT in general. In addition, it has revealed surprising connections to deep questions in pure mathematics, for example in number theory and algebraic geometry \cite{dk}.

The mathematical reason for divergences arising in perturbative calculations is that quantum fields are modeled by operator-valued distributions for which products are in general not well-defined. In the position space formulation of QFT renormalization translates directly into the problem of extending distributions as shown by Epstein and Glaser in \cite{eg}. Although they formulated and solved the renormalization problem already in the early 70's, since then no real progress has been made in this direction. This is mainly due to two reasons: Firstly, their approach was mathematical precise but conceptually difficult. It involved a lot of functional analysis, in some sense disguising the beauty and simplicity of the idea. Secondly, it is not applicable to calculations at all. Only recently, in a first approximation to quantum gravity, physicists have started to study quantum fields on general spacetimes and in this setting one is naturally forced to work in position space \cite{bf}. 

In \cite{bbk} another, more geometric approach to the renormalization problem was presented. In position space, Feynman rules associate to a graph $G$ a pair $(X^G,v_G)$ where $X^G$ is a product of the underlying spacetime and $v_G:X^G \to \mb R$ a rational function. One would like to evaluate this to
\begin{equation*}
 \big(X^G,v_G\big)=\int_{X^G} v_G,
\end{equation*}
but this fails in general as the integrand need not be an element of $L^1(X^G)$. If $v_G$ does not vanish fast enough at infinity this is called an infrared divergence. The problem is circumvented by viewing $v_G$ as a distribution on the space of compactly supported test functions. On the other hand, ultraviolet divergences arise from $v_G$ having poles along certain subspaces of $X^G$. These subspaces are determined by $\m D$, the set of (ultraviolet-)divergent subgraphs of $G$, and form the \textit{divergent arrangement} $X_{\m D}^G$. In this setting renormalization translates into the problem of finding an extension of $v_G$ onto $X^G_{\m D}$. In \cite{bbk} this is solved with a geometric ansatz: The idea is to resolve the divergent arrangement into a normal crossing divisor and then define canonical renormalization operators that extend $v_G$ to a distribution defined on the whole space $X^G$. Such a model, also called a compactification of the complement of $X_{\m D}^G$, is provided by the \textit{
wonderful model} construction by DeConcini and Procesi \cite{dp}, based on techniques from Fulton and MacPherson's seminal paper \cite{fm}. What makes it so well-suited for renormalization is the fact that the whole construction is governed by the combinatorics of the arrangement which translates directly into the subgraph structure of $G$. 

The idea of employing a resolution of singularities to extend distributions is not new. It is based on a paper by Atiyah \cite{at} that highlighted the usefulness of Hironaka's famous theorem in seemingly unrelated areas of mathematics. In addition, the same technique was applied in Chern-Simons perturbation theory independently by Kontsevich \cite{ko} as well as Axelrod and Singer \cite{as}. For an application of this idea to renormalization in parametric space see~\cite{bek}. 

The present article aims to continue the work of \cite{bbk} emphasizing a slightly different point of view. We use another language to formulate the wonderful construction and the renormalization process; instead of the subspaces in the divergent arrangement, we express the central notions in terms of the poset $\m D$, formed by all divergent subgraphs of $G$, partially ordered by inclusion. This is inspired by \cite{fe} where the wonderful model construction is studied from a combinatorial point of view. Not only does this simplify the definitions and proofs immensely, it also highlights the combinatorial flavour in the construction of both, the wonderful models and the renormalization operators. In addition, instead of the vertex set of $G$ we use \textit{adapted} spanning trees $t$ to define coordinates on $X^G$, naturally suited to the problem. This is also mentioned in \cite{bbk}, but not used to its full extent. The main point is that such spanning trees are stable under certain graph theoretic 
operations like contraction of divergent subgraphs and therefore provide a convenient tool to formulate the wonderful construction. It allows to treat the definition of the renormalization operators in more detail (\cite{bbk} focuses mainly on the model construction for arrangements coming from graphs) and to study the renormalization group, a powerful tool (not only) in QFT, that allows even for statements beyond perturbation theory. The main result is a formula for the change of renormalization points, the parameters involved in defining the renormalization operators. It relates a so-called finite renormalization of the renormalized distribution $\mathscr R[v_G]$ to a sum of distributions determined by the divergent subgraphs of $G$.

The presentation is organized as follows. The first section covers some topics of distribution theory that will be needed later; it finishes with a definition of Feynman rules, i.e.\ how QFT associates distributions to Feynman diagrams, and an analysis of the divergent loci of these distributions.
The next section is devoted to the two other central objects in this article, smooth models and posets. It starts with an exposition of the wonderful model construction as in \cite{dp}, then, following \cite{fe}, we introduce the necessary combinatorial language to review this construction from a purely combinatorial viewpoint; special emphasis is given to the case of arrangements coming from graphs via Feynman rules. 
After these mostly preliminary steps we come to the main part, the \textit{wonderful renormalization} process. We first study the pole structure of the pullback of a Feynman distribution onto an associated wonderful model and then define two renormalization operators. This definition requires some choices to be made and a natural question, considered in Section \ref{renormalization group}, is to ask what happens if one varies these parameters. We derive and proof a formula for these so-called finite renormalizations. 
The last section finishes the wonderful renormalization process by showing that it is physical reasonable, i.e.\ it satisfies the Epstein-Glaser recursion principle, in other contexts known as locality of counterterms. After that we discuss the connection between the renormalization operation for single graphs presented here and the Epstein-Glaser method. We finish with an outlook to further studies: The treatment of amplitudes and the role of Fulton-MacPherson compactifications in this setting, and the Hopf algebraic formulation of wonderful renormalization. 

\section{Distributions} \label{distributions}

In this section we collect some preliminary material about distributions. Although crucial in the definition of QFT (see for example \cite{ws}), in most textbooks the distributional character of the theory is largely neglected. This is fine as long as one works in momentum space, but in position space they play a central role in every aspect. We start by defining distributions on manifolds. Then we state the extension problem and study its solution in a toy model case. The section finishes with a definition of Feynman distributions, i.e.\ distributions associated to graphs via Feynman rules, and an analysis of the corresponding divergent loci. 
\newline

Let $X \subseteq \mb R^d$ be open and denote by $\m D(X):= C^{\infty}_0(X)$ the space of compactly supported smooth functions on $X$. We write $\m D'(X)$ for the space of continuous linear functionals on $\m D(X)$ and $\langle u | \varphi \rangle$ for the value of $u \in \m D'(X)$ at $\varphi \in \m D(X)$. By (the usual) abuse of notation we use the same symbol $f$ for a function and the functional $u_f$ it represents. In the latter case we refer to $f$ as the kernel of $u_f$. The locus where $u$ cannot be given by a function is called the \textit{singular support} of $u$.

For $X \subseteq \mathbb{R}^d$ open it seems natural to define the space of distributions as above, $\mathcal D'(X)\!:=\!(\mathcal D(X))^*$. To generalize this to the manifold case there are two possibilities, depending on whether distributions should generalize functions or measures (cf.\ \cite{ho}). In the following let $\{ \psi_i : U_i \to \ti U_i \subseteq \mathbb{R}^d\}_{i\in I}$ be an atlas for a smooth manifold $X$. 
\begin{defn}
A \textit{distribution} $u$ on $X$ is given by a collection of distributions $\{ u_i \in \mathcal D'(\ti U_i) \}_{i \in I}$ such that for all $i,j \in I$
\begin{equation*}
 u_j=(\psi_i\circ\psi_j^{-1})^*u_i  \quad \text{in}\quad \psi_j(U_i\cap U_j).
\end{equation*}
The space of distributions on $X$ is denoted by $\mathcal D' (X)$.
\end{defn}

This is the way to define distributions as generalized functions on $X$ (every $u \in  C^0 (X)$ defines a distribution by setting $u_i:=u\circ \psi_i^{-1}$). If we start from the point of view that they are continuous linear forms on $\m D(X)$, we arrive at generalized measures on $X$, also called \textit{distribution densities}:

\begin{defn}
 A \textit{distribution density} $\tilde u$ on $X$ is a collection of distributions $\{ \tilde u_i \in \mathcal D'(\ti U_i) \}_{i\in I}$ such that for all $i,j \in I$
\begin{equation*}
 \tilde u_j=\big|\text{det} \ D(\psi_i\circ\psi_j^{-1})\big|(\psi_i\circ\psi_j^{-1})^*\tilde u_i \quad \text{in} \quad \psi_j(U_i\cap U_j).
\end{equation*}
The space of distribution densities on $X$ is denoted by $\mathcal{\tilde D}' (X)$.
\end{defn}

Because of their transformation properties, distribution densities are also called \textit{pseudoforms}. They generalize differential forms in the sense that they can be integrated even on non-orientable manifolds. For more on pseudoforms and integration on non-orientable manifolds we refer to \cite{ni}. Note that if $X$ is orientable, there is an isomorphism $\mathcal{D}'  (X) \cong \mathcal{\tilde D}' (X)$ via $u\mapsto u \nu $ for $\nu$ a strictly positive density (i.e.\ a volume form) on $X$. In particular, on $\mathbb{R}^d$ such a density is given by the Lebesgue measure $\nu=|dx|$ and we write $\tilde u$ for $u|dx|$ with $u \in \m D'(\mb R^d)$.

For later purposes we introduce two operations on distributions and densities, the pullback and pushforward along a smooth map $f:X\to X'$. 

\begin{defn}[Pushforward]
Let $X\subseteq \mathbb{R}^m$ and $X'\subseteq \mathbb{R}^n$ be open and $f: X\! \to\! X'$ be surjective and proper (if $u$ is compactly supported this requirement can be dropped). For a distribution $u$ on $X$ the pushforward $f_*u\in \mathcal D'(X')$ is defined by 
\begin{equation*}
\langle f_*u | \varphi \rangle = \langle u| f^*\varphi \rangle \quad \text{for all} \quad \varphi \in \mathcal D(X').  
\end{equation*}
For $X$ and $X'$ manifolds with atlantes $(\psi_i,U_i)_{i\in I}$ and $(\psi'_j,U'_j)_{j\in J}$ we define the pushforward $f_*\ti u \in \mathcal{\tilde D}' (X')$ of $\ti u\in \mathcal{\tilde D}' (X)$ by 
\begin{equation*}
 (f_*u)_j:= (\psi'_j \circ f \circ \psi_i^{-1})_*u_i \quad \text{in} \quad U'_j \cap (\psi'_j \circ f \circ \psi_i^{-1})(U_i).
\end{equation*}
\end{defn}
The question under what conditions the pullback of distributions is defined is more delicate, see \cite{ho} for a detailed exposition. We state only one special case where it is possible to define a pullback: Let $X, X'$ be open subsets of $\mathbb{R}^n$ and $f:X\to X'$ a smooth submersion. Then there exists a unique linear operator $f^*: \mathcal D'(X') \to \mathcal D'(X)$ such that $f^*u=u\circ f$ if $u \in C^0(X')$. If $X, X'$ are manifolds and $\ti u$ is a density on $X'$, then $f^*\ti u\in \mathcal{ \ti  D}'(X)$ is defined by
\begin{equation*}
 (f^*u)_i:= (\psi'_j \circ f \circ \psi_i^{-1})^*u_j \quad \text{in} \quad U'_j \cap (\psi'_j f \psi_i^{-1})(U_i).
\end{equation*}

\subsection{Extension of distributions}\label{ext}

We present the theory of extending distributions by studying a toy model, distributions on $\mathbb{R}\setminus \{0\}$ given by kernels having an algebraic singularity at 0, following the exposition in \cite{gs}. Applying this toy model to the extension problem for Feynman distributions is precisely the idea behind wonderful renormalization.  

\begin{defn}[Extension problem]\label{extension prob}
Let $X$ be a smooth manifold and $Y\subseteq X$ an immersed submanifold. Given a density $\ti u \in \m{\ti D}'(X \setminus Y)$ find an extension of $\ti u$ onto $X$, i.e.\ find a density $\ti u_{\text{ext} } \in \m{ \ti D}'(X)$ with 
\begin{equation*}
 \langle \ti u_{\text{ext} } | \varphi \rangle = \langle \ti u | \varphi \rangle \quad\text{for all}\quad \varphi \in \mathcal D(X \setminus Y).
\end{equation*}
\end{defn}
In this very general formulation the problem is not always solvable. Moreover, if there is a solution, it need not be unique since by definition two extension may differ by a distribution supported on $Y$. Therefore, additional conditions are sometimes formulated to confine the space of solutions. Usually one demands that the extension should have the same properties as $u$, for example scaling behaviour, Poincare covariance or solving certain differential equations.

The toy model: Let $u \in \m D'(\mathbb{R}\setminus \{0\})$ be defined by the kernel $x \mapsto |x|^{-1}$. A priori $u$ is only defined as a distribution on the space of test functions vanishing at $0$. 
The first step in the process of extending $u$ is to regularize it by introducing a complex parameter $s\in \mathbb{C}$. Raising $u$ to a complex power $u^s$ is known as \textit{analytic regularization}. It justifies the following calculations:
\begin{align*}
\langle u^s | \varphi \rangle & = \int_{\mathbb{R}} dx \frac{1}{|x|^s}\varphi(x) \\
& = \int_{[-1,1]} dx \frac{\varphi(x)-\varphi(0)}{|x|^s}  + \varphi(0) \int_{[-1,1]} dx \frac{1}{|x|^s}   + \int_{\mathbb{R}\setminus [-1,1]} dx \frac{\varphi(x)}{|x|^s}\\
& = \int_{[-1,1]} dx \frac{\varphi(x)-\varphi(0)}{|x|^s}  + \frac{2\varphi(0)}{1-s}  + \int_{\mathbb{R}\setminus [-1,1]} dx \frac{\varphi(x)}{|x|^s}
\end{align*}
where the last term is defined for all $s\in \mathbb{C}$, the second term for $s\neq1$ and the first one for $\text{Re}(s)<3$. 
We have thus found a way to split the regularized distribution $u^s=u_{\infty}(s,\cdot) + u_{\heartsuit}(s,\cdot) $ into a divergent and a convergent part. The divergent part is the principal part of the Laurent expansion of the meromorphic distribution-valued function $s \mapsto u^s$ in a punctured disc around 1 in $\mathbb{C}$: 
\begin{align*}
 \langle u_{\infty}(s,\cdot) | \varphi \rangle & = \frac{2 \varphi (0)}{1-s}, \\
 \langle u_{\heartsuit}(s,\cdot) | \varphi \rangle & = \int_{\mathbb{R}}dx \frac{\varphi(x)- \theta(1-|x|)\varphi(0)}{|x|^s}.
\end{align*}
To continue the process of extending $u$ we have to get rid of the divergent part in some sensible way (in physics this is the choice of a \textit{renormalization scheme}) and take the limit $s\to 1$. The most straightforward way to do so is by subtracting the pole (\textit{minimal subtraction}) and set 
\begin{equation*}
 \hat u = r_1[u^s]|_{s=1} := \left(u^s - \frac{2\delta}{1-s}\right)\bigg|_{s=1} = u_{\heartsuit}(1,\cdot).
\end{equation*}  
Obviously this technique can be generalized to extend distributions $u$ with higher negative powers of $|x|$ - one simply subtracts a higher order Taylor polynomial from $\varphi$.

Another renormalization scheme, \textit{subtraction at fixed conditions}, is given by 
\begin{equation*}
\langle r_{\nu} [u^s] | \varphi \rangle := \langle u^s | \varphi \rangle - \langle u^s | \varphi(0) \nu \rangle 
\end{equation*}
where $\nu \in \mathcal{D}(\mathbb{R})$ is a smooth cutoff function with $\nu(0)=1$. Another way to formulate the subtracted distribution is 
\begin{equation*}
 \langle u^s | \varphi(0) \nu \rangle = \langle (p_0)_* (\nu u^s) | \delta_0[ \varphi] \rangle.
\end{equation*}
Here $p_0: \mathbb{R}\to \{0\}$ is the projection onto the divergent locus and $\delta_0$ is interpreted as an operator $\mathcal{D}(\mathbb{R}) \to \mathcal{D}(0)$ mapping test functions on $\mathbb{R}$ onto test functions supported on the divergent locus. From this it is also clear that the difference between two such renormalization operators $r_{\nu}$ and $r_{\nu'}$ is given by a distribution supported on $\{0\}$, i.e.\ a linear combination of $\delta$ and its derivatives. 
This formulation will turn out to be very useful.

A nice feature of these renormalization operators $r$ is that they commute with multiplication by smooth functions, $r[fu]=fr[u]$ for $f\in \mathcal{C}^{\infty}(\mathbb{R})$. In addition, $r$ belong to the class of \textit{Rota-Baxter operators}, a fact extensively used in the Hopf algebraic formulation of renormalization (see for example~\cite{efg}).

Later we will work with distributions given by kernels 
\begin{equation*}
 u^s(x)=\frac{1}{|x|^{1+d(s-1)}}.
\end{equation*}
In this case $u^s$ splits into  
\begin{equation}\label{distro split}
 u^s = - \frac{2}{d} \frac{\delta_0}{s-1} + u_{\heartsuit}(s)
\end{equation}
with $u_{\heartsuit}(s)$ holomorphic for Re$(s)<\frac{2+d}{d}$.
\newline

\subsection{Feynman distributions}\label{graphs to arr}

Feynman diagrams are convenient book-keeping devices for the terms in the perturbative expansion of physical quantities. The map that assigns to every Feynman diagram its corresponding analytical expression is called \textit{Feynman rules} and denoted by $\Phi$. In position space the map $\Phi$ assigns to every diagram $G$ a pair $(X^G,\ti v_G)$ where $\ti v_G$ is a differential form on the space $X^G$, a cartesian product of the underlying spacetime $M$. We would like to evaluate 
\begin{equation*}
 \big(X^G,\ti v_G\big) \longmapsto \int_{X^G} \ti v_G,
\end{equation*}
but this is in general not possible due to the problem of ultraviolet and infrared divergences. While we avoid the infrared problem by viewing $\ti v_G$ as a distribution density, the ultraviolet problem translates into an extension problem for $\ti v_G$. The ultraviolet divergences of $\ti v_G$ are assembled in a certain subspace arrangement that we will describe at the end of this section, after the definition of $\Phi$. 

We consider a massless scalar quantum field in $d$-dimensional Euclidean spacetime $M:=\mathbb R^d$. The case of fields with higher spin differs only by notational complexity. On the other hand, the massive case is much harder because already the simplest examples have special functions arising as propagators of the free theory. Working in the Euclidean metric is justified by the technique of Wick rotation (see \cite{weinberg}) that allows one to do calculations in $\mb R^d$ and transform the results back to Minkoswki spacetime. The position space propagator of a massless scalar field is given by the Fourier transform of the momentum space propagator,
\begin{equation*}
\triangle(x)= \m F \left(k\mapsto \frac{1}{k^2}\right)(x)  =\frac{1}{x^{d-2}}\, , \quad x\in M.
\end{equation*}
Let $G$ be a connected graph. By a graph we mean the following combinatorial object:
\begin{defn}
 A graph $G$ is an ordered pair $G=(V,E)$ of a finite set $V$ of vertices and a finite multiset $E$ of unordered distinct (we do not allow loops, i.e.\ edges connecting a vertex with itself) pairs of elements of $V$.
\end{defn}
\begin{example}
 The \textit{dunce's cap} graph (Figure \ref{dunce}) will serve as main example later throughout the text. Here $V=\{v_1,v_2,v_3\}$ and $E=\{ e_1=(v_1,v_2),e_2=(v_1,v_3),e_3=(v_2,v_3), e_4=(v_2,v_3) \}$.
  \begin{figure}[h]
  \centering
  \begin{tikzpicture}[node distance=1cm and 1cm]
   \coordinate[label=left:$v_1$] (v0);
   \fill[black] (v0) circle (.05cm);
   \coordinate[below right=of v0,yshift=0.3cm,label=below:$v_2$] (v1);
   \fill[black] (v1) circle (.05cm);
   \coordinate[above right=of v0,yshift=-0.3cm,label=above:$v_3$] (v2);
   \fill[black] (v2) circle (.05cm);
   \draw (v0) to (v1) node [below,pos=0.5,xshift=0.4cm,yshift=-0.3cm] {$e_1$};
   \draw (v0) to (v2) node [above,pos=0.5,xshift=0.4cm,yshift=0.3cm] {$e_2$};
   \draw (v2) to[out=-135,in=135] (v1) node [left,pos=0.5,xshift=1.2cm,yshift=0.01cm] {$e_3$};
   \draw (v2) to[out=-45,in=45] (v1) node [left,pos=0.5,xshift=1.8cm,yshift=0.01cm] {$e_4$};  
  \end{tikzpicture}
\caption{Dunce's cap}\label{dunce}
 \end{figure}
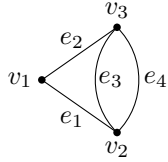
\end{example}

\begin{defn}
 A \textit{subgraph} $g$ of $G$, denoted by $g\subseteq G$, is determined by a subset $E(g)\subseteq E(G)$. 
 Usually one defines the vertex set of $g$ to be the set of vertices of $V(G)$ that are connected to edges of $g$, so that $g$ is a graph itself, $g=(V(g), E(g))$. For our purposes it is more convenient to allow also for isolated vertices. Therefore we define a subgraph $g\subseteq G$ to be an equivalence class under the relation
 \begin{equation*}
  g \sim g' \Longleftrightarrow E(g')=E(g).
 \end{equation*}
 For subgraphs $g,h\subseteq G$ we introduce the following operations:
 \begin{enumerate}
  \item Union and intersection: $g\cup h$ and $g\cap h$ are the subgraphs of $G$ defined by the corresponding operations on the edge sets of $g$ and $h$.
  \item Deletion: For $g\subseteq h$ the deletion $h\setminus g$ is the graph $h$ with all edges of $g$ removed.
  \item Contraction: For $g\subseteq h$ the contraction $h/g$ is the graph $h$ with all edges $e$ in $E(g)$ removed and for every $e\in E(g)$ the two vertices connected to $e$ identified.
 \end{enumerate}
\end{defn}

As shown in \cite{bbk}, Feynman rules are determined by the topology of $G$:
Pick a labelling $V=\{v_0,\ldots, v_n\}$ of the vertices of $G$ and an orientation on the edges of $G$. For a finite set $F$ let $\mathbb{R}^F$ denote the vector space with (fixed) basis the elements of $F$. The cohomology of the simplicial complex $G=(V,E)$ gives rise to an exact sequence,
\begin{equation*}
 0 \longrightarrow \mathbb{R} \overset{\sigma}{\longrightarrow} \mathbb{R}^V \overset{\delta}{\longrightarrow} \mathbb{R}^{E} \longrightarrow H^1(G,\mathbb{R}) \longrightarrow 0.
\end{equation*}
Here the map $\sigma$ sends $1$ to $v_0 + \cdots + v_n$ and $\delta$ is given by $\delta(v)= \sum_{e\in E}(v:e)e$ with $(v:e)=\pm 1$ if $e$ starts/ends at $v$ and $0$ otherwise. Fix a basis of coker($\sigma)$ by an isomorphism $\varphi: V':=V\setminus \{v_0\} \rightarrow \text{coker}(\sigma)$. This defines an inclusion $ \iota=\delta \circ \varphi: \mb R^{V'} \cong \text{coker}(\sigma) \hookrightarrow \mb R^E$. Doing this component-wise on the space $X^G:=M^{V'}=(\mathbb{R}^d)^{V'}$ we obtain an inclusion $I:=\iota^{\oplus d}: X^G \hookrightarrow M^{E}$ and define a rational function $v_G := I^*( \triangle^{\otimes E(G)} ): X^G \to \mb R$ by
\begin{equation*}
v_G: \sum_{v \in V'} x_v v \longmapsto \prod_{e \in E} \triangle\left( \sum_{v \in V'} (v:e)x_v \right).
\end{equation*}

Every edge $e\in E$ defines a linear form $\omega_e:=e^*\circ \iota$ on $\mb R^{V'}$ and a linear subspace of $(X^G)^*$ by
\begin{equation*}
 A_e:=\langle\omega_e\rangle^{\oplus d} =  \left\{ (x_1, \ldots,x_n) \mapsto \sum_{i=1}^d\alpha_i \omega_e ( x^i_1, \ldots , x_n^i) , \ \alpha_i \in \mathbb{R} \right\}.
\end{equation*}
For a subgraph $g\subseteq G$ we define $A_g:=\sum_{e \in E(g)} A_e $. Every family $\mathcal{P}$ of subgraphs of $G$ gives then rise to a \textit{subspace arrangement} in $(X^G)^*$, 
\begin{equation*}
 \m A_{\mathcal{P}}:= \{ A_g \mid g \in \mathcal P \}.
\end{equation*}
Note that two subgraphs $g,h \subseteq G$ may define the same subspace, $A_g=A_h$. Therefore, we consider only subfamilies of $\m G(G)$, the set of \textit{saturated} subgraphs of $G$. Saturated subgraphs are maximal with respect to the property of defining their corresponding subspaces. A precise definition is given in Section \ref{graph poset}. 
Two arrangements are especially important for our purposes, the \textit{singular arrangement} 
\begin{equation*}
 \mathcal{A}_{\m G(G)} := \{ A_g \mid g\subseteq G \text{ is saturated} \},
\end{equation*}
and the arrangement coming from the family $\m D(G)$ of \textit{divergent} subgraphs of~$G$,
\begin{equation*}
 \mathcal A_{\mathcal{D}(G)} = \{ A_g \mid g \subseteq G \text{ is divergent}\}.
\end{equation*}

\begin{defn}\label{sdd}
 Let $h_1(\cdot)$ denote the first Betti number. Define the \textit{superficial degree of divergence} $\omega$ of $G$ by 
 \begin{equation*}
  \omega(G):=dh_1(G)-2 |E|.
 \end{equation*}
Then $G$ is called \textit{divergent} if $ \omega(G)\geq 0$. $G$ is \textit{at most logarithmic} if $ \omega(g)\leq 0$ holds for all $g\subseteq G$. If $\mathcal D(G)=\{\emptyset,G\}$, then $G$ is called \textit{primitive}.
\end{defn}

\begin{lem}
 Let $G$ be at most logarithmic and $d>2$. Then 
 \begin{equation*}
  g\in \m D(G) \Longrightarrow g \text{ is saturated}.
 \end{equation*}
\end{lem}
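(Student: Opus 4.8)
The plan is to unwind the definition of ``saturated'' into a purely combinatorial condition on first Betti numbers, after which the statement reduces to a one-line computation with the superficial degree of divergence.

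First I would record what the equality $A_g = A_h$ of subspaces means. Writing $A_g = V_g^{\oplus d}$ with $V_g := \langle \omega_e : e\in E(g)\rangle \subseteq (\mathbb R^{V'})^*$, we have $A_h = A_g$ if and only if $V_h = V_g$. From the exact sequence defining $\iota$ one sees that the linear relations among the family $\{\omega_e\}_{e\in E}$ form the cycle space of $G$ (so $\{\omega_e\}_{e\in E}$ is a linear realization of the cycle matroid of $G$); in particular $\dim V_g = |E(g)| - h_1(g)$ for every subgraph $g\subseteq G$. Since adjoining an edge $e\notin E(g)$ enlarges $V_g$ while raising $|E(g)|$ by one, this gives
\begin{equation*}
 V_{g\cup e} = V_g \iff \dim V_{g\cup e} = \dim V_g \iff h_1(g\cup e) = h_1(g)+1 .
\end{equation*}
Next I would note that $g$ fails to be saturated exactly when such an edge exists: if $g\subsetneq h\subseteq G$ with $A_h = A_g$, then for any $e\in E(h)\setminus E(g)$ we get $A_g\subseteq A_{g\cup e}\subseteq A_h = A_g$, hence $A_{g\cup e} = A_g$. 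So the lemma becomes: if $G$ is at most logarithmic and $d>2$, then no divergent $g$ admits an edge $e\in E(G)\setminus E(g)$ with $h_1(g\cup e) = h_1(g)+1$.

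For the main step I would argue by contradiction. Suppose $g\in\m D(G)$ and such an $e$ exists; set $h := g\cup e\subseteq G$. Then $h_1(h) = h_1(g)+1$ and $|E(h)| = |E(g)|+1$, so
\begin{equation*}
 \omega(h) = d\,h_1(h) - 2|E(h)| = d\big(h_1(g)+1\big) - 2\big(|E(g)|+1\big) = \omega(g) + (d-2).
\end{equation*}
As $g$ is divergent, $\omega(g)\ge 0$, and since $d>2$ this forces $\omega(h) > 0$, contradicting the assumption that $G$ is at most logarithmic (which demands $\omega(h)\le 0$). Hence every divergent subgraph of $G$ is saturated.

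I expect the only genuine work to lie in the first step: identifying that $A_g$ is governed by the matroid rank $|E(g)| - h_1(g)$ of its edge set, so that ``$g$ not saturated'' becomes ``some edge can be adjoined without raising the rank''. This is the linear-algebraic content of the forms $\omega_e$ coming from the exact sequence, and is presumably exactly what the precise definition of saturated in Section~\ref{graph poset} encodes; granting it, the rest is pure bookkeeping. Note that $d>2$ is used in exactly one place, to upgrade $\omega(g)\ge 0$ to the strict inequality $\omega(h)>0$, while ``at most logarithmic'' is what supplies the contradicting bound $\omega(h)\le 0$.
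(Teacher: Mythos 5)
Your proof is correct and follows essentially the same route as the paper: both reduce non-saturation to the existence of a single edge $e$ with $A_{g\cup e}=A_g$, observe that this forces $h_1(g\cup e)=h_1(g)+1$, and derive the contradiction $\omega(g\cup e)=\omega(g)+d-2>0$ with $G$ being at most logarithmic. Your version merely spells out more carefully the matroid-rank identity $\dim V_g=|E(g)|-h_1(g)$ that the paper leaves implicit.
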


\begin{proof}
 Suppose $A_g=A_{g\cup e} \subseteq (M^{V'})^*$ for some $e\in E(G\setminus g)$. Thus, $g\cup e$ connects the same set of vertices as $g$ and adding $e$ to $g$ must produce a new, independent cycle, $h_1(g\cup e)=h_1(g)+1$. Therefore, $\omega(g\cup e)= \omega(g) + d-2 >0$, a contradiction to $G$ at most logarithmic.
\end{proof}
As shown in the next proposition, the divergent arrangement $\m A_{\m D(G)} $ describes exactly the locus where extension is necessary. 
\begin{prop} \label{sing supp}
 Let $G=(V,E)$ be connected and at most logarithmic. Set 
 \begin{equation*}
 X^G_{\text{s}}:=\bigcup_{e\in E} A^{\bot}_e  \quad\text{and}\quad  X^G_{\mathcal D}:=\bigcup_{g\in \mathcal D(G)} A^{\bot}_g .
\end{equation*}
Then $v_G$ is a well-defined distribution on $X^G \setminus X^G_{\mathcal D}$ and the singular support of $v_G$ is $X^G_{\text{s}} \setminus X^G_{\mathcal D}$.
 \end{prop}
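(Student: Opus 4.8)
The plan is to argue locally. For $p\in X^G$ put $S_p:=\{e\in E:\omega_e(p)=0\}$ and let $g_p\subseteq G$ be the subgraph with edge set $S_p$. On a small ball $U_p$ around $p$ each propagator $\triangle(\omega_e(\cdot))$ with $e\notin S_p$ is smooth and strictly positive, so on $U_p$ we may write $v_G=\phi_p\cdot\prod_{e\in S_p}\triangle(\omega_e(\cdot))$ with $\phi_p:=\prod_{e\notin S_p}\triangle(\omega_e(\cdot))\in C^\infty(U_p)$, $\phi_p>0$; hence every local property of $v_G$ at $p$ is a property of $\prod_{e\in S_p}\triangle(\omega_e(\cdot))$. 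I would first set up the power-counting dictionary. For a subgraph $h\subseteq G$ the subspace $A_h^{\bot}\subseteq X^G$ is the common zero locus of the forms $\omega_e$, $e\in E(h)$, so $\operatorname{codim}A_h^{\bot}=d\cdot\operatorname{rk}(h)$ with $\operatorname{rk}(h)=|E(h)|-h_1(h)$ (a short computation with the exact sequence defining $\iota$). Scaling the $d\cdot\operatorname{rk}(h)$ coordinates transverse to $A_h^{\bot}$ by $\lambda\to0$ multiplies each factor $\triangle(\omega_e(\cdot))$ whose $\omega_e$ vanishes on $A_h^{\bot}$ by $\lambda^{-(d-2)}$, leaves the other factors bounded near $A_h^{\bot}$, and contributes $\lambda^{d\cdot\operatorname{rk}(h)}$ from the Jacobian, so (taking $h$ saturated, i.e.\ maximal with the given $A_h^{\bot}$ --- this only weakens the resulting condition) the scaling degree along $A_h^{\bot}$ equals
\begin{equation*}
 d\cdot\operatorname{rk}(h)-(d-2)|E(h)|=2|E(h)|-d\,h_1(h)=-\omega(h).
\end{equation*}

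By the power-counting theorem for integrals of products of powers of linear forms along a subspace arrangement --- exactly the statement made transparent by the wonderful resolution used later in this paper, and proved in this position-space context in \cite{bbk} (see also \cite{weinberg}) --- the function $\prod_{e\in S_p}\triangle(\omega_e(\cdot))$ is locally integrable near $p$ if and only if every flat of the arrangement $\{A_e^{\bot}\}_{e\in S_p}$ has positive scaling degree, i.e.\ if and only if $\omega(h)<0$ for every subgraph $h$ with $\emptyset\neq E(h)\subseteq S_p$. Now let $p\notin X^G_{\mathcal D}$ and let $h$ be such a subgraph: if $h_1(h)=0$ then $\omega(h)=-2|E(h)|<0$; if $h_1(h)\geq1$ then $\omega(h)\leq0$ since $G$ is at most logarithmic, while $\omega(h)=0$ is impossible, as it would make $h$ divergent with $p\in A_h^{\bot}\subseteq X^G_{\mathcal D}$. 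Hence $\omega(h)<0$ for every nonempty $h\subseteq g_p$, so $v_G$ is locally integrable near $p$; since $p\notin X^G_{\mathcal D}$ was arbitrary, $v_G\in L^1_{\mathrm{loc}}(X^G\setminus X^G_{\mathcal D})$ and therefore defines a distribution there.

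The inclusion $\operatorname{sing\,supp}(v_G)\subseteq X^G_{\mathrm s}\setminus X^G_{\mathcal D}$ is immediate: on $X^G\setminus X^G_{\mathrm s}$ every $\omega_e$ is nonzero, so $v_G$ is a finite product of smooth functions, hence smooth. For the reverse inclusion fix $p\in X^G_{\mathrm s}\setminus X^G_{\mathcal D}$ and $e_0\in S_p\neq\emptyset$. Shrinking $U_p$ so that $\phi_p\geq c>0$ on $U_p$ and $\triangle(\omega_e(\cdot))\geq1$ on $U_p$ for the remaining $e\in S_p$ (possible since $\omega_e(p)=0$; here $d>2$ enters), we get $v_G\geq c\,|\omega_{e_0}(\cdot)|^{-(d-2)}$ on $U_p\setminus X^G_{\mathrm s}$, and since $\omega_{e_0}(x)\to 0$ as $x\to p$ this forces $v_G$ to be unbounded on every neighbourhood of $p$. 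If the distribution $v_G$ agreed near $p$ with a smooth --- or even merely continuous --- function $\psi$, then $v_G=\psi$ almost everywhere near $p$, hence everywhere on the dense open set $U_p\setminus X^G_{\mathrm s}$ (both sides being continuous there), contradicting the unboundedness of $v_G$. Thus $p\in\operatorname{sing\,supp}(v_G)$, which together with the first inclusion proves the claim.

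The one non-elementary ingredient is the sufficiency direction of the power-counting theorem used in the second paragraph --- deducing local integrability from positivity of all scaling degrees --- since this requires controlling the iterated angular integrations, i.e.\ the combinatorics of the whole arrangement rather than of the single flat $A_{g_p}^{\bot}$; that control is precisely what the wonderful model construction provides, so for this essentially preliminary statement one is content to invoke it. Everything else reduces to linear algebra and the definition of $\omega$.
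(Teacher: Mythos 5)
Your argument is correct and rests on the same power-counting identity as the paper's proof --- the scaling degree of $v_G$ transverse to $A_g^{\bot}$ equals $-\omega(g)$, so convergence is governed by $\omega(g)<0$ --- but you organize and complete it differently. The paper works with a compact set meeting a single divergent subspace and performs one radial scaling, $\int dr\, r^{\dim A_g-1+(2-d)|E(g)|}$, which really only makes the \emph{necessity} of $\omega(g)<0$ transparent; you instead localize at a point $p$, reduce to the sub-arrangement $\{A_e^{\bot}\}_{e\in S_p}$, and correctly observe that \emph{sufficiency} requires positive scaling degree along every flat, isolating that as the one non-elementary input (a power-counting theorem you defer to \cite{bbk} or to the resolution built later). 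That is an honest improvement on the paper, which simply asserts convergence at this step; your reduction of ``$\omega(h)<0$ for all nonempty $h$ with $E(h)\subseteq S_p$'' to ``$p\notin X^G_{\m D}$'' (forests give $\omega(h)=-2|E(h)|<0$; cyclic $h$ with $\omega(h)=0$ would put $p$ in $X^G_{\m D}$) is clean. You also prove the reverse inclusion $X^G_{\text{s}}\setminus X^G_{\m D}\subseteq\operatorname{sing\,supp}(v_G)$ via the lower bound $v_G\geq c\,|\omega_{e_0}(\cdot)|^{-(d-2)}$, whereas the paper only establishes $\operatorname{sing\,supp}(v_G)\subseteq X^G_{\text{s}}$ and leaves the equality unargued. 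The only point to watch is that the power-counting theorem you invoke must not itself be derived from this proposition (the paper's later pullback proposition cites the present one for local integrability of the regular factor); since that theorem has independent proofs via resolution of singularities, this is a presentational rather than a logical concern.
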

 
\begin{proof}
Let $V=\{v_0, \ldots, v_n \}$, $V'=V\setminus \{v_0\}$. Wherever defined, we have
 \begin{equation*}
  v_G(x_1, \ldots ,x_n)= I^*\left(\triangle^{\otimes |E|}\right)(x_1,\ldots,x_n) =\prod_{e\in E} \triangle\left(\sum_{i=1}^n(v_i:e)x_i\right).
 \end{equation*} 
 Since $\text{sing supp}(\triangle)=\{0\}$, the singular support of $\triangle^{\otimes E}$ is the set where at least one $x_e \in M$ vanishes. But this is precisely the image of $A_e^{\bot}$ under $I$. Thus, sing supp$(v_G)\subseteq X^G_s$.
 
 For $K\subseteq X^G$ compact and $\chi_K$ the (smooth approximation of the) characteristic function of $K$ we need to show that $\langle v_G | \chi_K \rangle=\int_K dx \ v_G(x) < \infty $ as long as $K$ is disjoint from $X^G_{\mathcal D}$. Assume the contrary, $K \cap X^G_{\mathcal D} \neq \emptyset$; more precisely, $K$ intersects $A^{\bot}_g$ for some $g\in \mathcal D(G)$, but no other divergent loci. Moreover, assume that $g$ is connected --- otherwise $A^{\bot}_g=A^{\bot}_{g_1}\cup A^{\bot}_{g_2}$ and a smaller $K$ will intersect only one of these subspaces. Then $v_G$ splits into two factors 
 \begin{equation*}
 v_G(x)=\prod_{e\in E(g)}\triangle\left(\sum_{i=1}^n(v_i:e)x_i\right)  \prod_{e\in E\setminus E(g)}\triangle\left( \sum_{i=1}^n(v_i:e)x_i \right),
 \end{equation*}
 with the second factor being smooth on $A_g^{\bot} \setminus \bigcup_{g\subsetneq g'}A_{g'}^{\bot}$. 
Now we need some power counting: The integral $\int_K dx \, v_G(x)$ is over a $dn$-dimensional space. Since $A_g$ is the sum over all $A_e$ with $e\in E(g)$, it is already spanned by the edges in a \textit{spanning tree} $t$ of $g$ (a spanning tree is a subgraph without cycles meeting every vertex exactly once --- see Definition \ref{spanning tree}). A spanning tree of a connected graph with $n$ vertices has necessarily $n-1$ edges, therefore $\dim A_g=d(|V(g)|-1)$. Adding an edge to $t$ produces an independent cycle, so that $h_1(g)=|E(g)|-|V(g)|+1$. We conclude that $\dim A_g=d(|E(g)|-h_1(g))$. Each $\triangle(x)$ is of order $\scriptstyle{\mathcal{O}}(x^{2-d})$ as $x \to 0$ and there are $|E(g)|$ products in the first factor expressing $v_G$. Thus, the whole product scales with $(2-d)|E(g)|$ as $x$ approaches $A_g^{\bot}$ in $X^G$,
\begin{equation*}
\int dx \, v_G(x) \propto \int dr \, r^{\dim A_g-1+(2-d)|E(g)|}
\end{equation*}
and the integral converges if and only if 
\begin{align*}
  \dim A_g+(2-d)|E(g)| > 0 & \Longleftrightarrow  d(|E(g)|-h_1(g)) + (2-d)|E(g)|  > 0 \\
  & \Longleftrightarrow  \omega(g)  < 0  \\
  & \Longleftrightarrow g \notin \mathcal D(G).\\[-3.5em]
  \end{align*}
\end{proof}

In Section \ref{comps2} we will employ a more practical point of view. We use coordinates on $X^G$ not given by the vertex set $V'$, but by the edges of an \textit{adapted} spanning spanning tree $t$. Since every spanning tree of $G$ must have $|V|-1$ vertices, reformulating everything in coordinates given by edges of $t$ is just a change of basis for $M^{V'}$. The point here is that although it might seem to be more intuitive and ``positional'' to work with the vertex set of $G$, the formulation with $t$ is more convenient because the combinatorics of renormalization show up in the subgraph structure of $G$ and subgraphs are determined by subsets of $E$, not of $V$.

\section{Compactifications}

To systematically renormalize distributions $v_G$ coming from Feynman diagrams we want to arrange the loci of divergences in a ``nice'' way by resolving their singularities. This means, we are looking for a \textit{compactification} of $X\setminus X_{\m D}$, or, in other words, a \textit{smooth model} for the divergent arrangement in $X$. This section consists of two parts: First we study compactifications from a geometric point of view, then we focus on the underlying combinatorics.

\subsection{Geometry} \label{comps}
The problem of resolving singularities has been a major topic in algebraic geometry since the time of Newton who solved the problem of resolving curves in the complex plane. In its most basic form the problem can be formulated as follows.
\begin{defn}
 Let $X$ be an algebraic variety over a field $k$. Then a non-singular variety $Y$ is a \textit{resolution} for $X$ if there exists a proper and surjective rational map $\beta: Y \longrightarrow X$.
\end{defn}
There are various types of resolutions, depending on additional conditions on $Y$ and $\beta$. Here we demand that $\beta$ is the composition of blow-ups along smooth subvarieties of $X$. This allows for an explicit description of the manifold $Y$. 
Hironaka showed in his celebrated work \cite{hi} that for fields of characteristic zero a resolution always exists; for fields of non-trivial characteristic this is still an open problem. He gave a constructive proof using a sequence of blow-ups. The difficulty lies in the fact that one cannot proceed by just blowing up all singularities in $X$, but must choose a specific order in doing so. For an extensive treatment of this topic, including a comparison of different resolutions, we refer to \cite{jk}. 

\subsubsection{Blow-ups}\label{blow-ups}

What is meant by blowing up a subvariety of a variety $X$? 
First, we define the blow-up of the origin in $X=\mathbb{R}^n$, following \cite{gh}. The idea is to replace the origin by the space of all possible directions entering it, in such a way that all directions are disjoint. To do so set $\mathcal{E}:=\mathbb{P}(X)$ with homogeneous coordinates $[y_1: \cdots :y_n]$ and define $Y \subseteq X \times \mathcal{E}$ by
\begin{equation*}
 Y:=\big\{ \big(x_1,\ldots, x_n, [y_1:\cdots :y_n]\big)  \mid x_iy_j=x_jy_i\ \text{ for all }\ i\neq j  \big\}.
\end{equation*}
The map $\beta: Y\to X$ is then simply the projection onto the first factor.
Since the defining equations are smooth, $Y$ is a smooth submanifold of $X \times \mathcal{E}$. To define an atlas for $Y$ let for $i=1,\ldots,n$ the maps $\rho_i:\mathbb{R}^n \to X \times \mathcal{E}$ be given by
\begin{equation*}
 (x_1,\ldots, x_n) \mapsto \big(y_1, \ldots,  y_n, [y_1 : \cdots :y_n] \big) 
\end{equation*} 
where \begin{equation*} 
y_k = 
\begin{cases}  x_i & \text{ if } k=i, \\
            x_ix_k & \text{ if } k\neq i. 
          \end{cases}
\end{equation*}
Set $U_i=\rho_i(\mathbb{R}^n)$ and $\kappa_i:=\rho_i^{-1}$. Then the collection of charts $(U_i,\kappa_i)_{i\in \{1,\ldots,n\}}$ forms an atlas for $Y$. 
The submanifold $\mathcal{E}$, called the \textit{exceptional divisor}, is locally given by $\{x_i=0\}$ and covered by induced charts $(V_i,\phi_i)_{i\in \{1,\ldots,n\}}$ where $V_i:=\hat \rho_i(\mathbb{R}^{n-1})$ and $\phi_i:=\hat \rho_i^{-1}$ with 
\begin{equation*}
 \hat \rho_i := \rho_i |_{x_i=0}: \mathbb{R}^{n-1} \longrightarrow \{0\}\times \mathcal{E} \subseteq Y.
\end{equation*}
Blowing up a submanifold $S$ of $\mathbb{R}^n$ is done similarly by replacing $S$ by the projectivization of its normal bundle. More precisely, if $S$ is locally given by $\{x_1=\cdots=x_k=0\}$, then one proceeds as above but restricts the defining equation to these coordinates,
\begin{equation*}
 Y:=\big\{ \big(x_1,\ldots, x_n, [y_1:\cdots :y_k]\big)  \mid x_iy_j=x_jy_i\ \text{ for all }\ i\neq j \in \{1,\ldots,k \} \big\}.
\end{equation*}
Note that this construction is independent of the chosen coordinates. It can be generalized to the case where $S$ is a subvariety of a smooth variety $X$: Blow up locally, then globalize by patching together the local blow-ups. 

If $S'\subseteq X$ is another submanifold that is distinct from $S$, then $S'$ is essentially unaffected by the blow-up process. However, if it has nonempty intersection with $S$, then $S'$ has two ``preimages'' in $Y$: The \textit{strict transform} of $S'$ is defined as the closure of $\beta^{-1}(S' \setminus S)$ in $Y$, while the preimage $\beta^{-1}(S')$ is called the \textit{total transform} of $S'$. Loosely speaking, the blow-up makes degenerate intersections transversal and transversal ones disjoint. Therefore, if building a resolution consists of multiple blow-ups, the order of blowing up is important! 

We introduced here the algebro-geometric version of blowing up. There is also a differential-geometric equivalent, where one replaces the locus to be blown up by its normal spherebundle (as used in \cite{as}). Both cases have drawbacks: Using the projective normal bundles leads to $Y$ being non-orientable in general, while the differential-geometric blow-up produces a manifold with boundary.

\subsubsection{Wonderful models} \label{wonderful models}
The general setup is the following: Let $X$ be a finite dimensional smooth variety over a field $k$ of characteristic zero. An \textit{arrangement} $\m A$ in $X$ is a finite family of smooth subvarieties of $X$. Let $M(\m A)$ denote the complement of the arrangement, $M(\m A):=X\setminus \cup_{A\in \m A} A$.

\begin{defn}
A \textit{smooth model} for the arrangement $\mathcal A$ is a pair $(Y_{\mathcal A}, \beta)$ where $Y_{\m A}$ is a smooth variety and $\beta : Y_{\mathcal A} \longrightarrow X$ is a proper surjective map with the following properties:
\begin{enumerate}
 \item  $\beta$ is an isomorphism outside of $\m E:= \beta ^{-1}( X\setminus M(\m A))$.
 \item  $\m E$ is a normal crossing divisor, i.e.\ there exist local coordinates such that it is given by $\m E=\{(x_1,\ldots,x_n) \ | \ x_1 \cdots  x_k = 0\} $.
 \item $\beta$ is a composition of blow-ups along smooth centers.
\end{enumerate}
\end{defn}
Recall that $\beta$ is proper if and only if $\beta^{-1}(K)$ is compact for all compact sets $K\subseteq X$; this is why smooth models are sometimes also called compactifications.
From \cite{hi} we know that such a model always exists, even in much more general situations. In their seminal paper \cite{fm} Fulton and MacPherson constructed a compactification of the configuration space 
\begin{equation*}
F_n(X):= \{ (x_1, \ldots, x_n) \in X^n \ | \ x_i\neq x_j\  \text{ for all }\ i\neq j\}
\end{equation*}
for non-singular varieties $X$. This is just an example of a smooth model for the arrangement given by all diagonals $D_I$ in $X^n$,
\begin{equation*}
 \m A=\big\{ D_I \mid I\subseteq \set{n}\big\}\quad \text{where}\quad D_I=\{x_i=x_j \mid \forall i,j \in I \}.
\end{equation*}
Inspired by the techniques used in \cite{fm}, DeConcini and Procesi developed a systematic way to construct smooth models for general linear arrangements. Since their technique is local, it can be generalized to arrangements in smooth varieties (see \cite{ll}), but we do not need this here and stick to the notation of \cite{dp}.

Let $V$ be a finite dimensional $k$-vector space (here $k=\mb R$) and $\m A$ be a linear arrangement in the dual $V^*$, i.e.\ a finite family $\{A_1, \ldots , A_k \}$ of linear subspaces of $V^*$ (for the DeConcini-Procesi construction it is more convenient to work in the dual). We first give an abstract definition of a smooth model $Y_{\mathcal A}$ for $\m A$, then we construct it explicitly. 

\begin{defn}[Wonderful definition I]\label{wonderful definition 1}
Let $\m A$ be a linear arrangement in $V^*$. For every $A \in \mathcal A $ the projection $ \pi_A : V \longrightarrow V/A^{\bot} \longrightarrow \mathbb{P}(V/A^{\bot})$ is a well-defined map outside of $A^{\bot}$. Doing this for every element in the arrangement we obtain a rational map 
\begin{equation*}
 \pi_{\mathcal{A}} : M(\m A) \longrightarrow \prod_{A \in \mathcal A} \mathbb{P}(V/A^{\bot}).
\end{equation*}
The graph $\Gamma ( \pi_{\mathcal{A}} )$ of this map, a closed subset of $M(\m A) \times \prod_{A \in \mathcal A} \mathbb{P}(V/A^{\bot})$, embeds as locally closed subset into $ V \times \prod_{A \in \mathcal A} \mathbb{P}(V/A^{\bot})$. The \textit{wonderful model} $Y_{\mathcal A}$ is defined as the closure of the image of this embedding.
\end{defn}

The second way of defining $Y_{\m A}$ is to explicitly construct it by a sequence of blow-ups (this sequence is actually completely determined by the combinatorics of the \textit{intersection poset} $P(\mathcal A)$, a point we will use extensively in the following sections). For the wonderful construction we need to introduce some terminology. The first notion is based on the fact that $Y_{\mathcal A}$ is also a wonderful model for arrangements $ \m A' $, as long as $\mathcal A \subseteq  \mathcal A'  $ is a \textit{building set} for $\m A'$. The idea is that an arrangement may carry too much information and in this case one needs only a subfamily $\mathcal B \subseteq \mathcal A$ to encode this information. While the choice of a building set controls the geometry of the wonderful model, more precisely of the exceptional divisor $\m E$, certain subsets of $\m B$, the $\mathcal B$-\textit{nested sets}, and the choice of a \textit{$\m B$-adapted} \textit{basis} of $V$ are the crucial elements in the explicit 
construction of an atlas for $Y_{\mathcal A}$.
We cite the main definitions and results from DeConcini and Procesi; for the proofs we refer the reader to \cite{dp}.

\begin{defn}[Building sets]
Let $\mathcal A$ be an arrangement in $V^*$. A subfamily $\mathcal B \subseteq \mathcal A$ is a \textit{building set} for $\m A$ if 
\begin{enumerate}
 \item Every $A \in \mathcal A$ is the direct sum $A = B_1 \oplus \cdots \oplus B_k $ of the maximal elements of $\mathcal B$ contained in $A$. 
 \item This decomposition property also holds for all $A' \in \m A$ with $A'\subseteq A$, i.e.\ $A'=(B_1\cap A') \oplus \cdots \oplus (B_k\cap A')$.
\end{enumerate}
\end{defn}
There are two important examples, the \textit{maximal building set}, given by all elements of $\m A$, and the \textit{minimal building set} $I(\m A)$. The latter consists of all $A\in \m A$ that do not allow for a non-trivial decomposition. Note that every other building set $\m B$ satisfies $I(\m A)\subseteq \m B \subseteq \m A$. 
In \cite{dp} it is shown that for every building set $\m B\subseteq \m A$ the variety $Y_{\m B}$ as defined above is a smooth model for $\m A$. Moreover, the exceptional divisor $\m E$ is the union of smooth irreducible components $\m E_B$, one for each $B \in \m B$. 

\begin{defn} [Nested sets]
Let $\mathcal B$ be a building set. $\mathcal N \subseteq \mathcal B$ is $\mathcal B$-\textit{nested} if the following holds: For all subsets $\{ A_1, \ldots , A_k \} \subseteq \mathcal N$ of pairwise incomparable elements their direct sum does not belong to $\mathcal B$. 
\end{defn}
Nested sets are one main ingredient in the description of $Y_\m B$, the second one being markings of an adapted basis of $V^*$. While nested sets reflect the combinatorics of the stratification of $\m E$, the markings are related to the dimension of each submanifold in this stratification. Together they describe all components of the exceptional divisor.
\begin{defn}[Adapted bases]\label{adapted basis 1}
A basis $B$ of $V^*$ is $\mathcal N$-\textit{adapted} if for all $A \in \mathcal N$ the set $B\cap A$ generates $A$. A \textit{marking} of an $\mathcal N$-adapted basis is for every $A \in \mathcal N$ the choice of an element $b_A \in B$ with $p(b_A)=A$. Here $p=p_{\mathcal N}$ is the map assigning to $x\in V^* \setminus \{0\}$ the minimal element of $\mathcal N \cup \{V^*\}$ containing $x$ (it exists because $\m N$ is nested).
\end{defn}
The map $p$ and a marking define a partial order on $B$, 
\begin{equation*}
 b \preceq b' \Longleftrightarrow p(b) \subseteq p(b') \quad\text{and}\quad b' \text{ is marked. }
\end{equation*}
This partial order defines a map $\rho=\rho_{\m N,B}: \mb R^B \to V$ as follows: For every $x=\sum_{b\in B}x_b b \in \mb R^B$ the image $\rho(x)$ is an element of $V=\hom(V^*,\mb R)$ given by 
\begin{equation*}
 B \ni b \mapsto \begin{cases} \prod_{p(b)\subseteq A} x_{b_A}  &\text{ if $b$ is marked,} \\
                    x_b \prod_{p(b)\subseteq A} x_{b_A}  &\text{ else}.
      \end{cases}
\end{equation*}
Viewing the elements of $B$ as nonlinear coordinates on $V$ and setting $x_A:=x_{b_A}$ we can write $\rho$ as
\begin{equation*}
\rho(x)_b=\rho(x)(b) = \begin{cases}
                                     \prod_{p(b)\subseteq A} x_{A} \ &\text{ if $b$ is marked,} \\
                                     x_b \prod_{p(b)\subseteq A} x_{A} \ &\text{ else}.
                                    \end{cases}
\end{equation*}
The next proposition shows that $\rho$ has all the properties of a local description of a composition of blow-ups.

\begin{prop}
For every nested set $\m N$ and an adapted and marked basis $B$ the map $\rho=\rho_{\m N,B}$ is a birational morphism with the following properties: 

It maps the subspace defined by $x_A = 0$ onto $A^{\bot}$ and it restricts to an isomorphism 
\begin{align*}
V \setminus \bigcup_{A\in \mathcal N} \{ x_A = 0 \} & \overset{\cong}{\longrightarrow} V \setminus \bigcup_{A\in \mathcal N} A^{\bot}.
\end{align*} 
Furthermore, every $v$ in $V^* \setminus \{0\}$ with $p(v)=A\in \mathcal N$ is mapped by $\rho(x)$ to 
\begin{equation*}                                            
  \rho(x)(v)=P_v(x) \prod_{b_A \preceq b} v_{b} x_{b}  
\end{equation*}
 where $P_v$ is a polynomial, depending only on $\{x_b\}_{b \prec b_A}$ and linear in each variable.
\end{prop}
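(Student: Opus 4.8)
The plan is to separate the monomial formula at the end, which I would establish by a direct computation, from the first two assertions, which I would prove by induction on $|\m N|$, peeling off one blow-up at a time. Since each component $\rho(x)(b)$ is by definition a monomial in the coordinates $x_b$ (multiplied by $x_b$ itself when $b$ is unmarked), $\rho$ is polynomial, hence a morphism; the content of the proposition is that it realizes, in the chart indexed by the pair $(\m N, B)$, the iterated blow-up description of $Y_{\m B}$ from \cite{dp}, and in particular that it is birational.

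For the induction I would fix a \emph{minimal} element $A_0 \in \m N$, so that $A_0^{\bot}$ is maximal among the subspaces $A^{\bot}$, $A \in \m N$, i.e.\ the center of the last blow-up in the sequence defining $Y_{\m B}$. Put $\m N_0 := \m N \setminus \{A_0\}$ and let $B_0$ be $B$ with the marking of $A_0$ removed; one checks quickly that $\m N_0$ is $\m B$-nested, that $B_0$ is $\m N_0$-adapted and marked, and that $p_{\m N_0}$ agrees with $p_{\m N}$ except on vectors $v$ with $p_{\m N}(v) = A_0$, where $p_{\m N_0}(v)$ becomes the successor of $A_0$ in the chain of elements of $\m N$ containing $v$. (Here one uses the standard fact, see \cite{dp,fe}, that the elements of a nested set containing a fixed nonzero subspace form a chain; I would record this as a preliminary lemma.) The crux is then the factorization $\rho_{\m N, B} = \rho_{\m N_0, B_0} \circ \sigma$, where $\sigma \colon \mb R^B \to \mb R^{B_0}$ is the standard affine chart of the blow-up of the coordinate subspace $\{x_b = 0 : b \in B \cap A_0\}$ with distinguished variable $x_{b_{A_0}}$ — that is, $\sigma$ fixes $x_b$ for $b \notin B \cap A_0$ and for $b = b_{A_0}$, and sends $x_b \mapsto x_b\,x_{b_{A_0}}$ for the remaining $b \in B \cap A_0$. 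Granting this, $\sigma$ is an isomorphism off its exceptional hyperplane $\{x_{b_{A_0}} = 0\}$, which it collapses onto the center, while $\rho_{\m N_0, B_0}$ has the desired properties by induction; so the composite is an isomorphism from $\mb R^B \setminus \bigcup_{A \in \m N}\{x_A = 0\}$ onto a dense open subset of $V \setminus \bigcup_{A \in \m N} A^{\bot}$ which carries each $\{x_A = 0\}$ onto a dense open subset of $A^{\bot}$, and in particular $\rho$ is birational. (A single chart need not surject onto all of $V \setminus \bigcup_{A} A^{\bot}$, its image being only an open piece of $Y_{\m B}$, so I read the ``onto'' in the statement as ``onto a dense open subset''; the precise fact that the charts $\rho_{\m N, B}$ cover $Y_{\m B}$, and that $\beta(\m E_A) = A^{\bot}$, is in \cite{dp}.)

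For the monomial formula I would argue directly from the definition. If $p(v) = A \in \m N$ then $v \in A$, so $v = \sum_{b \in B \cap A} v_b\,b$; each such $b$ satisfies $p(b) \subseteq A$, so $\rho(x)(b)$ has $\prod_{A \subseteq A'} x_{A'} = \prod_{b_A \preceq b'} x_{b'}$ as a factor, and dividing it out leaves the squarefree monomial $x_b \prod_{p(b) \subseteq A' \subsetneq A} x_{A'}$ (to be read as $1$ when $b = b_A$), whose variables all satisfy $b' \prec b_A$. Summing these with coefficients $v_b$ gives $\rho(x)(v) = P_v(x) \prod_{b_A \preceq b'} x_{b'}$ with $P_v$ a sum of squarefree monomials in $\{x_{b'}\}_{b' \prec b_A}$, hence linear in each variable; and setting $x_A = 0$ in this identity kills $\rho(x)(v)$ for every $v \in A$, which reproves the inclusion $\rho(\{x_A = 0\}) \subseteq A^{\bot}$.

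The hard part will be the factorization $\rho_{\m N, B} = \rho_{\m N_0, B_0} \circ \sigma$: one has to verify that excising a minimal element together with its marking shifts the exponents of the $x_A$ in $\rho$ by exactly the blow-up substitution, and this is where the chain structure of nested sets is essential — it is what makes ``the smallest element of $\m N_0$ containing $b$'' well-defined and equal to the successor of $A_0$. Everything else is a component-by-component check along the marked/unmarked dichotomy in the definition of $\rho$; if one only wants birationality together with the inclusion $\rho(\{x_A = 0\}) \subseteq A^{\bot}$, the direct computation above already suffices and the induction is needed only for the (suitably qualified) ``onto'' claims.
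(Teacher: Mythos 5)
The paper offers no proof of this proposition: it is one of the results quoted verbatim from DeConcini--Procesi, with the explicit disclaimer a few lines earlier that ``for the proofs we refer the reader to \cite{dp}''. So there is no in-paper argument to compare yours against; what you have written is, in substance, the standard \cite{dp} proof (peel off a minimal element of $\m N$ as an elementary blow-up chart, and read the monomial formula directly off the definition of $\rho$), and it is correct. I verified the factorization $\rho_{\m N,B}=\rho_{\m N_0,B_0}\circ\sigma$ you flag as the crux: minimality of $A_0$ forces $p_{\m N}(b)=A_0$ for every $b\in B\cap A_0$ and $b_A\notin A_0$ for every $A\in\m N_0$, after which the marked/unmarked case check goes through exactly as you describe, using the chain property to identify $p_{\m N_0}(b)$ with the successor of $A_0$. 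Two minor points. First, in your computation of $P_v$ the quotient of $\rho(x)(b)$ by $\prod_{A\subseteq A'}x_{A'}$ equals $x_b\prod_{p(b)\subseteq A'\subsetneq A}x_{A'}$ only for unmarked $b$; for a marked $b=b_{A''}$ with $A''\subsetneq A$ it is $\prod_{A''\subseteq A'\subsetneq A}x_{A'}$, the factor $x_{A''}$ already sitting inside the product, so it must not be counted a second time as a separate ``$x_b$''. This does not affect multilinearity of $P_v$ in the $x_{b'}$ with $b'\prec b_A$. Second, your decision to read ``onto'' as ``onto a dense open subset'' is the right call and not merely a hedge: the restriction of $\rho$ to $\{x_A\neq 0 \ \forall A\in\m N\}$ is an isomorphism onto the open set of $v$ with $v(b_A)\neq 0$ for all $A\in\m N$, which is a proper subset of $V\setminus\bigcup_A A^{\bot}$ whenever some $\dim A>1$, so the proposition as literally worded is an abuse that your proof correctly declines to establish.
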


\begin{defn}[Wonderful definition II] \label{wonderful definition 2}
Let $\mathcal N$ be a $\mathcal B$-nested set for a building set $\mathcal B \subseteq \mathcal A$ and $B$ an adapted, marked basis. Define $Z_A \subseteq \mb R^B$ by $Z_A=\{ P_v=0, v\in A \}$, the vanishing locus of all $P_v$ for $v \in A$. Then for every $A \in \m B$ the composition of $\rho$ with the projection $\pi_A: V \setminus A^{\bot} \to \mb P(V / A^{\bot})$ is well-defined outside of $Z_A$. 
Composing the map $\rho$ with $\Gamma(\pi_\m B): M(\m B) \to V \times \prod_{A \in \mathcal{B} } \mathbb{P}(V / A^{\bot})$ defines an open embedding 
\begin{equation*}
 (\Gamma( \pi_\m B) \circ \rho)_{\mathcal N,B}: \mb R^B \setminus \bigcup_{A \in \mathcal{B} } Z_A \longrightarrow Y_{\mathcal B}. 
\end{equation*}
Set $U_{\mathcal{N},B}:= \text{im}\big((\Gamma( \pi_\m B) \circ \rho)_{\mathcal N,B}\big) $ and $\kappa_{\mathcal N,B}:=(\Gamma( \pi_\m B) \circ \rho)^{-1}_{\mathcal N,B}$. Varying over all $\mathcal B$-nested sets $\m N$ and adapted, marked bases $B$, we obtain an atlas $(U_{\mathcal{N},B},\kappa_{\mathcal{N},B} )$ for the wonderful model $Y_{\mathcal B}$. The map $\beta$ is just the projection onto the first factor, in local coordinates given by $\rho$.
\end{defn}

That this really defines a smooth model for the arrangement $\m A$ follows from
\begin{thm}[Geometry of the wonderful model]\label{geom of wm}
Let $\m B$ be a building set for $\m A$. The wonderful model $Y_{\m B}$ has the following properties:
\begin{enumerate}
 \item The exceptional divisor $\m E$ is normal crossing, i.e. 
\begin{equation*}
 \m E:=\beta^{-1}\left(\bigcup_{A\in \mathcal B} A^{\bot}\right)\overset{loc.}{=}\left\{\prod_{A \in \mathcal N}x_A=0 \right\}.
\end{equation*}
 \item $\m E$ is the union of smooth irreducible components $\m E_A$ where $A \in \mathcal B$ and $\beta(\m E_A)=A^{\bot}$. A family of these components $\m E_{A_1}, \ldots, \m E_{A_k}$ has non-empty intersection if and only if $\{ A_1, \ldots, A_k\}$ is a $\mathcal B$-nested set. In this case the intersection is transversal and irreducible. 
 \item For $A$ minimal in $\m B \setminus I(\m A)$ let $A=A_1\oplus \cdots \oplus A_k$ be its irreducible decomposition. Set $\m B'=\m B \setminus \{A\}$. Then $Y_{\m B}$ is obtained from $Y_{\m B'}$ by blowing up $\m E_A=\m E_{A_1}\cap \cdots \cap \m E_{A_k}$.
 \item For $A$ minimal in $\m B= I(\m A)$ set $\m B'=\m B \setminus \{A\}$. Then $Y_{\m B}$ is obtained from $Y_{\m B'}$ by blowing up the proper transform of $A^{\bot}$. 
 \end{enumerate}
\end{thm}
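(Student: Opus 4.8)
The natural plan is a simultaneous induction on $|\m B|$. Statements (1)--(4) are interdependent --- (2) supplies the smooth irreducible centre blown up in (3), and conversely the global irreducibility in (2) is forced by the blow-up descriptions --- so I would prove them together, the induction step being exactly the passage from $\m B'=\m B\setminus\{A\}$ to $\m B$ recorded in (3) and (4). Every assertion is first checked locally in the charts $(U_{\m N,B},\kappa_{\m N,B})$ of Definition \ref{wonderful definition 2}, where $\beta$ is the explicit map $\rho=\rho_{\m N,B}$, and then patched over the atlas. The base case $\m B=\emptyset$ is trivial, since $Y_\emptyset=V$.

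For (1): in a chart $U_{\m N,B}$ the Proposition preceding Definition \ref{wonderful definition 2} gives $\rho(x)(v)=P_v(x)\prod_{b_A\preceq b}v_b x_b$ whenever $p(v)=A\in\m N$, and from this one reads off that $\beta^{-1}(\bigcup_{A\in\m B}A^\bot)$ meets $U_{\m N,B}$ in $\{\prod_{A\in\m N}x_A=0\}$, a union of coordinate hyperplanes; hence $\m E$ is normal crossing. For (2): define $\m E_A$ as the divisor cut out by $x_A=0$ in every chart with $A\in\m N$, check agreement on chart overlaps, and note that $\beta(\m E_A)=A^\bot$ and that $\m E_A$ is smooth, being locally a coordinate hyperplane. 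A subfamily of $\m B$ is nested exactly when it lies in some maximal nested set --- the nontrivial direction, subsets of nested sets being visibly nested --- so $\m E_{A_1}\cap\cdots\cap\m E_{A_k}\ne\emptyset$ iff the $A_i$ all occur in a common chart, iff $\{A_1,\ldots,A_k\}$ is nested; in such a chart the intersection is the coordinate subspace $\{x_{A_1}=\cdots=x_{A_k}=0\}$, which is transversal and smooth. Global irreducibility of $\m E_A$ and of these intersections I would extract from the inductive step: at each stage the new component is the exceptional divisor over a smooth irreducible centre, hence irreducible, while the earlier components and their intersections are strict transforms of irreducible subvarieties under the blow-up and therefore remain irreducible.

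For the blow-up tower, delete the elements of $\m B$ one at a time, always a minimal element of $\m B\setminus I(\m A)$ while that set is nonempty and otherwise a minimal element of $I(\m A)$; this produces a chain $\m B=\m B_0\supsetneq\m B_1\supsetneq\cdots\supsetneq\emptyset$ in which each step $\m B_{i+1}=\m B_i\setminus\{A\}$ is of the form appearing in (3) or (4). The core of the argument is to compare the atlas of $Y_{\m B_i}$ with that of $Y_{\m B_{i+1}}$ and to recognize the refinement as the one produced by a single blow-up: a chart of $Y_{\m B_i}$ either already occurs among the charts of $Y_{\m B_{i+1}}$, or is obtained from one of them by adjoining a new coordinate $x_A$ subject to exactly the monomial relations a blow-up prescribes. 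If $A$ is irreducible --- necessarily the case once $\m B_i=I(\m A)$ --- the centre of this blow-up is the proper transform of $A^\bot$, which is (4); if $A=A_1\oplus\cdots\oplus A_k$ is its irreducible decomposition, then each $A_i\in\m B_{i+1}$ and (the minimality of $A$ in $\m B_i\setminus I(\m A)$ forces $\{A_1,\ldots,A_k\}$ to be nested, so that) by (2) applied to $\m B_{i+1}$ the centre is the smooth irreducible subvariety $\m E_{A_1}\cap\cdots\cap\m E_{A_k}$, which is (3); in both cases the new exceptional divisor is the component $\m E_A$. The same comparison exhibits $\beta$ as a composition of blow-ups along smooth centres, so that, together with (1)--(2) and the properness and surjectivity that are immediate from the construction, $Y_{\m B}$ is indeed a smooth model for $\m A$.

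The step I expect to be the main obstacle is the bookkeeping underlying that last comparison: one must describe precisely how the $\m B$-nested sets and the adapted, marked bases transform when a single element is removed from the building set, and then match this against the standard chart description of a blow-up along a coordinate subvariety. A characteristic subtlety is that a component $\m E_{A_i}$ must be shown to be invisible in exactly the charts $U_{\m N,B}$ with $A_i\notin\m N$, even though $\beta^{-1}(A_i^\bot)$ can still meet such a chart (for instance when $A_i$ is a direct sum of elements of $\m N$). This is also the point at which the combinatorial characterization of nested sets is indispensable and at which global --- as opposed to chartwise --- irreducibility of the $\m E_A$ is forced, which is precisely why statements (2), (3) and (4) cannot be disentangled.
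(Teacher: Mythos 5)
The paper does not prove this theorem at all: it is quoted verbatim from DeConcini--Procesi, and the text explicitly says ``for the proofs we refer the reader to \cite{dp}.'' So there is no internal proof to compare against; what you have written is, in effect, a reconstruction of the argument of \cite{dp} itself. As a reconstruction it is on target: the simultaneous induction on $|\m B|$, the chartwise verification of (1) via the local form $\rho(x)(v)=P_v(x)\prod_{b_A\preceq b}v_bx_b$, the identification of nonempty intersections of the $\m E_{A_i}$ with $\m B$-nested subfamilies, and the peeling off of a minimal element of $\m B\setminus I(\m A)$ (resp.\ of $I(\m A)$) to realize $\beta$ as a tower of blow-ups are exactly the structure of the original proof. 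You have also correctly located the two places where the real work sits and which your sketch leaves open: (i) the precise matching of the atlas of $Y_{\m B}$ against the atlas of $Y_{\m B'}$ plus one blow-up, which requires tracking how nested sets and markings change when one element is deleted from the building set; and (ii) global (as opposed to chartwise) irreducibility of $\m E_A$ and of the intersections $\m E_{\m N}$ --- note that for the latter, ``intersections of strict transforms stay irreducible'' is not automatic and in \cite{dp} is handled by exhibiting $\m E_{\m N}$ as the closure of an explicitly irreducible open stratum. Since the present paper takes the theorem as a black box, filling in those two steps would mean reproducing a substantial part of \cite{dp}; your outline is consistent with that source but is not yet a complete proof.
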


As stated before, the most famous example of a wonderful model is the Fulton-MacPherson compactification of the configuration space $F_n(X)$ in the case where $X$ is a linear space. It is the minimal wonderful model for the arrangement of all (poly-)diagonals in $X^n$,
\begin{align*}
\m A&=\{D_{\pi} \mid \pi \text{ is a partition of } \set{n} \}, \\
D_{\pi}&=\{x_i=x_j \mid i,j \text{ lie in the same partition block of $\pi$ }\}. 
\end{align*}
Here the minimal building set consists of all simple diagonals in the $n$-fold product of $X$. The wonderful model for the maximal building set was studied by Ulyanov in \cite{ul} and called a \textit{polydiagonal compactification} of configuration space. The main difference, apart from the geometry of the exceptional divisor, is the blowup sequence in the construction. In \cite{ul} the model is obtained by successively blowing up (the strict transforms of) all elements of the building set by increasing dimension, but in the minimal case one has to proceed with care; some strict transforms of diagonals to be blown up in the next step might still have nonempty intersection and in this case the result depends on the order of blowups. To separate them before proceeding requires additional blow-ups, exactly those given by the additional elements in the maximal building set. These are the polydiagonals, obtained by intersecting simple diagonals.
The interested reader is encouraged to study the example $F_n(X)$ for $X=\mb R$ and $n>3$ (for smaller $n$ minimal and maximal models coincide). It is a well studied object, the \textit{real rank $n-1$ braid arrangement}, see for example \cite{fe}.

The next step is to adapt this construction to the case of the divergent arrangement associated to a Feynman graph $G$. In \cite{bbk} this is done by examining the special structure of the elements of the arrangement $\m A_{\m D(G)}=\{ A_g \mid g \subseteq G \text{ divergent} \}$. These properties can be directly formulated in graph theoretical terms. Here we will focus even more on this combinatorial flavour and express everything with the help of the poset of divergent subgraphs of $G$.
\subsection{Combinatorics}\label{comps2}

We reformulate the central objects of the last section in terms of a poset associated to $\m A$. We focus on arrangements coming from graphs via Feynman rules, but note that from every given arrangement we can form the \textit{intersection poset} to study its combinatorics. 

\begin{defn}
 A \textit{poset} $(\m P,\leq )$ is a finite set $\m P$ (we consider here only finite graphs and posets) endowed with a partial order $\leq$. 
 
 We say that $p$ \textit{covers} $q$ if $p>q$ and there is no $r\in \m P$ with $p>r>q$. The \textit{closed interval} $[p,q]=\m P_{[p,q]}$ is defined as the set of elements $r \in \m P$ satisfying $p\leq r\leq q$ . The \textit{open interval} $(p,q)=\m P_{(p,q)}$ and the subsets $\m P_{< p}$, $\m P_{\leq p}$, $\m P_{> p}$, $\m P_{\geq p}$ are defined similarly. We denote by $\hat 0$ and $\hat 1$ the unique minimal and maximal elements of $\m P$ if they exist. 
\end{defn}
A poset is best visualized by drawing its \textit{Hasse diagram}, a directed graph with its vertices given by the elements of $\m P$ and edges between every pair of elements $p,q \in \m P$ such that $p$ covers $q$. Another way to encode the data of $\m P$ is the \textit{order complex} $\Delta (\m P)$. It is the abstract simplicial complex defined by its $k$-faces being the linearly ordered $k+1$-element subsets of $\m P$. The order complex stores all the combinatorial information of $\m P$ as is demonstrated by Theorem \ref{gm} taken from \cite{gm}.

\begin{defn}[Intersection lattice]
  Let $V$ be an $n$-dimensional real vector space and let $\mathcal A:=\{ A_1, \ldots, A_m \}$ be an arrangement in $V$. Every arrangement gives rise to a poset (actually a \textit{lattice}, defined below) with its underlying set consisting of all possible intersections of elements in $\mathcal A$,
\begin{equation*}
\m P=\m P(\mathcal A)=\left\{\bigcap_{i\in I} A_i \,\bigg|\,  I \subseteq \{ 1, \ldots, m \} \right\},
\end{equation*}
partially ordered by reverse inclusion. It is called the \textit{intersection poset/lattice} of $\m A$.
In addition, $\m P(\m A)$ is equipped with a ranking, i.e.\ a map $r:\m P(\m A) \to \mb N$ mapping each element of $\m P(\m A)$ to the codimension of the corresponding intersection in $V$.
\end{defn}

\begin{thm}(Goresky, MacPherson) \label{gm}
Let $H$ denote the (singular) hom\-ology functor. Let $\mathcal A$ be an arrangement in $V$ and let $M(\m A)$ denote the complement. Then
 \begin{equation*}
  H_k(M(\m A),\mathbb{Z}) \cong \bigoplus_{A \in \m P(\m A)} G_k(A)
 \end{equation*}
 where
 \begin{equation}\label{cohom}
  G_k(A):= \begin{cases}
           H^{-k}(\text{point},\mb Z) & \text{if} \ A=\hat 0, \\
           H^{r(A)-k-1}(\text{point},\mb Z) & \text{if $A$ covers $\hat 0$,} \\
           \tilde H^{r(A)-k-2} \big(\Delta(\m P_{(\hat 0,A)}),\mb Z\big) & \text{otherwise}. 
          \end{cases}         
 \end{equation}
 \end{thm}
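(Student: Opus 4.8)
The plan is to deduce the formula from Alexander duality together with a combinatorial computation of the cohomology of the compactified union $\bigl(\bigcup_{A\in\m A}A\bigr)^{+}$; this is essentially the route of \cite{gm} (which organizes it through stratified Morse theory) and of the later homotopy-theoretic treatment by Ziegler and \v{Z}ivaljevi\'c, and I would follow the latter in spirit.

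First I would one-point compactify. Write $V=\mb R^{n}$, set $S^{n}:=V^{+}$, and let $\hat Z\subseteq S^{n}$ be the one-point compactification of $Z:=\bigcup_{A\in\m A}A$. Since $Z$ is a finite union of linear subspaces, $\hat Z$ is a compact triangulable subset of $S^{n}$ and $M(\m A)=S^{n}\setminus\hat Z$, so Alexander duality gives $\tilde H_{k}(M(\m A),\mb Z)\cong\tilde H^{\,n-k-1}(\hat Z,\mb Z)$. As $H_{0}(M(\m A))=\tilde H_{0}(M(\m A))\oplus\mb Z$, the extra summand $\mb Z$ is precisely the term $G_{k}(\hat 0)=H^{-k}(\mathrm{pt})$ of \eqref{cohom}, and writing $r(x)=\operatorname{codim}x=n-\dim x$ the two remaining cases of \eqref{cohom} amount, for $j=n-k-1$, to the single assertion
\[
\tilde H^{\,j}(\hat Z,\mb Z)\;\cong\;\bigoplus_{\substack{x\in\m P(\m A)\\ x\neq\hat 0}}\tilde H^{\,j-\dim x-1}\!\left(\Delta\!\left(\m P(\m A)_{(\hat 0,x)}\right),\mb Z\right),
\]
where for the atoms of $\m P(\m A)$ one uses $\Delta(\emptyset)=\emptyset$ and $\tilde H^{-1}(\emptyset)=\mb Z$.

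To prove this I would compute $\tilde H^{\,*}(\hat Z)$ from the closed cover of $\hat Z$ by the subcomplexes $\hat A$, $A\in\m A$ (after replacing $\m A$ by its maximal elements, which changes neither $Z$ nor $\m P(\m A)$): its nonempty intersections $\bigcap_{A\in T}\hat A$ are exactly the compactified flats $\hat x=\bigl(\bigcap_{A\in T}A\bigr)^{+}\cong S^{\dim x}$, $x\in\m P(\m A)$. In the Mayer--Vietoris spectral sequence converging to $\tilde H^{\,*}(\hat Z)$ the $E_{1}$-page is built from the groups $\tilde H^{\,*}(\hat x)$ --- each a single $\mb Z$ in degree $\dim x$ --- with $d_{1}$ the \v{C}ech differential. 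Sorting the index sets $T$ by the flat $x=\bigcap_{A\in T}A$, and observing that within each column $d_{1}$ respects this splitting (adjoining a generator only shrinks a flat, and a proper shrink moves to another column), the $d_{1}$-complex attached to a fixed $x$ is the relative cochain complex of the pair $\bigl(\text{full simplex on }\{A\in\m A:A\supseteq x\},\ \{T:\bigcap_{A\in T}A\supsetneq x\}\bigr)$. The subcomplex here is the crosscut complex of the finite lattice $\m P(\m A)_{[\hat 0,x]}$, hence --- by the crosscut theorem --- homotopy equivalent to $\Delta\bigl(\m P(\m A)_{(\hat 0,x)}\bigr)$; since the ambient simplex is contractible, the long exact sequence of the pair identifies the $E_{2}$-contribution of $x$ with $\tilde H^{\,\bullet-1}\bigl(\Delta(\m P(\m A)_{(\hat 0,x)})\bigr)$, in the column $\dim x$. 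Summing over $x$ and matching degrees reproduces the displayed isomorphism --- \emph{provided} the spectral sequence degenerates at $E_{2}$.

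That proviso is where I expect the real obstacle to lie: the higher differentials $d_{2},d_{3},\dots$ are not forced to vanish by a naive grading argument, and one cannot escape the issue by passing to field coefficients, since the order complexes $\Delta(\m P(\m A)_{(\hat 0,x)})$ can carry torsion (for a hyperplane arrangement these complexes are wedges of spheres and the formula simplifies accordingly, but for general subspace arrangements that fails). The clean way around it, and what I would ultimately invoke, is the Ziegler--\v{Z}ivaljevi\'c description of $\hat Z$ --- obtained via diagrams of spaces and the projection lemma --- as homotopy equivalent to the wedge $\bigvee_{x\neq\hat 0}\bigl(S^{\dim x}*\Delta(\m P(\m A)_{(\hat 0,x)})\bigr)$; since $\tilde H^{\,j}(S^{a}*K)\cong\tilde H^{\,j-a-1}(K)$, the cohomology of such a wedge splits at once into the required direct sum, with no degeneration argument needed. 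The only genuinely combinatorial ingredients are then the crosscut theorem --- for which one uses that $\m P(\m A)$, being an intersection poset, has all the relevant intervals lattices --- and this homotopy-colimit simplification; the rest is bookkeeping with Alexander duality and degree shifts.
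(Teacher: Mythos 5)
The paper does not prove this theorem at all: it is quoted verbatim from Goresky--MacPherson \cite{gm}, whose own argument runs through stratified Morse theory, so there is no internal proof to compare yours against. Your route --- Alexander duality in $S^{n}$ followed by the Ziegler--\v{Z}ivaljevi\'c wedge decomposition of the compactified union and the join/suspension isomorphism $\tilde H^{j}(S^{a}*K)\cong\tilde H^{j-a-1}(K)$ --- is the other standard proof in the literature, and your degree bookkeeping is right: with $j=n-k-1$ and $r(x)=n-\dim x$ one gets exactly $\tilde H^{r(x)-k-2}(\Delta(\m P_{(\hat 0,x)}))$, the $\hat 0$-term accounts for passing from reduced to unreduced homology, and the atom case is the convention $\tilde H^{-1}(\emptyset)=\mb Z$. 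The Mayer--Vietoris spectral sequence passage is an honest but ultimately discarded detour; since you resolve the degeneration issue by invoking the wedge decomposition outright, the final argument does not depend on it. Two small cautions: replacing $\m A$ by its maximal elements can in fact change $\m P(\m A)$ (a generator properly contained in another need not be an intersection of the maximal ones), though such elements contribute trivially to the formula and the point only enters your abandoned detour; and the proof as written outsources the essential content to the Ziegler--\v{Z}ivaljevi\'c theorem, which is legitimate as a citation but means the argument is a reduction rather than a self-contained proof --- entirely appropriate here, given that the paper itself only cites the result.
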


Recall from Section \ref{distributions} the definition of the singular and divergent arrangements of a graph $G$. They give rise to corresponding intersection posets, but we can also define them directly in terms of $G$. 

\begin{defn}\label{graph poset}
 To a graph $G$ we associate the \textit{(saturated) graph poset} $(\m G(G),\subseteq)$ consisting of the set of all saturated subgraphs of $G$, partially ordered by inclusion. A connected subgraph $g\subseteq G$ is \textit{saturated} if the following holds:
\begin{equation*}
 \forall t \text{ span. tree of }g : \forall e\in E(G\setminus g) : t\text{ is not a spanning for }g\cup e.
\end{equation*}
If $g$ has more than one connected components, it is saturated if every component is.
\end{defn}
In terms of the singular arrangement a saturated subgraph $g$ is the maximal subgraph of $G$ defining $A_g \in \m A_{\m G(G)}$. This means, that adding an edge to a saturated graph necessarily enlarges the space $A_g$, while removing an edge might still define the same subspace of $(X^G)^*$.

\begin{example}
Let $K_3$ be the complete graph on 3 vertices. The saturated subgraphs are the three single-edged subgraphs and $K_3$ itself. 
\end{example}

\begin{defn}
The \textit{divergent graph poset} $\m D(G)$ is given by the subset of $\m G(G)$ formed by all divergent subgraphs, partially ordered by inclusion.
\end{defn}

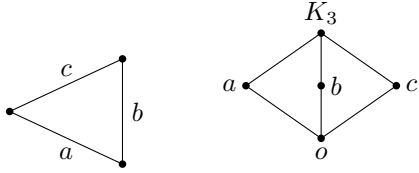
\begin{figure}[ht]
\centering 
\begin{tikzpicture}[node distance=0.7cm and 1.5cm]
\coordinate (v1);
\coordinate[above right=of v1] (v3);
\coordinate[below right=of v1] (v2);
\draw (v1) -- node[label=below:$a$] {}  (v2);
\draw (v1) -- node[label=above:$c$] {} (v3);
\draw (v2) -- node[label=right:$b$] {} (v3);
\fill[black] (v1) circle (.05cm);
\fill[black] (v2) circle (.05cm);
\fill[black] (v3) circle (.05cm);
\end{tikzpicture}
\quad \quad
 \begin{tikzpicture}[node distance=0.7cm and 1cm]
\coordinate[label=left:$a$] (v1);
\coordinate[below right=of v1,label=below:$o$] (v3);
\coordinate[right=of v1,label=right:$b$] (v2);
\coordinate[right=of v2,label=right:$c$] (v4);
\coordinate[above=of v2,label=above:$K_3$] (v5);
\draw (v1) -- (v5);
\draw (v3) -- (v1);
\draw (v3) -- (v2);
\draw (v3) -- (v4);
\draw (v2) -- (v5);
\draw (v4) -- (v5);
\fill[black] (v1) circle (.05cm);
\fill[black] (v2) circle (.05cm);
\fill[black] (v3) circle (.05cm);
\fill[black] (v4) circle (.05cm);
\fill[black] (v5) circle (.05cm);
\end{tikzpicture}
\caption{$K_3$ and the Hasse diagram of $\m G(K_3)$}\label{kthree}
\end{figure}

As already seen in Proposition \ref{sing supp}, $\m G$ and $\m D$ (from now on we drop the index $G$) carry all the information necessary for renormalization. Note that both posets have an unique minimal element, the empty graph, which we denote by $o$. In our convention $o$ is defined by $E(o)=\emptyset$. 

For the divergent arrangement of a connected and at most logarithmic graph $G$, Theorem \ref{gm} allows us to compute the homology of $M(X^G_{\m D})$, the complement of the divergent loci in $X^G$. It is determined by the set of \textit{atoms} of $\m D$, the minimal elements in $\m D_{>o}$. These elements are precisely the primitive subgraphs of $G$.
\begin{prop}
 Let $G$ be connected and at most logarithmic. Define $n_i$ to be the number of atoms $g\in \m D$ with $r(g)=\dim A_g=di$ (i.e.\ the primitive subgraphs on $i+1$ vertices). Let $\alpha \in \mb N^l$ be a multi-index with $\alpha_i \geq \alpha_j$ for $1\leq i < j \leq l$ and $|\!|\alpha |\!|_1=l$. The homology of $M( X^G_{\m D} )$ is then given by
 \begin{equation} \label{hom}
  H_k(M(X^G_{\m D}),\mathbb{Z}) \cong 
  \begin{cases}
           \mathbb{Z} & \text{ if }  k=0 , \\
           \mathbb{Z}^{n_i} & \text{ if }  k=d i -1, \\
           \mathbb{Z}^{\binom{n_i}{2}} & \text{ if }  k=2 d i -2, \\
           \mathbb{Z}^{n_{i_1} n_{i_2} } & \text{ if }  k= d (i_1 + i_2 ) - 2, \\
           \cdots &   \\
           \mb Z^{ \prod_{ j = 1}^l \binom{ n_{i_j} }{\alpha_j}} & \text{ if } k=d \sum_{j=1}^l \alpha_j i_j    - l, \\
           \cdots & 
          \end{cases}         
 \end{equation}

\end{prop}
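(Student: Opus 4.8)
The plan is to compute this homology with the Goresky--MacPherson formula (Theorem~\ref{gm}), after a preliminary simplification of the arrangement. First I would observe that the divergent locus is already cut out by the atom subspaces alone: every nonempty divergent subgraph $g$ has $h_1(g)>0$ (otherwise $\omega(g)=-2|E(g)|\ge 0$ forces $E(g)=\emptyset$) and hence contains a primitive divergent subgraph $p$ --- take any minimal divergent subgraph of $g$ with positive first Betti number; minimality together with ``at most logarithmic'' make it primitive and connected --- so that $A_g^\bot\subseteq A_p^\bot$. Therefore $X^G_{\m D}=\bigcup_{p}A_p^\bot$ with $p$ ranging over the \emph{atoms} of $\m D$, i.e.\ the primitive divergent subgraphs of $G$, and it suffices to treat the arrangement $\m A^{\mathrm{at}}:=\{A_p^\bot\mid p\text{ an atom}\}$, whose complement still equals $M(X^G_{\m D})$. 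The main structural claim is then that the intersection poset $\m P(\m A^{\mathrm{at}})$ is the Boolean lattice on the set of atoms: $S\mapsto\bigcap_{p\in S}A_p^\bot=A_{\bigcup_{p\in S}p}^\bot$ is an isomorphism of ranked posets from the power set of the atom set onto $\m P(\m A^{\mathrm{at}})$, with $\bigcap_{p\in S}A_p^\bot$ of codimension $\sum_{p\in S}\dim A_p$ in $X^G$. Granting this, Theorem~\ref{gm} gives the result at once: $\hat 0=\emptyset$ contributes $\mb Z$ in degree $0$; a single atom $p$, which by (the proof of) Proposition~\ref{sing supp} has $\dim A_p=d(|V(p)|-1)$, i.e.\ $\dim A_p=di$ when $p$ is a primitive subgraph on $i+1$ vertices, contributes $\mb Z$ in degree $\dim A_p-1=di-1$; and for $|S|=l\ge 2$ the open interval below $\bigcap_{p\in S}A_p^\bot$ is the proper part of the Boolean lattice $2^S$, whose order complex is the barycentric subdivision of the boundary of the $(l-1)$-simplex and hence an $(l-2)$-sphere, so it contributes $\mb Z$ in degree $\bigl(\sum_{p\in S}\dim A_p\bigr)-l$. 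Grouping the atoms by size --- $n_i$ of them on $i+1$ vertices --- a set $S$ containing $\alpha_i$ atoms of size $i+1$ lies in degree $k=d\sum_i\alpha_i i-\|\alpha\|_1$, and there are $\prod_i\binom{n_i}{\alpha_i}$ such $S$; assembling these is exactly formula~(\ref{hom}).

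The Boolean structure rests on a few facts about atoms, all issuing from the supermodularity of the superficial degree of divergence: writing $h_1(g)=|E(g)|-\operatorname{rk}(E(g))$ with $\operatorname{rk}$ the graphic matroid rank, submodularity of matroid rank gives $\omega(g\cup h)+\omega(g\cap h)\ge\omega(g)+\omega(h)$, and more generally $h_1$ of a union of pairwise edge-disjoint subgraphs is $\ge$ the sum of the first Betti numbers. Combined with ``$G$ at most logarithmic'' (so $\omega\le 0$ everywhere, and $\omega(g)=0$ precisely for divergent $g$) and with $d>2$ (needed for the Lemma that divergent subgraphs are saturated, and tacitly throughout), this yields: (i) distinct atoms $p,q$ are edge-disjoint --- else $p\cap q$ would be, by supermodularity, a proper nonempty divergent subgraph of the primitive graph $p$; (ii) for any family of distinct atoms $p_1,\dots,p_l$ one has $h_1(\bigcup_j p_j)=\sum_j h_1(p_j)$, so the cycle spaces add directly, $A_{p_1}+\dots+A_{p_l}=\bigoplus_j A_{p_j}$, and $\bigcup_j p_j$ is again divergent, hence saturated by that Lemma. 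From (ii) the map $S\mapsto\bigcap_{p\in S}A_p^\bot$ has the stated codimension and is a morphism of posets; its injectivity and the reverse implication $\bigcap_{p\in S}A_p^\bot\subseteq\bigcap_{p\in S'}A_p^\bot\Rightarrow S\supseteq S'$ both reduce, via saturation (so that $A_{p_0}\subseteq A_{\bigcup_{q\in S'}q}$ is equivalent to $E(p_0)\subseteq E(\bigcup_{q\in S'}q)$), to the single statement that the atoms contained in a union of atoms $\bigcup_{q\in S'}q$ are exactly the elements of $S'$.

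That last statement --- no ``spurious'' atom appears inside a union of atoms --- is the step I expect to be the crux. Suppose $p$ is an atom with $p\subseteq\bigcup_{q\in S'}q=:h$; by edge-disjointness $E(p)=\bigsqcup_{q\in S'}(E(p)\cap E(q))$, and for each $q$ with $E(p)\cap E(q)\ne\emptyset$ the subgraph $p\cap q$ is a proper nonempty subgraph of the primitive graph $q$, hence non-divergent with $\omega(p\cap q)<0$. If at least two such $q$ occurred, the identity $\omega(p)=\sum_q\omega(p\cap q)+d\cdot\bigl(h_1(p)-\sum_q h_1(p\cap q)\bigr)$ together with $\omega(p)=0$ would force the excess term to be positive, so $p$ would contain a simple cycle $C$ lying in no single $p\cap q$. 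But $C\subseteq h$, and by (ii) the cycle space of $h$ is the direct sum of those of the $q\in S'$; intersecting $C$ with a fixed $q$, the edge set $E(C)\cap E(q)$ is simultaneously a subgraph of the cycle $C$ (hence a disjoint union of arcs, unless it is all of $C$) and an even subgraph of $q$ (a disjoint union of cycles) --- so $E(C)\cap E(q)$ is empty unless $C\subseteq q$. Since $C$ is nonempty, $C\subseteq q$ for some $q\in S'$, whence $C\subseteq p\cap q$, contradicting the choice of $C$. Therefore exactly one $q\in S'$ meets $p$, so $E(p)\subseteq E(q)$ and $p=q$ by primitivity, i.e.\ $p\in S'$. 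The remaining ingredients --- that the proper part of a Boolean lattice has the homotopy type of a sphere, and repackaging the count into the multi-index shape of~(\ref{hom}) --- are standard.
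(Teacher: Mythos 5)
Your proof is correct and follows essentially the same route as the paper: reduce to the arrangement of atom subspaces, show the relevant intersection poset is Boolean with additive ranks, and apply Goresky--MacPherson together with the fact that the order complex of the proper part of a Boolean lattice is a sphere. You are considerably more careful than the paper, which merely asserts the Boolean structure of the intervals $(o,\gamma)$; note, though, that your ``crux'' step is redundant once (i) is established, since an atom $p\subseteq\bigcup_{q\in S'}q$ with $p\notin S'$ would be edge-disjoint from every $q\in S'$ and hence have $E(p)=\emptyset$.
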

\begin{proof}
The atoms of $\m D$ determine the topology of the complement because the corresponding subspaces $A_g^{\bot}$ contain all other divergent subspaces. 
Ab\-breviate $H_k(M(X^G_{\m D}),\mathbb{Z})$ by $H_k$. Using Theorem \ref{gm} we have $H_0=\mb Z$. Moreover, there is a generator in $H_k$ with $k=r(g)-1=di-1$ for every atom $g$ such that $r(g)=di$.  
 For an element $\gamma$ that is given by the union of atoms we have to use the third row in Equation \eqref{cohom}: If $\gamma$ is the union of two atoms $g$ and $h$, the subcomplex $\Delta(\m D_{(o,\gamma)})$ consists of $2$ disconnected points (representing the two atoms). Therefore we have $\binom{n_i}{2}$ generators in dimension $k=2di-2$ if $r(g)=r(h)=di$ and $n_{i_1} n_{i_2}$ generators in dimension $k=d(i_1+i_2)-2$ if $r(g)=di_1$ and $r(h)=di_2$. 
 If $\gamma$ is the union of $l > 2$ atoms, the interval $ (o,\gamma)$ consists of these atoms and all unions thereof. It is the face poset of the standard $(l-1)$-simplex $\bigtriangleup^{l-1}$ with interior removed. Thus, $\Delta(\m D_{(o,\gamma)})=\Delta( \m F (\partial  \bigtriangleup^{l-1} ) ) \cong \partial \bigtriangleup^{l-1}  \cong S^{l-2}$. Since $\ti H^k(S^{l-2})$ equals $\mb Z $ if $k=l-2$ and is trivial else, we conclude that there are generators in $H_k$ coming from such elements $\gamma$ if $k=r(\gamma)-l$. Let $\alpha \in \mb N^l$ with $\alpha_i \geq \alpha_j$ for $1\leq i < j \leq l$ and $|\!|\alpha |\!|_1=l$. If $r(\gamma)=d \sum_{j=1}^l \alpha_j i_j$ then such $\gamma$ can be formed out of $l$ atoms in $\prod_{ j = 1}^l \binom{ n_{i_j} }{\alpha_j}$ possible ways and \eqref{hom} follows.
\end{proof}
 
\subsubsection{The divergent graph lattice}

We continue studying the graph poset $\m G$ in more detail. As it turns out it has extra structure, it is a \textit{lattice}. 

\begin{defn}[Lattices]
 Let $(\m P,\leq)$ be a poset and $p,q \in \m P$. A least upper bound or \textit{join} of $p$ and $q$ is an upper bound $r$ for both elements such that every other upper bound $s$ satisfies $r\leq s$. If the join of $p$ and $q$ exists, it is unique and denoted by $p\vee q$.
 Dually one defines a greatest lower bound or \textit{meet} of two elements $p$ and $q$ in $\m P$, denoted by $p\wedge q$.
 
 $\m P$ is called a \textit{join-semilattice} (\textit{meet-semilattice}) if for all $p,q \in \m P$ the join $p\vee q$ (the meet $p\wedge q$) exists. $\m P$ is called a \textit{lattice} if it is both a join- and a meet-semilattice.
\end{defn} 
For any arrangement $\m A$ the intersection poset $\m P(\m A)$ is a lattice: If one orders the elements of $\m P(\m A)$ by reverse inclusion, the join operation is just given by set theoretic intersection. The statement then follows from the fact that every finite join-semilattice with $\hat 0$ (represented by the empty intersection, the ambient space $V$) is a lattice (Proposition 3.3.1 in \cite{st}). 
Regarding the definition of the partial order by inclusion or reverse inclusion there are different conventions used in the literature. Both have their advantages and can be converted into the other since the dual of any lattice, i.e.\ the lattice with reversed order, is a lattice as well. We use reverse inclusion because it matches the convention in \cite{dp} using arrangements in the dual and it fits with the natural partial order on subgraphs. 

Since $\m G$ is the intersection poset of the (dual) singular arrangement $\m A_{\m G}$ in $X^G$, it is a lattice. Clearly, if $\m P \subseteq \m G$ is closed under union and intersection, it is the intersection lattice of some corresponding arrangement. This is the case for the set of divergent subgraphs:

\begin{prop}\label{latt}
 Let $G$ be at most logarithmic. Then $(\m D,\subseteq)$ is a lattice.
\end{prop}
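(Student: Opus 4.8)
The plan is to realize $\m D$ as a sub-join-semilattice of $\m G$ that contains the least element, and then invoke Proposition~3.3.1 of \cite{st}: a finite join-semilattice possessing a minimum $\hat 0$ is automatically a lattice. Since $\omega(o)=0$, the empty graph $o$ lies in $\m D$ and is its unique minimal element, so the whole statement reduces to showing that $\m D$ is closed under the join of $\m G$; concretely, that for $g,h\in\m D$ the subgraph $g\cup h$ is again divergent and saturated, and that it is the least upper bound of $g$ and $h$ in $\m D$.

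The only step with genuine content is that $g\cup h$ is divergent whenever $g$ and $h$ are, which I would obtain from submodularity of the superficial degree of divergence. First, $|E(g\cup h)|+|E(g\cap h)|=|E(g)|+|E(h)|$, because the edge set of $g\cup h$ (resp.\ $g\cap h$) is the union (resp.\ intersection) of $E(g)$ and $E(h)$. Second, $h_1$ is supermodular: viewing the cycle space $H_1(g')\subseteq\mb R^{E(G)}$ of a subgraph $g'$ as the space of $1$-cycles supported on $E(g')$, one has $H_1(g)\cap H_1(h)=H_1(g\cap h)$ and $H_1(g)+H_1(h)\subseteq H_1(g\cup h)$, whence
\begin{equation*}
 h_1(g\cup h)\ \ge\ \dim\big(H_1(g)+H_1(h)\big)\ =\ h_1(g)+h_1(h)-h_1(g\cap h).
\end{equation*}
Multiplying by $d$ and subtracting twice the edge identity gives $\omega(g\cup h)\ge\omega(g)+\omega(h)-\omega(g\cap h)$. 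Now $\omega(g),\omega(h)\ge 0$ since $g,h\in\m D$, and $\omega(g\cap h)\le 0$ since $G$ is at most logarithmic; hence $\omega(g\cup h)\ge 0$, i.e.\ $g\cup h$ is divergent, and by the Lemma above it is then also saturated, so $g\cup h\in\m D$.

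Finally, $g\cup h$ is the join of $g,h$ in $\m D$: any $k\in\m D$ with $g\subseteq k$ and $h\subseteq k$ satisfies $E(g\cup h)=E(g)\cup E(h)\subseteq E(k)$, hence $g\cup h\subseteq k$. Thus $\m D$ is a finite join-semilattice with $\hat 0=o$, and Proposition~3.3.1 of \cite{st} applies; its induced meet is $g\wedge_{\m D}h=\bigcup\{k\in\m D\mid k\subseteq g\cap h\}$, the maximal divergent subgraph contained in $g\cap h$, which in general is properly smaller than $g\cap h$. I expect the identity $H_1(g)\cap H_1(h)=H_1(g\cap h)$ for cycle spaces to be the only point needing any care; everything else is bookkeeping with edge counts and the definition of "saturated". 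If one prefers to dispense with the Lemma (hence with the hypothesis $d>2$) when checking that $g\cup h$ is saturated, one may instead observe that replacing $g\cup h$ by its saturation $k$ only adjoins edges without altering $A_{g\cup h}$, so that $\dim A_k=\dim A_{g\cup h}$ forces $|E(k)|-h_1(k)=|E(g\cup h)|-h_1(g\cup h)$ and therefore $\omega(k)=\omega(g\cup h)+(d-2)\bigl(|E(k)|-|E(g\cup h)|\bigr)\ge\omega(g\cup h)\ge 0$.
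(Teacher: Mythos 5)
Your proof of the proposition itself is correct, and it takes a somewhat different route from the paper's. The paper verifies directly that $\m D$ is closed under \emph{both} union and intersection: writing $k=|E(g\cap h)|$, $l=h_1(g\cap h)$ and $m$ for the number of new independent cycles created by the union, it computes $\omega(g\cup h)=d(m-l)+2k$ and squeezes this between the at-most-logarithmic bound and $0$ to conclude $m=0$ and $\omega(g\cap h)=0$. You instead establish only closure under union --- via the supermodularity of $h_1$ coming from $H_1(g)\cap H_1(h)=H_1(g\cap h)$, which is a cleaner linear-algebra packaging of the same edge/cycle count (the paper's $m$ is exactly the defect in your inequality) --- and then invoke the finite-join-semilattice-with-$\hat 0$ criterion. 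That is a valid and arguably more economical proof of the lattice property, and your explicit attention to saturation (including the variant that avoids the $d>2$ hypothesis of the Lemma) is more careful than the paper, which relies on that Lemma tacitly.

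However, your closing remark that the induced meet ``in general is properly smaller than $g\cap h$'' is false. Your own inequality, combined with the hypotheses, pins everything down: from $\omega(g\cup h)\ge\omega(g)+\omega(h)-\omega(g\cap h)$ with $\omega(g)=\omega(h)=0$ (divergence plus at most logarithmic), $\omega(g\cup h)\le 0$ and $\omega(g\cap h)\le 0$, one gets $\omega(g\cup h)=\omega(g\cap h)=0$. Hence $g\cap h$ is itself divergent (and saturated), and the meet in $\m D$ is literally the edge intersection --- this is exactly the extra content of the paper's squeeze argument. The stronger statement is not idle: the subsequent proofs that $\m D$ is graded and distributive manipulate $g\wedge h=g\cap h$ as an identity of edge sets. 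So your argument suffices for the proposition as stated, but the ``properly smaller'' claim should be deleted and replaced by the observation that the meet is realized by $\cap$.
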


\begin{proof}
 For $g,h\subseteq G$ divergent subgraphs we define the join and meet operations in $\m D$ by
 \begin{align*}
  g\vee h := g\cup h, \\
  g\wedge h := g \cap h.
 \end{align*}
 Suppose $g$ and $h$ have $k$ shared edges and abbreviate $h_1(g \cap h)$ by $l$. Let $m$ be the number of ``new cycles'' created by uniting $g$ and $h$. In formulae
 \begin{align*}
  E(g \cup h)&= E(g) + E(h) - k \\
  h_1(g \cup h)&= h_1(g) + h_1(h) + m - l. 
  \end{align*}
  From this we conclude that the superficial degree of divergence of $g\cup h$ is given by
  \begin{align*}
  \omega(g \cup h)&= d(h_1(g) + h_1(h) + m - l)-2(E(g) + E(h) - k) \\
  &=d(m - l) + 2k \overset{!}{\leq} 0.
 \end{align*}
 Split $k=k_l + k_0$ into edges in the generators of $H_1(g\cap h)$ and those that are not. Then $dl\leq 2k_l$ and
  \begin{equation*}
   0 \geq \omega(g\cup h)\geq dm + 2k_0 \geq 0,
  \end{equation*}
  thus $m= k_0 = 0$ and 
  \begin{equation*} 
  0 = \omega(g \cup h)=dl - 2k_l = \omega(g \cap h).
  \end{equation*}
  Therefore $g \cup h $ and $g \cap h $ are both divergent subgraphs of $G$. Clearly, they are the minimal (maximal) elements of $\m D$ bounding $g$ and $h$ from above (below).    
\end{proof}
With the methods used in the above proof we are able to show another property of $\m G$ and $\m D$. They are \textit{graded} lattices.

\begin{defn}
 A poset $(\m P, \leq )$ is \textit{graded} if it is equipped with a map $\tau: \m P \to \mathbb{N}$ that has the following two properties: $\tau$ is order preserving with respect to the natural order on $\mathbb{N}$ and if there are $p,q \in \m P$ with $p$ covering $q$, then $\tau(p)=\tau(q)+1$.
\end{defn}

\begin{prop}
 For any connected graph $G$ the graph lattice $\m G$ is graded.
\end{prop}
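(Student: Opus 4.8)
The plan is to exhibit an explicit rank function on $\m G$ and verify the two grading axioms. The natural candidate is $\tau(g) := r(A_g) = \dim A_g$, the rank already attached to elements of the intersection lattice in the Definition of the intersection lattice. Recall that in the proof of Proposition~\ref{sing supp} we computed $\dim A_g = d\big(|V(g)|-c(g)\big)$ for a subgraph $g$ with $c(g)$ connected components (each component on $n$ vertices contributing a spanning tree with $n-1$ edges, hence $d(n-1)$ to the dimension); equivalently $\dim A_g = d\big(|E(g)| - h_1(g)\big)$. So I would set $\tau(g) = d\big(|V(g)| - c(g)\big)$ and work with this combinatorial description throughout.

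First I would check order-preservation: if $g \subseteq h$ in $\m G$, then the spanning forest of $g$ extends to a spanning forest of $h$, so $|V(g)| - c(g) \leq |V(h)| - c(h)$ (the number of edges in a spanning forest can only grow when we pass to a larger subgraph), giving $\tau(g) \leq \tau(h)$. Next, the covering axiom: suppose $h$ covers $g$ in $\m G$. Since $\m G$ is a lattice ordered by inclusion, and both are saturated, I want to show $\tau(h) = \tau(g) + d$. Here the key observation is that $A_g \subsetneq A_h$ strictly (because $g$ is saturated and $g \subsetneq h$, adding edges genuinely enlarges the subspace), so $\tau(h) \geq \tau(g) + d$ since both dimensions are multiples of $d$. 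For the reverse inequality I would argue that if $\tau(h) \geq \tau(g) + 2d$ one can interpolate: pick an edge $e \in E(h)\setminus E(g)$ whose addition increases $\dim A$, let $g' $ be the saturation of $g \cup e$; then $g \subseteq g' \subseteq h$ with $\tau(g) < \tau(g') \leq \tau(h)$, and if the saturation of $g'$ already equals $h$ one repeats with a different edge, contradicting that $h$ covers $g$ unless $\tau(h) = \tau(g)+d$. This uses that saturation only adds edges without changing $A_g$, so it does not change $\tau$, and that every strict inclusion of saturated subgraphs strictly increases $\tau$ by a positive multiple of $d$.

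The main obstacle I anticipate is making the covering step fully rigorous, i.e.\ showing that a cover in the saturated lattice $\m G$ corresponds to exactly one ``jump'' of size $d$ in $\tau$, rather than possibly a jump of size $2d$ realized by a single edge whose addition merges two components \emph{and} closes a cycle simultaneously --- one must check such a move cannot be a cover because it factors through an intermediate saturated subgraph. Concretely, if adding an edge $e$ to $g$ both connects two components and (because $g$ was not the full induced subgraph on those vertices) there is a ``parallel route,'' one needs the saturated subgraphs to sit densely enough between $g$ and $h$. I would handle this by the interpolation argument above: choose edges of $h$ one at a time, saturating after each step, and observe that $\tau$ increases by $0$ or $d$ at each stage (never more, since a single edge raises $|E| - h_1$ by at most $1$ when it connects distinct components and by $0$ otherwise, after accounting for saturation which raises $|E|$ and $h_1$ in lockstep and leaves $|V|-c$ fixed). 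The first time $\tau$ jumps we reach a saturated subgraph strictly between $g$ and $h$ unless that subgraph is already $h$, forcing $\tau(h) = \tau(g) + d$.

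Finally I would remark that the same argument, restricted to divergent subgraphs and using Proposition~\ref{latt} together with the fact (from the proof of Proposition~\ref{latt}) that unions and intersections of divergent subgraphs stay divergent with controlled Betti numbers, shows that $\m D$ inherits the grading from $\m G$; in fact on $\m D$ one can equivalently use $\tau$ or, up to the additive/multiplicative normalization, the superficial degree of divergence is constant (zero) on $\m D$ for at most logarithmic $G$, so it is genuinely $\dim A_g$ that grades the poset. I would state the $\m D$ case as a corollary immediately after.
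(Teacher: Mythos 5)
Your proof is correct and follows essentially the same route as the paper, which takes $\tau(g)=d^{-1}\dim A_g=|V(g)|-c_g$ and asserts that order-preservation and the covering axiom are clear from saturatedness; you simply fill in the covering step explicitly (and your interpolation worry resolves itself even more easily than you suggest, since $A_{g\cup e}=A_g+A_e$ with $\dim A_e=d$ shows a single edge can never raise $\dim A$ by more than $d$, while adding any edge of $h\setminus g$ to the saturated $g$ raises it by exactly $d$, so the saturation of $g\cup e$ is a saturated subgraph strictly between $g$ and $h$ unless it equals $h$). The only nit is normalization: the paper's definition of graded demands $\tau(p)=\tau(q)+1$ on covers, so you should use $d^{-1}\dim A_g$ rather than $\dim A_g$.
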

\begin{proof}
 The map $\tau$ sends every saturated subgraph $g \subseteq G$ to $d^{-1}  \dim A_{g}=V(g)-c_g$ ($c_g$ denoting the number of connected components of $g$, cf.\ the proof of Proposition \ref{sing supp}). Clearly, $\tau$ is order preserving and $\tau(p)=\tau(q)+1$ holds for $p$ covering $q$ because of the saturated condition.
\end{proof}

\begin{prop}\label{gradlatt}
 Let $G$ be at most logarithmic. Then $\m D$ is a graded lattice.
\end{prop}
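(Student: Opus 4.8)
The plan is to combine the two results just established, Proposition \ref{latt} (that $\m D$ is a lattice when $G$ is at most logarithmic) and the preceding proposition (that $\m G$ is graded), and then check that the grading of $\m G$ restricts to a grading of the sublattice $\m D$. Concretely, I would take the same map $\tau(g) = d^{-1}\dim A_g = |V(g)| - c_g$ used for $\m G$, and argue that it witnesses $\m D$ being graded. Since $\m D \subseteq \m G$ as posets (the order is inclusion in both), $\tau$ is automatically order preserving on $\m D$. The only thing requiring an argument is the covering condition: if $h \subsetneq g$ is a covering relation \emph{in $\m D$}, then $\tau(g) = \tau(h) + 1$.

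The subtlety, and what I expect to be the main obstacle, is exactly this: a covering relation in $\m D$ need not be a covering relation in $\m G$. That is, there could in principle be saturated subgraphs $h \subsetneq g$ with $g$ covering $h$ in $\m D$, but with non-divergent saturated subgraphs strictly between them in $\m G$, so that $\tau(g) - \tau(h) \geq 2$. So I need to rule this out. The idea is to use the power-counting/degree-of-divergence bookkeeping from the proof of Proposition \ref{latt}: given divergent $h \subsetneq g$, I want to produce a divergent subgraph strictly between them whenever $\tau(g) > \tau(h) + 1$. One natural approach: pick an edge $e \in E(g) \setminus E(h)$; if $\tau(h \cup e) = \tau(h) + 1$ (i.e. adding $e$ raises $\dim A$ by $d$), then since $G$ is at most logarithmic $\omega(h\cup e) \le 0$, and since $h$ is divergent ($\omega(h) = 0$, as the proof of Proposition \ref{latt} shows all divergent subgraphs of an at-most-logarithmic graph satisfy $\omega = 0$) a quick computation along the lines of $\omega(h \cup e) = \omega(h) + (\text{change})$ should force $h \cup e$ to again have $\omega = 0$, hence be divergent, and one replaces $h$ by (the saturation of) $h \cup e$ and iterates. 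If instead adding $e$ does not raise $\dim A_h$, then $e$ is already ``absorbed'' and one shows $A_{h\cup e} = A_h$, so $h \cup e$ has the same saturation; climbing toward $g$ one edge at a time, one always either stays at the same lattice point or moves up by exactly one rank while staying divergent, until one reaches $g$. This produces a saturated chain from $h$ to $g$ inside $\m D$ all of whose steps have $\tau$-increment $1$, which is what grading demands.

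So the write-up would go: (1) recall $\tau(g) = |V(g)| - c_g = d^{-1}\dim A_g$ and note it is order preserving on $\m D \subseteq \m G$; (2) recall from the proof of Proposition \ref{latt} that every $g \in \m D$ satisfies $\omega(g) = 0$ (the at-most-logarithmic hypothesis plus $\omega(g) \ge 0$, combined with the inequality derived there, pins it down — or more directly: $\omega(g) \le 0$ since $g \subseteq G$ and $\omega(g) \ge 0$ since $g$ is divergent); (3) show that for divergent $h \subsetneq g$ there is a divergent $h'$ with $h \subsetneq h' \subseteq g$ and $\tau(h') = \tau(h) + 1$, by the edge-adjunction argument above controlling $\omega$; (4) conclude that every covering relation in $\m D$ has $\tau$-increment exactly $1$, so $\tau$ grades $\m D$. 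The main work is step (3); steps (1), (2), (4) are immediate given the machinery already in the paper.

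\begin{proof}
By Proposition \ref{latt}, $\m D$ is a lattice. As in the proof that $\m G$ is graded, define $\tau(g):= d^{-1}\dim A_g = |V(g)| - c_g$, where $c_g$ is the number of connected components of $g$. Since the partial order on $\m D$ is inclusion and $A_{h}\subseteq A_{g}$ whenever $h\subseteq g$, the map $\tau$ is order preserving on $\m D$.

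It remains to check the covering condition. First note that every $g \in \m D$ satisfies $\omega(g)=0$: indeed $\omega(g)\ge 0$ because $g$ is divergent, and $\omega(g)\le 0$ because $G$ is at most logarithmic. Now let $h,g\in \m D$ with $h$ covered by $g$ in $\m D$, so $h\subsetneq g$; we must show $\tau(g)=\tau(h)+1$. Pick any $e\in E(g)\setminus E(h)$ and set $h_e:=h\cup e$. There are two cases. If $A_{h_e}=A_h$, then $\tau(h_e)=\tau(h)$, and replacing $h$ by $h_e$ (note $h_e\subseteq g$ since $e\in E(g)$) we may continue adjoining edges of $E(g)\setminus E(h)$ without changing $\tau$; as the edge set is finite and $A_g=\sum_{e'\in E(g)}A_{e'}$, after finitely many such steps we reach a subgraph $h'$ with $h\subseteq h'\subseteq g$ and $A_{h'}$ strictly larger than $A_h$ unless already $A_h = A_g$. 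In the remaining case $A_{h_e}\supsetneq A_h$, i.e.\ adjoining $e$ raises the dimension; since each generator $A_{e'}$ contributes dimension $0$ or $d$ to a sum and $\dim A_{h_e}=d(|E(h_e)|-h_1(h_e))$ while $\dim A_h=d(|E(h)|-h_1(h))$, we get $\tau(h_e)=\tau(h)+1$, which forces $h_1(h_e)=h_1(h)$ (the edge $e$ joins distinct components of $h$). Then
\begin{equation*}
\omega(h_e)=d\,h_1(h_e)-2|E(h_e)|=d\,h_1(h)-2(|E(h)|+1)=\omega(h)-2=-2<0,
\end{equation*}
contradicting nothing yet, but now consider the saturation $\bar h$ of $h_e$ in $G$: one has $A_{\bar h}=A_{h_e}$, hence $\tau(\bar h)=\tau(h)+1$, and $\bar h\subseteq g$ since $h_e\subseteq g$ and $g$ is saturated. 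Moreover $\omega(\bar h)\le 0$ as $\bar h\subseteq G$, and since $h\subseteq \bar h\subseteq g$ with $\omega(h)=\omega(g)=0$, the degree-of-divergence bookkeeping of Proposition \ref{latt} (applied to the pair $h\subseteq \bar h$, using that all quantities interpolate between the divergent endpoints) yields $\omega(\bar h)=0$, so $\bar h\in \m D$. As $\tau(\bar h)=\tau(h)+1>\tau(h)$ we have $h\subsetneq \bar h$, and as $h$ is covered by $g$ in $\m D$ we must have $\bar h=g$. Hence $\tau(g)=\tau(h)+1$, and $\m D$ is a graded lattice.
\end{proof}
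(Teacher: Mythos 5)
Your overall strategy --- exhibit $\tau(g)=d^{-1}\dim A_g$ as an explicit rank function and check that it increments by exactly $1$ along covering relations of $\m D$ --- is different from the paper's, and it has a genuine gap at its central step. The gap is the claim that the saturation $\bar h$ of $h\cup e$ is divergent. You justify this by asserting that $\omega$ ``interpolates between the divergent endpoints'' $h\subseteq\bar h\subseteq g$, but no such interpolation principle holds, and the bookkeeping in Proposition \ref{latt} (which concerns $\omega(g\cup h)$ and $\omega(g\cap h)$ for two divergent graphs) does not supply one. Concretely, in the dunce's cap ($d=4$) the subgraph $\{e_1,e_3,e_4\}$ sits strictly between the divergent fish $\{e_3,e_4\}$ and the divergent full graph, yet has $\omega=-2$. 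What your argument actually needs is a count: writing $m:=|E(\bar h)|-|E(h)|$ and using $\dim A_{\bar h}=\dim A_h+d$ together with $\omega(h)=0$, one finds $\omega(\bar h)=(d-2)m-d$, so $\bar h$ is divergent if and only if $m=d/(d-2)$ (i.e.\ $m=2$ for $d=4$). You never show that the saturation picks up exactly this many extra edges --- only $m\le d/(d-2)$ follows from $G$ being at most logarithmic --- and ruling out $m=1$ is essentially equivalent to the covering statement you are trying to prove, so as written the step is unsupported. (A secondary issue: for $d\notin\{3,4\}$ the identity $|E(g)|-|E(h)|=dr/(d-2)$ for nested divergent graphs forces the increment $r$ of your $\tau$ to exceed $1$, e.g.\ $r$ must be even for $d=6$, so $d^{-1}\dim A_g$ is not even the right normalization of the rank function in general.)

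For contrast, the paper avoids all of this by invoking the order-theoretic criterion of Proposition 3.3.2 in \cite{st}: a finite lattice is graded (with a submodular rank) provided that whenever $g$ and $h$ both cover $g\wedge h$, the join $g\vee h$ covers both. This condition is verified purely from the closure of $\m D$ under $\cup$ and $\cap$ (Proposition \ref{latt}): a divergent $\gamma$ strictly between $g$ and $g\vee h$ would give the divergent graph $\gamma\cap h$ strictly between $g\wedge h$ and $h$. No explicit rank function and no edge counting is needed. If you want to keep your direct approach, the missing ingredient is precisely a proof that $m=d/(d-2)$; otherwise I would recommend switching to the lattice-theoretic criterion.
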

To prove this we use Proposition 3.3.2 from \cite{st}.

\begin{prop}
 Let $\m L$ be a finite lattice. The following two conditions are equivalent:
 \begin{itemize}
 \item[1.] $\m L$ is graded and the map $\tau$ satisfies $\tau(x) + \tau(y) \geq \tau(x \wedge y) + \tau (x \vee y)$ for all $x,y \in \m L$.
 \item[2.] If $x$ and $y$ both cover $x \wedge y$, then $x \vee y$ covers both $x$ and $y$.
 \end{itemize}
\end{prop}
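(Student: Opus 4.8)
The plan is to prove the equivalence in both directions, dispatching the forward implication (that condition 1 implies condition 2) first, since it is a short rank computation. Assume $\m L$ is graded with rank function $\tau$ satisfying the submodular inequality, and suppose $x$ and $y$ both cover $x\wedge y$. Then $x\ne y$, hence $x,y<x\vee y$, and $\tau(x)=\tau(y)=\tau(x\wedge y)+1$. Gradedness gives $\tau(x\vee y)\ge\tau(x\wedge y)+2$, while the submodular inequality gives $\tau(x\vee y)\le\tau(x)+\tau(y)-\tau(x\wedge y)=\tau(x\wedge y)+2$, so $\tau(x\vee y)=\tau(x)+1=\tau(y)+1$; and in a graded lattice an element lying above $x$ whose rank exceeds $\tau(x)$ by exactly one must cover $x$. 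Hence $x\vee y$ covers $x$, and by symmetry $y$.

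The substance is the converse, that condition 2 implies condition 1, which I would split into two parts. \emph{Part A: condition 2 forces gradedness.} The plan is to prove the Jordan--Dedekind chain condition --- for every pair $a\le b$ in $\m L$, all unrefinable chains from $a$ to $b$ have a common length $\ell(a,b)$ --- whereupon $\tau(x):=\ell(\hat 0,x)$ is a rank function witnessing that $\m L$ is graded. I would prove the chain condition by induction on the length $\ell$ of a longest unrefinable chain in the interval $[a,b]$, the cases $\ell\le 1$ being trivial. For $\ell\ge 2$, compare an arbitrary unrefinable chain beginning $a\lessdot z_1$ with a longest one beginning $a\lessdot w_1$: if $z_1=w_1$ one recurses in $[z_1,b]$; if $z_1\ne w_1$ then $z_1\wedge w_1=a$, so condition 2 makes $u:=z_1\vee w_1$ cover both $z_1$ and $w_1$, and comparing longest-chain lengths in the intervals $[w_1,b]$, $[u,b]$ and $[z_1,b]$ --- each of which has a strictly shorter longest chain, so that the inductive hypothesis applies --- pins down $\ell(z_1,b)=\ell-1$ and closes the induction.

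\emph{Part B: condition 2 then forces the submodular inequality.} With $\m L$ now graded, I would prove $\tau(x)+\tau(y)\ge\tau(x\wedge y)+\tau(x\vee y)$ by induction on $k:=\tau(y)-\tau(x\wedge y)$. For $k=0$, i.e.\ $y\le x$, it is an equality. For $k=1$, i.e.\ $y$ covers $x\wedge y$, I would slide along a maximal chain $x\wedge y=x_0\lessdot x_1\lessdot\cdots\lessdot x_m=x$ and set $y_i:=x_i\vee y$; at each step either $y_{i+1}=y_i$, or $x_{i+1}\wedge y_i=x_i$ and the running bound forces $y_i$ to cover $x_i$, so that condition 2 applies and gives $\tau(y_{i+1})=\tau(x_{i+1})+1$; in either case $\tau(y_{i+1})\le\tau(x_{i+1})+1$, and at $i=m$ this reads $\tau(x\vee y)\le\tau(x)+1$, which is the $k=1$ inequality. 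For $k\ge 2$, I would choose $y'$ covering $x\wedge y$ with $y'\le y$, note that $x\wedge y'=x\wedge y$, apply the $k=1$ case to the pair $(x,y')$ to get $\tau(x\vee y')\le\tau(x)+1$, and then apply the inductive hypothesis to the pair $(x\vee y',\,y)$ --- whose parameter $\tau(y)-\tau\big((x\vee y')\wedge y\big)$ is at most $k-1$ because $(x\vee y')\wedge y\ge y'$ --- combining the two bounds with $(x\vee y')\vee y=x\vee y$ to finish.

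The step I expect to be the main obstacle is Part A: gradedness has to be extracted from a purely order-theoretic hypothesis with no rank function yet available, so the whole argument must run on lengths of unrefinable chains, and condition 2 does double duty --- once to force the chain condition and then again, in Part B, to force submodularity. Once a rank function is in hand everything is routine bookkeeping; the delicate point is arranging the induction in Part A so that condition 2 can be applied to the correct diamond $z_1,w_1,z_1\wedge w_1,z_1\vee w_1$ at each stage.
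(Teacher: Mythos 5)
The paper does not prove this proposition at all: it is cited as Proposition 3.3.2 from Stanley's \emph{Enumerative Combinatorics} and used as a black box in the proof of Proposition \ref{gradlatt}. So there is no in-paper proof to compare against --- you have supplied one where the paper only has a reference.

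Your argument is correct. The forward direction is the standard one-line rank computation: gradedness gives $\tau(x\vee y)\geq\tau(x\wedge y)+2$, submodularity caps it at the same value, and equal rank one above $x$ (resp.\ $y$) in a graded lattice means covering. For the converse, your Part A establishes the Jordan--Dedekind chain condition by induction on the length of a longest unrefinable chain in $[a,b]$, using condition 2 on the diamond $\{a, z_1, w_1, z_1\vee w_1\}$ to bridge between the intervals $[z_1,b]$, $[w_1,b]$, and $[u,b]$; the key observations (that these intervals all have strictly shorter longest chains so the induction applies, and that $u=z_1\vee w_1\leq b$) are in place. Part B then runs the chain-sliding induction along a maximal chain from $x\wedge y$ to $x$, and the step where the running bound $\tau(y_i)\leq\tau(x_i)+1$ together with $y_i>x_i$ (which holds since $y\not\leq x_i$ for $i<m$ when $k\geq1$) forces $y_i\gtrdot x_i$, letting condition 2 fire, is the right mechanism; the reduction of $k\geq2$ to $k=1$ via $y'\lessdot x\wedge y$ with $y'\leq y$ is also sound since $x\wedge y'=x\wedge y$ and $(x\vee y')\vee y=x\vee y$. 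This is essentially the classical argument that appears in Stanley and in standard lattice theory texts; the salient difficulty you flag --- having to argue purely with chain lengths in Part A before any rank function exists --- is indeed where the content lives.
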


\begin{proof}[Proof of Proposition \ref{gradlatt}]
  We argue by contradiction: Let $g,h \subseteq G$ be divergent and suppose there is a $\gamma \in \m D$ with $g < \gamma < g \vee h$, i.e.\ $g \vee h$ does not cover both $g$ and $h$. First, note that $\gamma \cap h \neq \emptyset $ because otherwise $\gamma$ would not be a subgraph of $g\vee h$. From Proposition \ref{latt} we know that $\gamma \cap h$ is divergent. But then $g \wedge h \ < \  \gamma \wedge h \ < \ h$, which means $h$ is not covering $g\wedge h $.
\end{proof}
We will not use this here but for the sake of completeness we mention one additional property of $\m D$ which actually implies gradedness and modularity ($\m L$ is \textit{modular} if one of the properties in the previous proposition holds for both $\m L$ and its dual). From a combinatorial viewpoint \textit{distributive} lattices are important because this extra structure allows one to prove many powerful theorems, for example Birkhoff's famous Representation Theorem~\cite{bh}.
\begin{prop}
 Let $G$ be at most logarithmic. Then $\m D$ is a \textit{distributive} lattice:
 \begin{align*}
  f \vee (g \wedge h) = & (f \vee g) \wedge ( f \vee h) , \\
  f \wedge (g \vee h) = & (f \wedge g) \vee (f \wedge h),
 \end{align*}
 for all $f,g,h$ in $\m D$.
\end{prop}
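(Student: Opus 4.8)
The plan is to leverage the description of meet and join established in Proposition \ref{latt}: for divergent subgraphs $g,h \subseteq G$ we have $g \vee h = g \cup h$ and $g \wedge h = g \cap h$, where these are the subgraphs of $G$ determined by the union and intersection of edge sets. Since these operations on $\m D$ are literally the set-theoretic operations on the edge sets $E(g) \subseteq E(G)$, distributivity of $\m D$ should reduce to the (obvious) distributivity of the Boolean lattice of subsets of $E(G)$. More precisely, I would argue: given $f,g,h \in \m D$, the subgraph $f \vee (g \wedge h)$ is determined by the edge set $E(f) \cup (E(g) \cap E(h))$, which equals $(E(f) \cup E(g)) \cap (E(f) \cup E(h))$ by distributivity of set union over intersection; the latter is precisely the edge set of $(f\vee g)\wedge (f\vee h)$. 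The dual identity follows the same way from distributivity of intersection over union. So the whole statement is essentially immediate \emph{once one knows the join and meet in $\m D$ coincide with the edge-set operations.}

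The one subtlety — and I expect this to be the only real content — is that the formulas $g \vee h = g \cup h$ and $g \wedge h = g \cap h$ were proved in Proposition \ref{latt} only for pairs of divergent subgraphs, and one must make sure that all the subgraphs appearing on both sides of the distributive identities are themselves divergent, so that the join/meet symbols in $\m D$ genuinely mean $\cup$ and $\cap$. Here Proposition \ref{latt} does the work: it shows that $\m D$ is closed under $\cup$ and $\cap$ (the proof shows $\omega(g\cup h) \le 0$ and $\omega(g\cap h) = 0$ force divergence), so $g \wedge h$, $f \vee (g\wedge h)$, $f \vee g$, $f \vee h$, and $(f\vee g)\wedge(f\vee h)$ are all in $\m D$, and each is computed by the corresponding edge-set operation. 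Once this bookkeeping is in place, the two identities are just
\begin{align*}
  E(f) \cup \big(E(g) \cap E(h)\big) &= \big(E(f)\cup E(g)\big) \cap \big(E(f)\cup E(h)\big), \\
  E(f) \cap \big(E(g) \cup E(h)\big) &= \big(E(f)\cap E(g)\big) \cup \big(E(f)\cap E(h)\big),
\end{align*}
which hold in the power set of $E(G)$.

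In short: the hard part is not hard — the only thing to verify carefully is that every compound expression stays inside $\m D$ so that the lattice operations reduce to edge-set operations, and this is guaranteed by Proposition \ref{latt}. I would write the proof as a single short paragraph invoking Proposition \ref{latt} to identify $\vee,\wedge$ with $\cup,\cap$ on edge sets, and then citing distributivity of the Boolean lattice $2^{E(G)}$. No power-counting or graph-theoretic estimate beyond what is already in Proposition \ref{latt} should be needed.
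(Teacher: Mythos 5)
Your proposal is correct and follows essentially the same route as the paper: the paper's proof likewise identifies $\vee,\wedge$ with $\cup,\cap$ on edge sets via Proposition \ref{latt} and then reduces both identities to distributivity of set union and intersection in the power set of $E(G)$. Your extra remark that one must check closure of $\m D$ under these operations (so that the lattice symbols really mean edge-set operations throughout) is a point the paper uses implicitly, so your write-up is, if anything, slightly more careful.
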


\begin{proof}
 Since one of the properties implies the other, we will only proof the first one. Moreover, the proof works exactly the same in the second case. Let $f,g,h \subseteq G$ be divergent. Compare the edge set of the graphs on the left and the right:
 \begin{align*}
  E\big( f \vee (g \wedge h) \big) = & E\big( f \cup (g \cap h) \big) = E ( f ) \cup E (g \cap h ) \\
  = & \big( E( f )\cup E(g) \big) \cap \big( E (f) \cup E( h ) \big) \\
  = & E (f \cup g) \cap E ( f \cup h) = E \big( (f \cup g)  \cap ( f \cup h) \big)  \\
  = & E \big( (f \vee g) \wedge ( f \vee h) \big).
 \end{align*}
\end{proof}

\subsubsection{Wonderful models revisited}\label{wm revis}

We reformulate wonderful models in terms of the graph lattice $\m G$. This is based on \cite{fe} where a combinatorial version of the wonderful model construction is developed for any (finite) lattice $\m L$.
In general we can associate to every arrangement the corresponding intersection lattice defined in the previous section. It is the combinatorics of this lattice that reflect the topological properties of the wonderful models as seen for example in Theorem \ref{gm}. Another example is the following theorem by Feichtner that relates combinatorial and geometric wonderful models via a \textit{combinatorial blow-up} (Definition 3.5 and Theorem 3.6 in \cite{fe}).

\begin{thm}
 Let $\m L$ be an intersection lattice, $\m B$ a combinatorial building set in $\m L$, and $B_1,\ldots,B_t$ a linear order on $\m B$ that is non-increasing with respect to the partial order on $\m L$. Then consecutive combinatorial blowups in $B_1,\ldots,B_t$ result in the face poset of the nested set complex $\Delta_{\m N}(\m L,\m B)$,
 \begin{equation*}
 \text{Bl}_{G_t}(\cdots(\text{Bl}_{G_2}(\text{Bl}_{G_1})))=\m F\left( \Delta_{\m N}(\m L,\m B) \right).
 \end{equation*}
\end{thm}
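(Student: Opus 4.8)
The plan is to reduce this to Feichtner's original theorem rather than reproving it from scratch, since the statement as quoted is essentially Theorem~3.6 of \cite{fe} transported into the language and notation set up in this section. First I would recall the relevant definitions from \cite{fe}: a combinatorial building set $\m B\subseteq\m L$ in a finite lattice, the combinatorial blowup $\mathrm{Bl}_{B}(\m L)$ of a lattice along an element $B$ (which produces a new poset whose elements are those $X\in\m L$ with $X\not\geq B$, together with new elements $\langle B,Y\rangle$ for $Y$ not comparable to $B$ but with $B\vee Y$ existing), and the nested set complex $\Delta_{\m N}(\m L,\m B)$ whose faces are the $\m B$-nested subsets. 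With these in place, the content of the theorem is purely order-theoretic and does not depend on $\m L$ arising from an arrangement or a graph.

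The key steps, in order, would be: (i) observe that a combinatorial blowup of a finite lattice along a building-set element is again (the face poset of) a meet-semilattice with $\hat 0$, hence a lattice, and that the image of the building set $\m B\setminus\{B\}$ under this operation is again a combinatorial building set for the new lattice --- this is the bookkeeping that lets the induction run. (ii) Proceed by induction on $t=|\m B|$. For $t$ minimal ($\m B=\{\hat1\}$ or the base case handled in \cite{fe}) the two sides agree trivially. (iii) For the inductive step, peel off $B_1$, the first (hence $\leq$-maximal among those not yet blown up, by the non-increasing order hypothesis) element: blowing up along a $\leq$-maximal element of the building set is the "safe" case where no later element has already been absorbed, so $\mathrm{Bl}_{B_1}(\m L)$ together with the pushed-forward order on $\m B\setminus\{B_1\}$ satisfies the inductive hypothesis, giving $\mathrm{Bl}_{G_t}(\cdots(\mathrm{Bl}_{G_2}(\mathrm{Bl}_{G_1})))=\m F(\Delta_{\m N}(\mathrm{Bl}_{B_1}(\m L),\m B\setminus\{B_1\}))$. (iv) Finally match $\Delta_{\m N}(\mathrm{Bl}_{B_1}(\m L),\m B\setminus\{B_1\})$ with $\Delta_{\m N}(\m L,\m B)$: a nested set containing $B_1$ corresponds to a nested set in the blowup containing the new top-type vertex, one not containing $B_1$ corresponds to a nested set avoiding it, and the incompatibility conditions translate correctly because $B_1$ being $\leq$-maximal means nestedness with $B_1$ is governed exactly by comparability.

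The main obstacle I expect is step (iv), the identification of nested set complexes before and after a single combinatorial blowup, i.e.\ showing that combinatorial blowup of the lattice induces precisely a stellar subdivision (the "combinatorial blowup" in the sense of Feichtner--Kozlov) of the nested set complex at the face corresponding to $B_1$. This is the technical heart of \cite{fe} and requires a careful case analysis of which subsets of the modified building set are nested, keeping track of the new elements $\langle B_1,Y\rangle$; the non-increasing order hypothesis on $B_1,\ldots,B_t$ is exactly what is needed to keep this local and avoid interactions with blowups still to come. Since all of this is carried out in detail in \cite{fe}, in this paper I would present the argument as a translation: restate Feichtner's result, note that $\m G(G)$ (and $\m D(G)$, being a lattice by Proposition~\ref{latt}) is a finite lattice to which it applies verbatim, and refer the reader to \cite{fe} for the combinatorial verification, remarking only on how the graph-theoretic building sets from Section~\ref{graph poset} furnish admissible orders $B_1,\ldots,B_t$.
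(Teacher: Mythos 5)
Your proposal matches what the paper actually does: this theorem is quoted verbatim from \cite{fe} (Definition 3.5 and Theorem 3.6 there) and the paper offers no independent proof, so deferring to Feichtner's induction on the building set, peeled off in non-increasing order, is exactly the intended reading. Your sketch of that induction is a fair summary of the argument in \cite{fe}, so nothing further is needed.
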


Although the following definitions apply to any lattice $\m L$, to connect with Section \ref{graphs to arr} think of $\m L$ as being given by the singular or divergent arrangement of a connected and at most logarithmic graph $G$. In the (equivalent) combinatorial formulation below, building sets and nested sets are certain subposets of $\m L$ (a subposet of a poset $(\m P,\leq)$ is a subset of $\m P$ with the induced partial order). In some cases these subsets are even lattices, although not necessarily sublattices since the meet and join operations need not be induced by the corresponding operations on~$\m L$.

\begin{defn}[Combinatorial building sets]
 Let $\m L$ be a lattice. A non-empty subset $\m B $ of $\m L$ is a \textit{combinatorial building set} for $\m L$ if the following holds: For all $p \in \m L_{>\hat 0}$ and $\{q_1, \ldots , q_k \}= \text{max}\ \m B_{\leq p}$ there is an isomorphism of posets
 \begin{equation}\label{comb build set}
  \varphi_p: \prod_{i=1}^{k} [\hat 0,q_i] \longrightarrow [\hat 0,p]
 \end{equation}
 with $\varphi_p(\hat 0,\ldots , q_j , \ldots, \hat 0)=q_j$ for $j=1, \ldots , k$. 
\end{defn}
This defines combinatorial building sets which are more general than the building sets introduced in Section \ref{comps}. To get the notion according to DeConcini and Procesi we have to demand an additional geometric compatibility condition.
\begin{defn}[Geometric building sets]
 We call $\m B$ a \textit{geometric building set} for $\m L$ if it is a combinatorial building set and  
 \begin{equation*}
 \dim A_p = \sum_{i=1}^{k} \dim A_{q_i}.
 \end{equation*}
\end{defn}
Note that if $\m L\subseteq \m G$, since $\dim A_g=d(|V(g)|-1)$ (or $d(|E(g)|-h_1(g))$ if $g$ is divergent), we can express this geometric condition also purely in graph theoretic terms.

\begin{example}
 For every lattice $\m L$ itself is a building set, the \textit{maximal building set}. The \textit{minimal building set} is given by the irreducible elements of $\m L$. It is formed by all $p \in \m L$ for which there is no product decomposition of the interval $[\hat 0,p]$ as in \eqref{comb build set}. We denote this building set by $I(\m L)$.
\end{example}

The geometric condition gives a handy criterion to check whether a given element is irreducible or not.

\begin{lem} \label{irr}
 Let $\m L \subseteq \m G$ be a lattice. Let $g\in \m L$ be the union of irreducible subgraphs $g= g_1 \cup \cdots \cup g_k$ with non-empty overlap $h=g_1 \cap \cdots \cap g_k$. W.l.o.g.\ assume that the $g_i$ are maximal with this property. Then $g$ is irreducible. 

 Vice versa, for every reducible element $g\in \m L \setminus I(\m L)$ we have that $g$ is the union of some $g_1, \ldots, g_k \in I(\m L)$ with
 \begin{equation*}
  \bigcap_{i =1 }^k g_i= \bigcup_{v \in V'}v \sim o 
 \end{equation*}
 for some vertex set $V' \subseteq V(G)$.   
\end{lem}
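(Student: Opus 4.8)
The plan is to reduce everything to the geometric building set condition $\dim A_g = \sum_i \dim A_{g_i}$ together with the dimension formulas $\dim A_g = d(|V(g)| - c_g)$ (in the saturated case) and $\dim A_g = d(|E(g)| - h_1(g))$ (in the divergent case), both recorded in the proof of Proposition \ref{sing supp}. The key observation is that for $\m L \subseteq \m G$ geometric and combinatorial building sets coincide in the relevant way: an element is irreducible precisely when no product decomposition of its lower interval exists, and by the geometric compatibility condition such a decomposition forces the additivity of the $\dim A$ values. So the whole lemma becomes a bookkeeping exercise about vertex counts and Betti numbers under unions and intersections of subgraphs, exactly as in the proof of Proposition \ref{latt}.

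For the first statement, suppose $g = g_1 \cup \cdots \cup g_k$ with the $g_i$ irreducible, maximal with the property of having common overlap $h = g_1 \cap \cdots \cap g_k \neq o$. First I would argue that if $g$ were reducible, then by definition $[\hat 0, g] \cong \prod_j [\hat 0, q_j]$ with $\{q_1, \ldots, q_m\} = \max \m B_{\leq g}$ and $m \geq 2$, and each $g_i$, being irreducible and $\leq g$, sits below exactly one $q_j$. Since $\bigcap_i g_i = h \neq o$, all the $g_i$ must actually sit below a single $q_j$ — otherwise two of them lie in incomparable factors and the isomorphism $\varphi_g$ forces $g_i \wedge g_{i'} = \hat 0 = o$, contradicting $h \neq o$. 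But then $g = g_1 \cup \cdots \cup g_k \leq q_j$, and combined with $q_j \leq g$ this gives $g = q_j$, so $m = 1$ and $g$ is irreducible after all. The maximality hypothesis on the $g_i$ is what guarantees we have captured the genuine irreducible pieces rather than artificially small ones; I would spell out that any irreducible $g' \leq g$ is contained in the union of those $g_i$ it meets, using the lattice structure of $\m G$.

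For the converse, let $g \in \m L \setminus I(\m L)$ be reducible, so $[\hat 0, g] \cong \prod_{i=1}^k [\hat 0, q_i]$ with $k \geq 2$ and $q_i \in I(\m L)$ the maximal irreducibles below $g$. Then $g = q_1 \vee \cdots \vee q_k = q_1 \cup \cdots \cup q_k$ (join in $\m G$ is union). The point is to identify $\bigcap_i q_i$. Applying the geometric condition gives $\dim A_g = \sum_i \dim A_{q_i}$; translating via the dimension formula and the inclusion–exclusion-type identities for $|V(\cdot)|$, $|E(\cdot)|$ and $h_1(\cdot)$ under union (as in Proposition \ref{latt}), this additivity forces each pairwise and hence the total intersection $\bigcap_i q_i$ to carry no edges, i.e.\ $E(\bigcap_i q_i) = \emptyset$, so $\bigcap_i q_i \sim o$; its vertex set is then some $V' \subseteq V(G)$ by our convention allowing isolated vertices. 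The main obstacle I expect is precisely this last dimension-chasing step: one must be careful that additivity of $\dim A$ over the $k$-fold product genuinely propagates down to say the intersection is edgeless, rather than merely controlling the top-level counts — this needs the product isomorphism $\varphi_g$ applied to sub-intervals, combined with the graded structure from Proposition \ref{gradlatt}, to rule out hidden cancellations between $d\,h_1$ and $2|E|$ contributions. Once that is in place the identification with $\bigcup_{v \in V'} v \sim o$ is immediate.
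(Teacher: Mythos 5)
Your argument for the converse is essentially the paper's: reducibility gives the geometric additivity $\dim A_g=\sum_i\dim A_{q_i}$, and a shared edge $e$ of $q_i$ and $q_j$ gives $0\neq A_e\subseteq A_{q_i}\cap A_{q_j}$, destroying directness of the sum. Note, however, that the ``main obstacle'' you flag is not one: you never need the additivity to propagate to sub-intervals, nor Proposition \ref{gradlatt}, nor the inclusion--exclusion bookkeeping of Proposition \ref{latt}. Directness of $\sum_i A_{q_i}$ at the top level already forces $A_{q_i}\cap A_{q_j}=\{0\}$ pairwise, hence the $q_i$ can share no edges, and an edgeless intersection is $\sim o$ by the paper's convention on subgraphs.

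For the first statement you take a genuinely different route. The paper stays linear-algebraic: if $g$ were reducible then $\dim A_g=\sum_i \dim A_{g_i}$, but an edge of the overlap $h$ lies in every $A_{g_i}$, so the sum $\sum_i A_{g_i}$ cannot be direct and $\dim A_g<\sum_i\dim A_{g_i}$ --- contradiction. You instead argue inside the poset, using the product decomposition $[\hat 0,g]\cong\prod_j[\hat 0,q_j]$ and the fact that elements lying in distinct factors have meet $\hat 0$. This is a legitimate alternative, and it has the advantage of working for purely combinatorial building sets, but one step needs patching: to contradict ``$g_i\wedge g_{i'}=o$ in $\m L$'' you must know that the set-theoretic overlap $h=g_1\cap\cdots\cap g_k$ (or at least some non-trivial element of $\m L$ below both $g_i$ and $g_{i'}$) actually belongs to $\m L$. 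For $\m L=\m D$ this is Proposition \ref{latt}; for a general sublattice $\m L\subseteq\m G$ it is not automatic, and the clean way out is again to pass to subspaces ($0\neq A_e\subseteq A_{g_i}\cap A_{g_{i'}}$ is incompatible with the direct-sum decomposition that reducibility entails) --- at which point your argument collapses back into the paper's dimension count, which is why the paper's version is shorter.
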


\begin{proof}
Write $d(g)$ for $\dim A_g$. If $g$ would be reducible, then $d(g) = \sum_{i=1}^{k} d(g_i)$ because the $g_i$ form the set $\text{max}\ \m I(\m L)_{\leq g}$. On the other hand, $d(g)=\linebreak \sum_{i=1}^{k} d(g_i) - d(h)$ --- the sum can not be direct because of the overlap $h$. Thus, $d(h) =0$, i.e.\ $A_h=\{0\}$ which means $h=o$. 

The second statement follows from the same argument. The geometric condition for reducibility $d(g) = \sum_{i=1}^k d(g_i)$ cannot hold if the $g_i$ have common edges.
\end{proof}

Recall that the choice of a building set $\m B$ determines the structure of the exceptional divisor $\m E$ in the wonderful model; the elements of $\m B$ control the number of components of $\m E$ and how they intersect. To construct $Y_{\m B}$ explicitly we needed another family of sub(po)sets of $\m B$, the $\m B$-\textit{nested sets}.

\begin{defn}[Nested sets]
 Let $\m B$ be a building set in a lattice $\m L$. A subset $\m N \subseteq \m B$ is $\m B$-\textit{nested} if for all subsets $\{p_1, \ldots , p_k\} \subseteq \m N$ of pairwise incomparable elements the join (in $\m L$!) $p_1 \vee \cdots \vee p_k$ exists and does not belong to $\m B$.
\end{defn}

 With nested sets we can build another abstract simplicial complex, the \textit{nested set complex} $\Delta_{\m N}(\m L,\m B)$. Its $k$-faces consist of the $\m B$-nested sets with $k+1$ elements. It is the generalization of the order complex for non-maximal building sets. For the maximal building set $\m B=\m L$ a subset is nested if and only if it is linearly ordered in $\m B$, so that in this case we have $\Delta(\m L)=\Delta_{\m N}(\m L,\m B)$. By Theorem \ref{geom of wm} it contains all the information about the stratification of the exceptional divisor $\m E$ in $Y_{\m B}$.

Since $\m D$ is a graded lattice, we have proven here a little conjecture (in the case $G$ at most logarithmic) that appears in many texts on Hopf algebraic renormalization (for example \cite{bk}): 
\begin{cor}
 Every maximal forest of a graph $G$ has the same cardinality. 
\end{cor}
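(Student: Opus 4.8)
The plan is to deduce the corollary directly from the fact, established in Proposition~\ref{gradlatt}, that $\m D$ is a graded lattice, after observing that ``maximal forests'' in the Hopf-algebraic sense are exactly the maximal chains in a suitable graded poset. First I would make the dictionary explicit: a \emph{forest} of divergent subgraphs of $G$ (in the renormalization-Hopf-algebra sense of \cite{bk}) is a collection of divergent subgraphs that is totally ordered by inclusion together with the ambient $G$ and the empty graph $o$; equivalently, it is a chain in the poset $\m D$ (extended by $\hat 1=G$ if $G$ itself is divergent, which costs nothing). A \emph{maximal} forest is then a maximal chain in $\m D$, i.e.\ a chain $o=g_0 \subsetneq g_1 \subsetneq \cdots \subsetneq g_m=\hat 1$ that cannot be refined, so that each $g_{i}$ covers $g_{i-1}$.

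Next I would invoke the grading map $\tau:\m D \to \mathbb N$ from the proof of Proposition~\ref{gradlatt} (concretely, $\tau(g)=d^{-1}\dim A_g=|E(g)|-h_1(g)$ for divergent $g$). The defining property of a graded poset is that $\tau(p)=\tau(q)+1$ whenever $p$ covers $q$. Hence along any maximal chain $o=g_0 \lessdot g_1 \lessdot \cdots \lessdot g_m=\hat 1$ the value of $\tau$ increases by exactly $1$ at each step, so $m=\tau(\hat 1)-\tau(\hat 0)$. Since the right-hand side depends only on $G$ and not on the chosen chain, every maximal chain in $\m D$ has the same length $m$, hence the same cardinality $m+1$ (counting endpoints) or $m-1$ (counting only the proper divergent subgraphs strictly between $o$ and $\hat 1$, which is the count relevant for forests). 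This is precisely the assertion of the corollary.

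The only genuine subtlety — and the step I would be most careful about — is the translation between the two notions of ``maximal forest.'' In the Hopf-algebra literature a maximal forest need not a priori be a chain: one could worry about incomparable divergent subgraphs. But the standard definition of a forest (a.k.a.\ $G$-forest) only admits nested or disjoint subgraphs, and disjoint divergent subgraphs $g_1,g_2$ have $g_1 \cup g_2$ divergent of strictly larger $\tau$-value, so a maximal such collection is forced to be a single chain terminating at the maximal divergent subgraph. I would spell this reduction out in one or two sentences, after which the result is immediate from gradedness; everything else is bookkeeping with $\tau$. One should also note the standing hypothesis $G$ at most logarithmic, under which Proposition~\ref{gradlatt} was proved — without it $\m D$ need not be a lattice and the statement can fail, so it must be retained in the corollary.
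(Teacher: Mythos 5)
Your proof is correct and follows essentially the same route as the paper: both reduce the statement to the fact that in the graded lattice $\m D$ (Proposition \ref{gradlatt}) all maximal chains have the same length, because a maximal chain must proceed by covering relations and each cover increases $\tau$ by exactly one, so the length is $\tau(\hat 1)-\tau(\hat 0)$ independently of the chain. You are in fact somewhat more careful than the paper about the dictionary between maximal forests and maximal chains (the paper simply identifies maximal forests with maximal nested sets for the maximal building set, i.e.\ maximal linearly ordered subsets of $\m D$), and you correctly retain the standing hypothesis that $G$ is at most logarithmic, without which Proposition \ref{gradlatt} is unavailable.
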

\begin{proof}
 In the language of posets this translates into the fact that every maximal nested set has equal cardinality. But this is equivalent to $\m D$ being graded because the grading map $\tau$ forbids maximal linearly ordered subsets of different length.
\end{proof}

\begin{rem}
This property also seems to hold for the minimal building set $I(\m D)$, but not for intermediate building sets as these are built from $I(\m D)$ by successively adding maximal elements of $\m D \setminus I(\m D)$.  
\end{rem}

 \begin{example} Here are some examples, all in $d=4$:
 
 \begin{enumerate} 
   \item  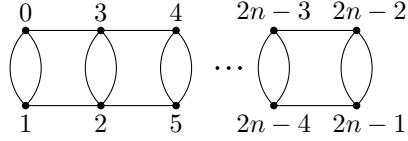
\begin{figure}[h]
 \centering
\begin{tikzpicture}[node distance=1cm and 1cm]
\coordinate[label=above:0] (v1);
\coordinate[below=of v1,label=below:$1$] (v2);
\coordinate[right=of v2,label=below:$2$] (v3);
\coordinate[above=of v3,label=above:$3$] (v4);
\coordinate[right=of v3,label=below:$5$] (v6);
\coordinate[right=of v4,label=above:$4$] (v5);

\coordinate[below right=of v5,xshift=-0.15cm,yshift=0.5cm] (d1);
\coordinate[right=of d1,xshift=-1.15cm] (d2);
\coordinate[right=of d2,xshift=-1.15cm] (d3);

\coordinate[right=of v5,xshift=0.3cm,label=above:$2n-3$] (v7);
\coordinate[below=of v7,label=below:$2n-4$] (v8);
\coordinate[right=of v8,xshift=0.1cm,label=below:$\quad 2n-1$] (v10);
\coordinate[right=of v7,xshift=0.1cm,label=above:$\quad 2n-2$] (v9);
\draw (v2) -- (v3);
\draw (v1) -- (v4);
\draw (v3) to (v6);
\draw (v4) to (v5);
\draw (v7) to (v9);
\draw (v8) to (v10);
\draw (v7) to[out=-135,in=135] (v8);
\draw (v7) to[out=-45,in=45] (v8);
\draw (v9) to[out=-135,in=135] (v10);
\draw (v9) to[out=-45,in=45] (v10);
\draw (v1) to[out=-135,in=135] (v2);
\draw (v1) to[out=-45,in=45] (v2);
\draw (v4) to[out=-135,in=135] (v3);
\draw (v4) to[out=-45,in=45] (v3);
\draw (v5) to[out=-135,in=135] (v6);
\draw (v5) to[out=-45,in=45] (v6);
\fill[black] (v1) circle (.05cm);
\fill[black] (v2) circle (.05cm);
\fill[black] (v3) circle (.05cm);
\fill[black] (v4) circle (.05cm);
\fill[black] (v5) circle (.05cm);
\fill[black] (v6) circle (.05cm);
\fill[black] (d1) circle (.025cm);
\fill[black] (d2) circle (.025cm);
\fill[black] (d3) circle (.025cm);
\fill[black] (v7) circle (.05cm);
\fill[black] (v8) circle (.05cm);
\fill[black] (v9) circle (.05cm);
\fill[black] (v10) circle (.05cm);
\end{tikzpicture}
\caption{The $n$-bubble graph}\label{nbubble}
\end{figure}
Let $G^n$ be the graph in Figure \ref{nbubble}. Here the index $n$ stands for the number of atoms, the \textit{fish} subgraphs on two edges with one cycle, and the numbering of vertices is chosen to match the most ``natural'' choice of an adapted spanning tree $t$ (see Definition \ref{adsptr}). 
Let $g^k_l$ denote the full subgraph of $G^n$ given by the vertex set $V(g^k_l)=\{ 2l \!-\! 2, \ldots , 2l \!-\! 2\! +\! 2k \!-\! 1\}$. From the fact that $\m D(G^{n+1})$ contains two copies of $\m D(G^n)$, given by the intervals $[o , g^n_1]$ and $[o, g^n_2]$, and Lemma \ref{irr} it follows by induction that
 \begin{equation*}
    I(\m D(G^n)) = \{ g^k_l \subseteq G^n \mid k = 1, \ldots, n \ \text{ and } \ l = 1, \ldots , n-k+1 \}.
 \end{equation*}

  \item \begin{figure}[h]
 \centering
\begin{tikzpicture}[node distance=1cm and 1cm]
\coordinate[label=above:$0$] (v1);
\coordinate[below=of v1,label=below:$1$] (v2);
\coordinate[right=of v2,label=below:$3$] (v3);
\coordinate[above=of v3,label=above:$2$] (v4);
\coordinate[right=of v3,label=below:$5$] (v6);
\coordinate[right=of v4,label=above:$4$] (v5);
\coordinate[below right=of v5,xshift=-.05cm,yshift=0.5cm] (d1);
\coordinate[right=of d1,xshift=-1.2cm] (d2);
\coordinate[right=of d2,xshift=-1.2cm] (d3);
\coordinate[right=of v5,xshift=0.3cm,label=above:$n-2$] (v7);
\coordinate[below=of v7,label=below:$n-1$] (v8);
\coordinate[right=of v7,label=above:$n$] (v9);
\draw (v2) -- (v3);
\draw (v1) -- (v4);
\draw (v3) to (v6);
\draw (v4) to (v5);
\draw (v7) to (v9);
\draw (v7) to (v8);
\draw (v1) to[out=-135,in=135] (v2);
\draw (v1) to[out=-45,in=45] (v2);
\draw (v4) to (v3);
\draw (v5) to (v6);
\draw (v2) to (v4);
\draw (v3) to (v5);
\draw (v8) to (v9);
\fill[black] (v1) circle (.05cm);
\fill[black] (v2) circle (.05cm);
\fill[black] (v3) circle (.05cm);
\fill[black] (v4) circle (.05cm);
\fill[black] (v5) circle (.05cm);
\fill[black] (v6) circle (.05cm);
\fill[black] (d1) circle (.025cm);
\fill[black] (d2) circle (.025cm);
\fill[black] (d3) circle (.025cm);
\fill[black] (v7) circle (.05cm);
\fill[black] (v8) circle (.05cm);
\fill[black] (v9) circle (.05cm);
\end{tikzpicture}
\caption{The $n$-insertions graph}\label{nins}
\end{figure}
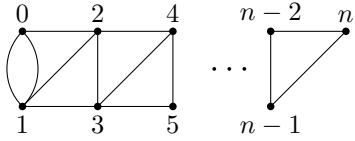
Next we look at the graph $G_n$, depicted in Figure \ref{nins}, constructed by a sequence of $n$ insertions of the fish into itself. Here minimal and maximal building set coincide because all divergent subgraphs are nested into each other:
  \begin{equation*}
  \m D(G_n)=I(\m D(G_n))=\{ g_1, g_2, \ldots , g_n=G_n \}
  \end{equation*}
  where $g_i$ is the full subgraph of $G^n$ corresponding to the vertex set $\{0,\ldots, i \}$. The partial order is a total order. Thus, the $\m D(G_n)$-nested sets are all non-empty subsets of the power set $\mathcal P\left( \m D(G^n) \right) $.
    
  \item  
 Let $G^{n,m}$ be the graph obtained by inserting $n$ bubbles on the left and $m$ bubbles on the right into the fish graph (Figure \ref{nmbubble}). Here
  \begin{equation*}
   I(\m D(G^{n,m}))= \{ g_1, \ldots, g_n , h_1, \ldots, h_m, G^{n,m} \}
  \end{equation*}
  where $g_i$ is the fish subgraph on the vertex set $\{i-1,i\}$ for $i \in \{1, \ldots ,\linebreak n\}$ and $h_j$ is the fish subgraph on the vertex set $\{n+j,n+j+1\}$ for $j\in \{1, \ldots, m\}$.
  All subgraphs in $I(\m D(G^{n,m})) \setminus \{G^{n,m}\}$ have disjoint edge sets. Therefore, as in the previous example, the $I(\m D(G^{n,m}))$-nested sets are all non-empty subsets of $\mathcal{P} \left( I(\m D(G^{n,m}) )\right)$.
  \begin{figure}[h]
  \centering
  \begin{tikzpicture}[node distance=1cm and 1cm]
   \coordinate[label=left:$0$] (v0);
   \fill[black] (v0) circle (.05cm);
   \coordinate[below=of v0,label=left:$1$] (v1);
   \fill[black] (v1) circle (.05cm);
   \coordinate[below=of v1,label=left:$2$] (v2);
   \fill[black] (v2) circle (.05cm);
   
   \coordinate[below=of v2,yshift=0.55cm] (d1);   \fill[black] (d1) circle (.025cm); 
   \coordinate[below=of d1,yshift=.9cm] (d2);   \fill[black] (d2) circle (.025cm);
   \coordinate[below=of d2,yshift=.9cm] (d3); \fill[black] (d3) circle (.025cm);
   
   \coordinate[below=of v2,label=left:$n-1$] (v3);
   \fill[black] (v3) circle (.05cm);
   \coordinate[below=of v3,label=left:$n$] (v4);
   \fill[black] (v4) circle (.05cm);
   
   \draw (v0) to[out=-135,in=135] (v1);
   \draw (v0) to[out=-45,in=45] (v1);
   \draw (v1) to[out=-135,in=135] (v2);
   \draw (v1) to[out=-45,in=45] (v2);   
   \draw (v3) to[out=-135,in=135] (v4);
   \draw (v3) to[out=-45,in=45] (v4);
   
   \coordinate[below right=of v0,xshift=0.5cm,yshift=0.5cm,label=right:$n+m+1$] (v5);
   \fill[black] (v5) circle (.05cm);
   \coordinate[below=of v5,label=right:$n+m$] (v6);
   \fill[black] (v6) circle (.05cm);
   
   \coordinate[below=of v6,yshift=.66cm] (d4);   \fill[black] (d4) circle (.025cm);   
    \coordinate[below=of d4,yshift=.9cm] (d5);   \fill[black] (d5) circle (.025cm);   
   \coordinate[below=of d5,yshift=.9cm] (d6);   \fill[black] (d6) circle (.025cm);    
   
    \coordinate[below=of v6,yshift=.2cm,label=right:$n+2$] (v7);
   \fill[black] (v7) circle (.05cm);
    \coordinate[below=of v7,label=right:$n+1$] (v8);
   \fill[black] (v8) circle (.05cm);
   
   \draw (v5) to[out=-135,in=135] (v6);
   \draw (v5) to[out=-45,in=45] (v6);
     \draw (v7) to[out=-135,in=135] (v8);
   \draw (v7) to[out=-45,in=45] (v8); 
   
   \draw (v0) to (v5);
   \draw (v4) to (v8);
  \end{tikzpicture}
\caption{The $n,m$-bubble graph}\label{nmbubble}
 \end{figure}
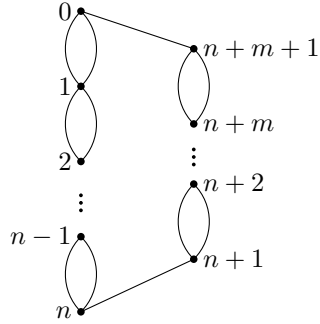
  \item For $n>0$ let $G=K_{n+1}$ be the complete graph on $n+1$ vertices. By induction it follows that saturated subgraphs are either disjoint unions or complete subgraphs on their respective vertex set. Thus, if $n=2$ then $\m G=I(\m G)$. For $n= 3$ (Figure \ref{k4}) the three subgraphs given by the disjoint union of edges $a\cupdot c, b\cupdot d$ and $e\cupdot f$ are reducible while the four embeddings of $K_3$ given by $a\cup b \cup f$ etc.\ are irreducible ($|[o, a\cup b \cup f]|=5$ is not divisible by two).  
  In general, $I(\m G(K_{n+1}))$ consists of all subgraphs that are embeddings of $K_i$ into $K_{n+1}$ for $i=1,\ldots,n$ while the reducible subgraphs are the disjoint unions of embeddings of $K_i$ and $K_j$ for $i+j\leq n+1$. These disjoint unions represent the polydiagonals that make the difference in the blow-up sequence of the Fulton-MacPherson compactification $M[n]$ and Ulyanov's polydiagonal compactification $M\langle n \rangle$. 
  \end{enumerate}
   \begin{figure}[h]
  \centering
  \begin{tikzpicture}[node distance=1.5cm and 1.5cm]
   \coordinate (v1);
\coordinate[below =of v1] (v2);
\coordinate[right =of v2] (v3);
\coordinate[above =of v3] (v4);
\draw (v1) -- node[label=left:$a$] {}  (v2);
\draw (v1) --  (v3);
\draw (v1) -- node[label=above:$d$] {} (v4);
\draw (v2) -- node[label=below:$b$] {}  (v3);
\draw (v2) --  (v4);
\draw (v3) -- node[label=right:$c$] {} (v4);
\coordinate[right =of v2,xshift=-1cm,yshift=1cm, label=$e$];
\coordinate[right =of v2,xshift=-1cm,yshift=-0.05cm, label=$f$];
\fill[black] (v1) circle (.05cm);
\fill[black] (v2) circle (.05cm);
\fill[black] (v3) circle (.05cm);
\fill[black] (v4) circle (.05cm);
  \end{tikzpicture}
\caption{$K_4$}\label{k4}
 \end{figure}
  \end{example}
  
It remains to define the combinatorial version of adapted bases. For this we need \textit{adapted spanning trees}.

\begin{defn}\label{spanning tree}
Let $G$ be a connected graph. A \textit{spanning tree} for $G$ is a simply-connected subgraph $t\subseteq G$ with $V(t)=V(G)$. If $G$ is not connected, $G=G_1\cupdot \cdots \cupdot G_n$, a \textit{spanning n-forest} for $G$ is the disjoint union $t=t_1 \cupdot \cdots \cupdot t_n$ of $n$ spanning trees $t_i$ for $G_i$.
\end{defn}

\begin{defn}[Adapted spanning trees]\label{adsptr}
 Let $G$ be a graph and $\m P \subseteq \m G$ a family of subgraphs of $G$. A spanning tree $t$ of $G$ is \textit{$\m P$-adapted} if for each $g\in \m P$ the graph $t_g$, defined by $E(t_g):=E(t)\cap E(g)$ is a spanning tree for $g$. More precisely, if $g$ is not connected, then we demand $t_g$ to be a spanning forest for $g$.
\end{defn}

\begin{example}
 For dunce's cap an $\m D$-adapted spanning tree ($d=4$) is given by $E(t)=\{e_1,e_3\}$ or $E(t)=\{e_2,e_4\}$, while $t$ with $E(t)=\{e_1,e_2\}$ is spanning but not adapted.
\end{example}

\begin{prop} \label{adsptr exists}
A $\m D$-adapted spanning tree always exists for $G$ at most logarithmic.
\end{prop}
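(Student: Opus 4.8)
The plan is to recast the statement in matroid language and deduce it from a self-contained lemma about common bases. Write $M=M(G)$ for the cycle matroid of $G$ on the ground set $E=E(G)$, with rank function $r$; recall that for a subgraph $g$ one has $r(E(g))=|E(g)|-h_1(g)=\tau(g)$, and that a subset of $E(g)$ is a spanning forest of $g$ precisely when it is a basis of the restriction $M|_{E(g)}$. Hence a spanning tree $t$ of (the connected graph) $G$ is $\m D$-adapted exactly when $E(t)$ is a basis of $M$ such that $E(t)\cap E(g)$ is a basis of $M|_{E(g)}$ for every $g\in\m D$. So it suffices to produce such a basis of $M$.

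First I would collect the combinatorial input from earlier. Put $\m L:=\{E(g)\mid g\in\m D\}\cup\{\emptyset,E\}$. Then $\m L$ is closed under union and intersection: this is Proposition \ref{latt} (where $g\vee h=g\cup h$, $g\wedge h=g\cap h$), and adjoining $\emptyset=E(o)$ and $E$ preserves closedness. Moreover $r$ is \emph{modular} on $\m L$, i.e. $r(X)+r(Y)=r(X\cup Y)+r(X\cap Y)$ for all $X,Y\in\m L$: for $X=E(g)$, $Y=E(h)$ this is exactly the identity $h_1(g\cup h)=h_1(g)+h_1(h)-h_1(g\cap h)$ proved inside Proposition \ref{latt} (the step showing no ``new cycle'' $m$ is created), combined with inclusion--exclusion for edge sets; for $X\in\{\emptyset,E\}$ it is immediate.

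The core is the following lemma, proved by induction on $|\m L|$: \emph{if $M$ is a matroid on $E$ and $\m L\subseteq 2^E$ is closed under $\cup,\cap$, contains $\emptyset$ and $E$, and $r$ is modular on $\m L$, then $M$ has a basis $B$ with $r(B\cap X)=r(X)$ for all $X\in\m L$.} If $\m L=\{\emptyset,E\}$ any basis works. Otherwise pick an atom $X$ of $\m L$ (a minimal nonempty element; necessarily $X\ne E$), fix a basis $b_X$ of $M|_X$, and pass to the contraction $M/X$ on ground set $E\setminus X$. One checks that $\m L/X:=\{Y\setminus X\mid Y\in\m L\}$ is again closed under $\cup,\cap$, contains $\emptyset$ and $E\setminus X$, and carries a modular $r_{M/X}$ --- here one uses closedness of $\m L$ under union, since $r_{M/X}(Y\setminus X)=r(Y\cup X)-r(X)$ and $Y\cup X\in\m L$, so the modular identity for $Y_1\cup X,Y_2\cup X$ transports down. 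Since $|\m L/X|<|\m L|$ (the atom $X$ and $\emptyset$ collapse), induction gives a basis $\bar B$ of $M/X$ adapted to $\m L/X$, and $B:=b_X\cup\bar B$ is a basis of $M$. For $Y\in\m L$ one has $Y\cap X\in\m L$ with $Y\cap X\subseteq X$ and $X$ an atom, so either $X\subseteq Y$ or $X\cap Y=\emptyset$: in the first case $B\cap Y=b_X\cup(\bar B\cap(Y\setminus X))$ and $r(B\cap Y)=r\big((\bar B\cap(Y\setminus X))\cup X\big)=r(Y)$ by the inductive hypothesis in $M/X$ (replacing $X$ by the spanning subset $b_X$ leaves ranks unchanged); in the second $B\cap Y=\bar B\cap Y$, and from $r_{M/X}(\bar B\cap Y)=r_{M/X}(Y)$ one gets $r\big((\bar B\cap Y)\cup b_X\big)=r(Y\cup X)=r(Y)+r(X)$ (modularity, $Y\cap X=\emptyset$), whence subadditivity and disjointness of $\bar B\cap Y$ from $b_X\subseteq X$ force $r(\bar B\cap Y)\ge r(Y)$, i.e. equality. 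Applying this to $M=M(G)$ and $\m L$ as above produces a spanning tree $t$ of $G$ with $E(t)\cap E(g)$ a basis of $M|_{E(g)}$, i.e. a spanning forest of $g$ (a spanning tree if $g$ is connected), for every $g\in\m D$ --- exactly the assertion that $t$ is $\m D$-adapted.

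I expect the main obstacle to be the inductive step of the lemma: checking that the ``modular family'' structure descends to the contraction $M/X$, and the two-case bookkeeping ($X\subseteq Y$ versus $X\cap Y=\emptyset$), where the hypotheses are genuinely used --- closedness of $\m L$ under \emph{union} and \emph{modularity} of $r$ (not mere submodularity, which every matroid rank has). Everything else is routine. It is worth remarking that a purely graph-theoretic proof is also possible: by Proposition \ref{latt} every divergent $h$ not contained in a fixed maximal element $g$ of $\m D\setminus\{G\}$ satisfies $h\cup g=G$, so one may build a $\m D(g)$-adapted spanning forest of $g$ by induction and extend it across the contraction $G/g$; but handling possibly disconnected subgraphs there is fiddlier than the matroid argument, which is indifferent to connectivity.
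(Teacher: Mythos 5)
Your proof is correct, but it takes a genuinely different route from the paper's. The paper constructs $t$ via the coradical filtration of $\m D$: contract all primitive subgraphs of $G$, iterate until no subdivergences remain, then glue a spanning tree of the fully contracted graph with spanning trees chosen for each contracted piece, working back down the filtration. You instead prove a self-contained matroid lemma — a matroid together with a $\cup,\cap$-closed family of subsets on which the rank function is \emph{modular} admits a basis restricting to a basis of every member of the family — and your inductive step (contract an atom $X$, transport modularity to $M/X$ via $r_{M/X}(Y\setminus X)=r(Y\cup X)-r(X)$, then split into the cases $X\subseteq Y$ and $X\cap Y=\emptyset$) is sound; the required modularity is exactly the identity $h_1(g\cup h)=h_1(g)+h_1(h)-h_1(g\cap h)$ extracted from the proof of Proposition \ref{latt}, which is precisely where the hypothesis that $G$ is at most logarithmic enters. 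What your argument buys is an explicit verification that the resulting tree is adapted to \emph{every} $g\in\m D$ — including unions of atoms and other elements that do not appear as contracted pieces in the paper's filtration, a point the paper's proof leaves implicit — and it handles disconnected divergent subgraphs without any extra care. What the paper's argument buys is elementarity and alignment with the rest of the text: it stays in graph language and exhibits the insertion/contraction structure that later sections exploit. The two are nevertheless close in spirit, since both induct by contracting minimal divergent pieces.
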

\begin{proof}
 We construct $t$ using the fact that divergent graphs can be built from primitive ones using the insertion operation. Moreover, this process is reversible, i.e.\ in the dual process of contracting subgraphs no information is lost.  
 Start with the primitive subgraphs of $G$ and let $G_1$ be the graph obtained from $G$ by contracting all primitive subgraphs. $G_1$ might have primitive subgraphs itself (the $g\in \m D$ with \textit{coradical degree} equal to two, cf.\ \cite{dk}). Repeat the process. After a finite number of steps $G_k$ will be free of subdivergences. Now choose a spanning tree $t_1$ for $G_k$ and spanning trees for all subgraphs contracted in the step from $G_{k-1}$ to $G_k$. Then $t_2$, the union of all these spanning trees, is a tree in $G_{k-1}$ visiting every vertex exactly once. Thus, it is a spanning tree for $G_{k-1}$. Repeat this process until after $k$ steps we have an $\m D$-adapted spanning tree $t=t_k$ of $G$.
\end{proof}

\begin{rem}
 An interesting question arising here is for which families $\mathcal P \subseteq \mathcal G$ does such a $\mathcal P$-adapted spanning tree exist? For a counterexample just take $\mathcal P = \mathcal G$ or $I(\m G)$: In the first case every edge of $G$ lies in $\m G$, so there cannot exist a $\m G$-adapted spanning tree. For the second case consider the example $K_4$; there is no spanning tree that generates all four irreducible ``triangle'' subgraphs. 
 
Another question is for which class of graphs this holds, i.e.\ if the assumption of $G$ being at most logarithmic can be dropped?
\end{rem}

With adapted spanning trees we are able to define $\m N$-adapted bases of $(X^G)^*$ in combinatorial terms. If the divergent lattice is considered, a $\m D$-adapted spanning tree will automatically be $\m N$-adapted for any nested set of any building set in $\m D$. This allows us to fix a convenient basis from the beginning on. Every spanning tree $t$ of $G$ has $|V|-1$ edges (otherwise it would contain a cycle, contradicting simply-connectedness). Therefore, for every spanning tree $t$ of $G$ (with the same orientation) we have a linear map $\psi_t: M^{E(t)}\rightarrow M^{V'}$ defined by 
\begin{equation} \label{adbas}
 e \mapsto \begin{cases}
            v_j-v_i & \text{ if $e$ starts at $v_i$ and ends $v_j$, } \\
            \pm v_k & \text{ if $e$ connects $v_0$ to $v_k$.}
           \end{cases}
\end{equation}
Pulling back $v_G$ along $\psi_t$ amounts to a linear change of coordinates on $X^G$ (as well as altering the numbering of the vertices of $G$, its orientation or the choice of a different (adapted) spanning tree). Any automorphism of $X^G$ will not change the topology of the arrangement and, as is shown in \cite{dp}, induces an isomorphism on the corresponding wonderful models. Therefore the wonderful construction and renormalization do not depend on these choices and we can work in a convenient basis given by an adapted spanning tree. 
In this basis $v_G$ is given by
\begin{equation*}
 v_G(\{x_e\}_{e\in E(t)})=\prod_{e\in E(t)}\triangle(x_e)\prod_{e\in E(G\setminus t)}\triangle\left(\sum_{e'\in E(t_e)}\sigma_t(e')x_{e'}\right)
\end{equation*}
where $t_e$ is the unique path in $t$ connecting the source and target vertices of $e$ and $\sigma_t:E(t)\to \{-1,+1\}$ is determined by the chosen orientation of $G$. The point is that for the divergent poset $\m D$ in these coordinates $x=\sum_{e\in E(t)}x_e e$ we have $A^{\bot}_g=\{ x_e=0 \mid e\in E(t_g)\}$ for all $g\in \m D$. Dually this means that the elements in $B |_{e \in E(t_g)}$, defined below, form a basis of $A_g$. In other words, we have an adapted basis in the sense of DeConcini-Procesi! 
By duality $B$ also defines a basis of $X^G$ and by abuse of notation we will denote both bases by $B$ - the meaning should always be clear from the context. This choice of basis will be important when we study the pullback of $v_G$ onto the wonderful model in the next section. 

For $\m G$ and other lattices there need not be an adapted spanning tree, but we can always find $\m N$-adapted spanning trees and bases for any nested set $\m N$.
\begin{prop}
 Let $\m N$ be nested for some building set $\m B$ in some lattice $\m L\subseteq \m G$. Then there exists an $\m N$-adapted spanning tree.
\end{prop}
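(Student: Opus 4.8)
The plan is to recast the statement and then induct on $|\m N|$, peeling off a minimal element and passing to a contraction, in close analogy with the proof of Proposition \ref{adsptr exists}. First I would observe that an $\m N$-adapted spanning tree is nothing but a spanning tree $t$ of $G$ for which $E(t)\cap E(g)$ is a spanning forest of $g$ for every $g\in\m N$; since $E(t)\cap E(g)$ is automatically acyclic, the only real requirement is that it be large enough to span $g$. The base case $\m N=\emptyset$ is trivial. For the inductive step I would pick $g$ minimal in $(\m N,\subseteq)$, pass to the contracted graph $G/g$, and send each $h\in\m N\setminus\{g\}$ to the subgraph $\bar h\subseteq G/g$ with edge set $E(h)\setminus E(g)$.

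The first substantial ingredient is a compatibility lemma: the family $\m N':=\{\bar h : h\in\m N\setminus\{g\}\}$ is again nested for a building set inside a lattice $\m L'\subseteq\m G(G/g)$. This is the combinatorial counterpart of the remark in Proposition \ref{adsptr exists} that contracting a minimal subgraph loses no information, and I would verify it by tracking how meets, joins and the interval decompositions \eqref{comb build set} behave under $h\mapsto\bar h$. Granting it, the inductive hypothesis provides an $\m N'$-adapted spanning tree $\bar t$ of $G/g$; its edges lift to edges of $G$ lying outside $g$, and I write $\tilde{\bar t}$ for that lift. Choosing in addition a spanning forest $s_g$ of $g$, I would set $t:=s_g\cup\tilde{\bar t}$. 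A standard count shows $t$ is a spanning tree of $G$, and since $\tilde{\bar t}$ avoids $E(g)$ we get $E(t)\cap E(g)=s_g$, so $t$ is adapted to $g$. For any other $h\in\m N$ one has
\begin{equation*}
 E(t)\cap E(h)=\big(s_g\cap E(g\cap h)\big)\,\cupdot\,\big(\tilde{\bar t}\cap E(\bar h)\big),
\end{equation*}
whose second block is a spanning forest of $\bar h$ by $\m N'$-adaptedness of $\bar t$; hence $t$ is adapted to $h$ as soon as $s_g\cap E(g\cap h)$ is a spanning forest of the overlap $g\cap h$. For $g\subseteq h$ this is automatic ($g\cap h=g$), so the condition only concerns the $h\in\m N$ incomparable to $g$.

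The hard part will be to choose the single spanning forest $s_g$ of $g$ so that it restricts to a spanning forest of every overlap $g\cap h$. Equivalently, the subgraphs $\{g\cap h : h\in\m N,\ h\text{ incomparable to }g\}$ of $g$ must admit a simultaneous spanning forest, and this is exactly where the nestedness hypothesis is essential: for the minimal building set these overlaps are empty — incomparable nested subgraphs are edge-disjoint, as follows from Lemma \ref{irr} — and the condition is vacuous, whereas for a general building set I expect the product decomposition $[\hat 0,g]\cong\prod_i[\hat 0,q_i]$ to force the overlaps to sit inside $g$ in a laminar fashion, so that one can build a spanning forest block by block and glue. Making this last point precise — in effect reproving the existence of $\m N$-adapted bases of De Concini and Procesi \cite{dp} in purely graph-theoretic terms — is the real obstacle, and is also why one cannot simply reduce to Proposition \ref{adsptr exists}, which only handles the divergent lattice of an at most logarithmic graph.
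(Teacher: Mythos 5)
Your construction is sound in outline and mirrors the paper's (the paper runs the same recursion top-down, starting from the maximal elements of $\m N$ and contracting everything below them, rather than bottom-up from a minimal element, but the two are equivalent). The problem is the step you yourself flag as "the real obstacle": you leave unproven that the single spanning forest $s_g$ can be chosen compatibly with all overlaps $g\cap h$ for $h\in\m N$ incomparable to $g$, and you only resolve it for the minimal building set, speculating that for a general $\m B$ one needs some laminar-decomposition argument coming from \eqref{comb build set}. That speculation points in the wrong direction, and as written the proof is incomplete for general building sets.

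The missing observation is that the overlaps in question are \emph{always} empty, for any building set. Every building set satisfies $I(\m L)\subseteq\m B\subseteq\m L$. If $g,h\in\m N$ are incomparable and $E(g)\cap E(h)\neq\emptyset$, then the dimension count of Lemma \ref{irr} shows that $g\vee h=g\cup h$ is irreducible, hence $g\vee h\in I(\m L)\subseteq\m B$; but the definition of a $\m B$-nested set requires precisely that the join of pairwise incomparable elements of $\m N$ \emph{not} lie in $\m B$. So incomparable elements of any $\m B$-nested set are edge-disjoint, your condition on $s_g\cap E(g\cap h)$ is vacuous, and your induction goes through with no further work. This is exactly the argument the paper uses at the corresponding point of its proof. (Separately, your "compatibility lemma" asserting that the contracted family $\m N'$ is again nested in a suitable lattice is stated but not verified; the paper sidesteps this bookkeeping by phrasing the recursion as a finite peeling procedure rather than a formal induction, but once the edge-disjointness above is in hand this verification is routine.)
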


\begin{proof}
 The idea is the same as in Proposition \ref{adsptr exists}. Start with the set $\m M$ of maximal elements in $\m N$ and contract all other elements. Pick a spanning tree for the resulting graphs. Proceed in the same manner with $\m N \setminus \m M$ and repeat the process until all of $\m N$ has been exhausted. This produces a spanning forest $t$ for $\cup_{\gamma \in \m N} \gamma $, except if there are $g,h$ in $\m N$ that are non-comparable and have non-empty intersection. In this case we argue like in the proof of Lemma \ref{irr} to see that the join $g \vee h$ must also be in $\m B$. But this is impossible since $\m N$ is $\m B$-nested.
 In a last step contract all elements of $\m N$ in $G$ and pick a spanning tree $t'$ for the resulting graph. The union $t\cup t'$ is then an $\m N$-adapted spanning tree for $G$.   
\end{proof}

\begin{defn}
 Let $G$ be at most logarithmic and $\m N$ a $\m B$-nested set for some building set $\m B$ in a lattice $\m L \subseteq \mathcal G$. Given an $\m N$-adapted spanning tree $t$ define the map $\psi_t$ as in \eqref{adbas}. With the linear forms $\omega_e$ introduced in Section \ref{graphs to arr} we define an $\mathcal N$-\textit{adapted} basis of $(X^G)^*$ by
 \begin{equation*}
  B:=\{ b^i_e:=(\omega_e \circ \psi_t )^i \mid e\in E(t), \ i=1, \ldots , d \}.
 \end{equation*}
 In such a basis the map $p:(X^G)^*  \rightarrow \m N \cup \{G\}$ from Definition \ref{adapted basis 1} is then given by 
 \begin{align*}
  p:  x= \sum_{ \substack{ e\in E(t) \\ i=1,\ldots, d} } x_e^i b_e^i  \longmapsto \min\{g \in \m N \cup \{G\} \mid x_e^i=0 \ \text{ for all }\  e \in E(t \setminus t_g) \}.
 \end{align*} 
 A \textit{marking} of an adapted basis is for every $g\in \m N$ the choice of a $b_g^{i_g} \in B$ with $p(b_g^{i_g})=g$. Equivalently, we can view it as a labelling on the elements of $\m N$: 
 \begin{equation*}
   g \longmapsto b_e^j \quad\text{for $e\in E(t\cap (g\setminus \m N_{<g}))$ and some $j \in \{1,\ldots,d\}$}.
 \end{equation*}
 Here $g\setminus \m N_{<g}:= g \setminus (h_1 \cup \cdots \cup h_k)$ for $\{h_1,\ldots, h_k\}=\{ h \in \m N \mid h < g \}$ de\-notes the graph $g$ with all its lower bounds in $\m N$ removed.
The partial order $\preceq$ on $B$ that determines the local blow-up $\rho_{\m N,B}$ is given by 
 \begin{equation*}
  b_e^i \preceq b_{e'}^j \Longleftrightarrow  e \in E(t_g), e' \in E(t_{g'}) \ \text{ with } g\subseteq g'\text{ and }  b_{e'}^j \text{ is marked.}
 \end{equation*}
 \end{defn}
 This finishes all necessary definitions and from here on we could repeat the construction of a wonderful model in purely combinatorial terms. In the divergent case we thus conclude that all ingredients are already determined by the topology and subgraph structure of $G$. Therefore, there is really no need for purely geometric data to build an atlas for $Y_{\m B}$. However obtained, now after the planting has been done, it is time to reap the fruits and see what a wonderful model can do for us.

\section{Wonderful renormalization}

Having constructed the wonderful models $(Y_{\m A},\beta)$ for general arrangements $\m A$, we now focus on the divergent and singular arrangements $\m A= \m A_{\m D}, \m A_{\m G}$ of a connected and at most logarithmic graph $G$. We study the pullback of $v_G$ onto the model and the pole structure of its Laurent expansion, then define (local) renormalization operators. The first two sections follow the exposition in \cite{bbk}, especially the proofs of Proposition \ref{pullback} and Theorem \ref{laurent}. The difference lies in the emphasis on the combinatorics of $\m D$ and the role of adapted spanning trees in our formulation. We correct some minor flaws and fill out missing details in the proofs.

Since $Y_{\m A}$ is non-orientable, we use from now on distribution densities (cf.\ Section \ref{distributions}).
As charts for $Y_{\m A}$ are indexed by nested sets $\m N$ and markings of adapted bases $B$ we will abbreviate this data by an underlined letter, $\underline{i}=(\m N,B)$. Adapted bases are here always given by the choice of an adapted spanning tree $t$.
We write $x=\{ x_e \}_{e \in E(t)}$ for a point in $X=M^{E(t)}$ with $x_e=(x_e^1,\ldots,x_e^d)$. A marking of $B$ assigns individual coordinates to the elements of a nested set of graphs. We denote the marked elements by $\mathcal N \ni g \mapsto x_g^{i_g}$. If a vector $x_g$ is marked in this way, let $\hat x_g$ denote $x_g$ with $i_g$-th coordinate equal to 1. 

\subsection{The pullback of $v_G$ onto the wonderful model}\label{pullback}

Let $(Y,\beta)$ be a wonderful model for $\m G$ or $\m D$ and $v=v_G$ the Feynman distribution associated to a graph $G$. We start the renormalization program by disassembling the pullback of $v$ onto $Y$ into a regular and a singular part.
\begin{prop}\label{pullback}
 Let $\m N$ be $\m B$-nested for a building set $\m B$ of $\m D$ (or $\m G$) and $B$ an adapted, marked basis. In local coordinates on $U_{\un i}$, $\un i=(\m N,B)$, the pullback of $\tilde v^s:= v^s|dx|$ onto the wonderful model is given by
 \begin{equation}
  \tilde w^s_{\un i}:=(\beta^*\tilde v^s)_{\un i}=f^s_{\un i}\prod_{g\in \m N} u_g^{-1 + d_g + s(2-d)|E(g)|} |dx|
 \end{equation}
 where $u_g(x^{i_g}_g)=|x^{i_g}_g|^{-1}$ and $d_g:=\dim A_g = d(|E(g)|-h_1(g))$. 
 
 The map $f_{\un i}:\kappa_{\un i}( U_{\un i} ) \longrightarrow \mathbb{R}$ is in $L^1_{\text{loc}}(\kappa_{\un i} ( U_{\un i}) )$ (or in $\mathcal{C}^{\infty}(\kappa_{\un i}( U_{\un i})) $ if the singular arrangement is considered) but smooth in the variables $x^{i_g}_g$, $g\in \m N$. 
\end{prop}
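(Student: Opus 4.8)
\emph{Proof plan.} I would fix a chart $U_{\un i}$, $\un i=(\m N,B)$, in which $\beta$ is the monomial morphism $\rho=\rho_{\m N,B}$ of Section~\ref{wonderful models}, and pull back the two factors of $\ti v^s=v^s|dx|$ separately: the kernel $v^s$ through the monomialisation of the linear forms $\omega_e$, and the Lebesgue density $|dx|$ through the Jacobian of $\rho$. Recall first that in the basis attached to an adapted spanning tree $t$ analytic regularisation turns $v_G=\prod_{e\in E(G)}|\omega_e|^{2-d}$ into $v^s=\prod_{e\in E(G)}|\omega_e|^{(2-d)s}$ (each propagator raised to the power $s$), that $\omega_e$ is a coordinate for $e\in E(t)$ and a signed sum $\sum_{e'\in E(t_e)}\pm x_{e'}$ otherwise, and that $A_g^{\bot}=\{x_e=0\mid e\in E(t_g)\}$ with $\dim A_g=d_g=d\,|E(t_g)|$.

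For the kernel the key point is that all $d$ components of $\omega_e$ have the same value of the map $p$, namely $g_e:=\min\{g\in\m N\cup\{G\}\mid e\in E(t_g)\}$; the proposition describing $\rho_{\m N,B}$ in Section~\ref{wonderful models} then gives $\rho(x)(\omega_e^i)=\big(\prod_{g\in\m N,\,g_e\subseteq g}x_g^{i_g}\big)P_{\omega_e^i}(x)$ with $P_{\omega_e^i}$ a polynomial, linear in each variable, depending only on the coordinates $\prec b_{g_e}$ (scalar coefficients absorbed into $P$). The decisive combinatorial identity is that for $g\in\m N$ one has $g_e\subseteq g\iff\omega_e\in A_g\iff e\in E(g)$: the first equivalence is minimality of $p$ together with monotonicity of $g\mapsto A_g$, the second is exactly saturatedness of $g$, which holds by the Lemma preceding Proposition~\ref{sing supp}. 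Hence the pullback of $\triangle(\omega_e)^s$ is $\prod_{g\ni e}|x_g^{i_g}|^{(2-d)s}$ times a residual factor, and summing over $e$ the power of $x_g^{i_g}$ contributed by the kernel is $s(2-d)|E(g)|$. Moreover, for $e\in E(g)$ we have $g_e\subseteq g$, hence $b_g^{i_g}\not\prec b_{g_e}$, so $P_{\omega_e^i}$ does not involve $x_g^{i_g}$ at all; this is what will yield smoothness of $f_{\un i}$ in the marked variables.

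For the density I would show that the Jacobian matrix of $\rho$ is block triangular with respect to any order refining $\preceq$ (the non-marked coordinates forming a diagonal block, the marked ones an upper-triangular block), so $|\det D\rho|$ equals the product of the diagonal entries, each a monomial in the marked coordinates. Collecting the power of a fixed $x_g^{i_g}$, and using that the sets $E\big(t\cap(h\setminus\m N_{<h})\big)$ for $h\in\m N_{\leq g}$ partition $E(t_g)$ together with $d\,|E(t_g)|=d_g$, the exponent telescopes to $d_g-1$. Thus $\beta^*\ti v^s=(v^s\circ\rho)\,|\det D\rho|\,|dx|=f^s_{\un i}\prod_{g\in\m N}|x_g^{i_g}|^{\,d_g-1+s(2-d)|E(g)|}|dx|$ with $f_{\un i}=\prod_{e\in E(G)}(\text{residual factor})_e$, which is the factorisation claimed in the statement.

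The remaining point, and the one I expect to be hardest, is the regularity of $f_{\un i}$. Smoothness in each $x_g^{i_g}$ is clear for the propagators $e\in E(g)$ by the observation above; for $e\notin E(g)$ one must argue, from the explicit form of $\rho$ in the chart, that the residual singularities of these propagators --- the strict transforms of the non-divergent loci $A_e^{\bot}$ --- meet the exceptional components $\{x_g^{i_g}=0\}$ only transversally (and, inside a chart, away from where they could destroy smoothness), which is the lift to $Y$ of the splitting ``$v_G=(\text{propagators of }g)\cdot(\text{something regular along }A_g^{\bot})$'' used in the proof of Proposition~\ref{sing supp}. The $L^1_{\mathrm{loc}}$ claim I would get from properness: for $\mathrm{Re}(s)$ in a suitable neighbourhood of $1$ all exponents $d_g-1+\mathrm{Re}(s)(2-d)|E(g)|$, as well as the propagator exponents, are $>-1$, so $\ti v^s\in L^1_{\mathrm{loc}}(X^G)$, whence $\int_K|\beta^*\ti v^s|=\int_{\beta(K)}|\ti v^s|<\infty$ for compact $K$; in the singular-arrangement case every $A_e^{\bot}$ is resolved, so $f_{\un i}$ is in fact smooth. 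Holomorphy of $f^s_{\un i}$ in $s$ on this range is then immediate, $f^s_{\un i}$ being a product of powers of the fixed polynomials $P_{\omega_e}$ and sums of squares of their components.
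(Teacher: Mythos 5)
Your main computation — monomialising the linear forms $\omega_e$ through $\rho_{\m N,B}$, using adaptedness/saturatedness to show that $x_g^{i_g}$ divides the pullback of $\triangle(\omega_e)$ exactly when $e\in E(g)$ (hence the exponent $s(2-d)|E(g)|$), and the block-triangular Jacobian giving the exponent $d_g-1$ — is correct and is essentially the paper's own argument. The smoothness of $f_{\un i}$ in the marked variables is also argued the same way (the residual polynomials $P_{\omega_e}$ for $e\in E(g)$ do not involve $x_g^{i_g}$, and for the remaining propagators no argument of $\triangle$ can vanish at $x_g^{i_g}=0$ without the factor having been absorbed into $u_g$).

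The genuine gap is in your $L^1_{\text{loc}}$ argument for $f_{\un i}$. First, $\ti v^s\notin L^1_{\text{loc}}(X^G)$ for $s$ in a full neighbourhood of $1$: by the power counting of Proposition \ref{sing supp} the integral near $A_g^{\bot}$ for $g$ divergent converges iff $d_g(1-\mathrm{Re}(s))>0$, so local integrability holds only on one side of $s=1$ and fails at $s=1$ itself — this failure is precisely the problem the construction is solving. Second, even where $\ti w^s=\beta^*\ti v^s\in L^1_{\text{loc}}(Y)$, you cannot pass to $f_{\un i}\in L^1_{\text{loc}}$ by dividing out $\prod_g|x_g^{i_g}|^{-1+d_g(1-s)}$, since the reciprocal monomial is unbounded near the exceptional divisor; and integrability of $f^s_{\un i}$ for $\mathrm{Re}(s)<1$ does not imply integrability of $f_{\un i}=f^1_{\un i}$, because $|f^s_{\un i}|$ \emph{decreases} as $\mathrm{Re}(s)$ decreases near the zero loci of the residual factors ($2-d<0$). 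The correct argument, and the one the paper gives, is local and $s$-independent: in the chart, $f_{\un i}$ fails to be smooth only along the strict transforms of the loci $A_e^{\bot}$ for edges $e$ not contained in any element of $\m B$; any subgraph built from such edges is non-divergent (all divergent subgraphs are unions of elements of $I(\m D)\subseteq\m B$ and are therefore fully resolved), so the power counting of Proposition \ref{sing supp}, applied to the corresponding sub-product of propagators, yields $\omega<0$ and hence local integrability of $f_{\un i}$ there. You should replace the properness/change-of-variables step by this direct power count on the residual factor.
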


\begin{proof}
The crucial point here is that locally $\beta$ is given by the map 
 \begin{align*}
  \rho_{\un i}  :X & \longrightarrow X, \\
  \sum_{i=1}^d \sum_{e\in E(t)} x^i_e b^i_e & \mapsto \sum_{i=1}^d\sum_{e\in E(t)}\prod_{x^i_e \preceq x^k_{e'}}x_{e'}^kb_e^i.
 \end{align*}
 Recall the choice of coordinates given by $t$ in \eqref{adbas}. In these coordinates 
\begin{align*}
 \beta^*v^s(x)&=v^s(\rho_{\un i}(x))=\left( \rho_{\un i}^* \left( I^* \triangle^{s \otimes |E(G)|} \right) \right)(x) \\
  & = \rho_{\un i}^*\left( \prod_{e\in E(t)}\triangle^s(x_e)\prod_{e\in E(G\setminus t)} \triangle^s\left(\sum_{e' \in E(t_e)} \sigma_t(e')x_{e'}\right) \right) \\
  & = \!\prod_{e\in E(t)}\!\triangle^s\left( \prod_{p(x_e)\subseteq g}x^{i_g}_g \hat x_e\!\right)\prod_{e\in E(G\setminus t)}\! \triangle^s\left(\sum_{e' \in E(t_e)} \prod_{p(x_{e'})\subseteq g}x^{i_g}_g \sigma_t(e')\hat x_{e'}\!\right) 
\end{align*}
where $\hat x_e:=(x_e^1, \ldots , \overbrace{1}^{i_g} , \ldots , x^d_e)$ if $x_e$ has a marked component. Since $\triangle$ is homogeneous of degree $(2-d)$, we can pull out all the factors $x^{i_g}_g$ in the first product of $\triangle$'s, so that the kernel of $\ti w_{\un i}^s(x)$ is given by
\begin{align*}
\prod_{e\in E(t)} \left(\prod_{p(x_e)\subseteq g}x^{i_g}_g\right)^{s(2-d)} \triangle^s(\hat x_e)\prod_{e\in E(G\setminus t)} \triangle^s\left(\sum_{e' \in E(t_e)} \prod_{p(x_{e'})\subseteq g}x^{i_g}_g  \sigma_t(e') \hat x_{e'}\right).
\end{align*}
In the second factor we can pull out $x_g^{i_g}$ if it appears in every term in the sum, i.e.\ if $t_e$ is a subgraph of some $g\in \m N$. But this is equivalent to $e \in E(g)$ because $t$ is an adapted spanning tree. Thus, $x_g^{i_g}$ appears exactly $|E(g)|$-times and we conclude  
\begin{equation*}
 w_{\un i}^s(x)=f_{\un i}^s(x)\prod_{g\in \m N}(x_g^{i_g})^{s(2-d)|E(g)|}.
\end{equation*}
Under the coordinate transformation $\rho_{\un i}$ every $dx_e=\bigwedge_{i=1,\ldots, d} dx_e^{i}$ transforms into 
\begin{equation*}
  \rho_{\un i}^*dx_e= \begin{cases} (x_g^{i_g})^{d-1} dx_e &\text{ if $x_e$ contains a marked component} , \\
  (x_g^{i_g})^d dx_e &\text{ if $x_e$ has no marked component}.
 \end{cases}
\end{equation*}
How many $x_e$ are scaled by the same $x_g^{i_g}$? As many as there are edges in $E(g)$. Therefore, there are in total $(\dim A_g - 1)$ factors and the measure 
\begin{equation*}
 |dx|=\left|\bigwedge_{e \in E(t), i=1,\ldots,d } dx^{i}_e\right| 
\end{equation*}
transforms into  
\begin{equation*}
 \rho_{\un i}^*|dx|=\prod_{g\in \m N}|x_g^{i_g}|^{\dim A_g - 1}|dx|.
\end{equation*}
Putting everything together we conclude
\begin{equation*}
 \tilde w_{\un i}^s(x)= f_{\un i}^s(x)\prod_{g\in N}|x_g^{i_g}|^{-1 + d_g + s(2-d)|E(g)|}|dx|.
\end{equation*}
For the divergent lattice the exponents of $|x_g^{i_g}|$ are given by $- 1 +d_g (s-1)$ because $d|h_1(g)|=2|E(g)|$ and $d_g=\dim A_g=d(|E(g)|-h_1(g))$ (cf.\ the proof of Proposition \ref{sing supp}). 

It remains to show that $f_{\un i} \in L^1_{\text{loc}}(\kappa_{\un i} ( U_{\un i} ) )$ or $\mathcal{C}^{\infty}( \kappa_{\un i} ( U_{\un i}) )$, respectively. Recall the definition of $U_{\un i}= X \setminus \cup_{\gamma \in \m B} Z_{\gamma}$ where $Z_{\gamma}$ is the vanishing locus of the polynomials $P_v$ for $v\in X^*$ such that $p(v) = \gamma$ (note that $p$, $P_v$ and therefore also $Z_\gamma$ depend on $\un i$!). For the singular arrangement every building set $\m B$ must contain all subgraphs consisting of a single edge. But for these elements of $\m B$ the $Z_{\gamma}=Z_e$ are precisely the sets where an entire sum $\sum_{e' \in E(t_e)} \sigma_t(e')x_{e'}$ expressing an edge $e$ of $G$ vanishes. Since all functions $\triangle$ are smooth off the origin it follows that $f_{\un i}$ is a smooth function. The same reasoning works for the divergent arrangement: Every building set of $\m D$ must contain all irreducible subgraphs. In addition, every element of $\m D$ is built out of elements of $I(\m D)$ 
by the join operation (i.e.\ using $\cup$). Therefore, as in the singular case, it follows that linear 
combinations expressing edges in any divergent subgraph can not vanish on $U_{\un i}$. The map $f_{\un i}$ fails to be smooth only at propagators of edges that do not lie in some element of $\m B$. But by the proof of Proposition \ref{sing supp} we know that there $f_{\un i}$ is still locally integrable, hence $f_{\un i} \in L^1_{\text{loc}}( \kappa_{\un i}(U_{\un i}) )$. 
Smoothness in the marked elements $x_g^{i_g}$, $g\in \m N$, follows from the simple fact that in the definition of $f_{\un i}$ already all marked elements have been pulled out of the linear combinations expressing edges in $G$. If one such expression would vanish at $x_g^{i_g}=0$, then all $x_{e'}$ were scaled by $x_g^{i_g}$ and this factor would have been absorbed into the exponent of $u_g$. Therefore no argument in the product of $\triangle$'s can vanish at $x_g^{i_g}=0$.
\end{proof}

\subsection{Laurent expansion}\label{laurent expansion}

From now on we consider the divergent lattice $\m D$ only. In this case we define $u^s_g(x^{i_g}_g):=|x^{i_g}_g|^{- 1 + d_g(1-s)}$ and for a finite product of maps $F_i$, $i\in I$, we write $F_I:=\prod_{i\in I}F_i$. Then, under the assumptions of Proposition \ref{pullback},
\begin{equation*}
 \tilde w_{\un i}^s = f_{\un i}^s \prod_{g\in \m N} u^s_g |dx|= f_{\un i}^s u_{\m N}^s |dx|.
\end{equation*}
To define local renormalization operators we need a better understanding of the pole structure of $\tilde w^s$. As it turns out, this structure is already encoded in the geometry of the exceptional divisor $\m E$ and reflects the structure of the divergent lattice $\m D$. 

Consider first the case of primitive graphs. Then $Y$ is the blow-up of the origin in $X$, covered by charts $U_i$ where $i$ runs from 1 to $dn$ (corresponding to all possible markings of an adapted basis). We already know from the extension theory for distributions that the Laurent expansion around $s=1$ of $\ti w^s$ has a simple pole with its residue given by
\begin{equation*}
 \ti a_{-1}\overset{loc.}{=}-\frac{2}{d_G} f_i \delta_{\m E i} .
\end{equation*} 
Here $\delta_{\m E}$ is a density on $Y$, the delta distribution \textit{centered on} $\m E$ (cf. \cite{gs}), locally in $U_i$ given by the delta distribution in the marked coordinate $x^i$, i.e.
\begin{equation*}
 \langle \delta_{\m E} | \varphi \rangle \overset{loc.}{=} \int dx \, \delta(x_i) \varphi(x).
\end{equation*}
 Pairing $\ti a_{-1}$ with the characteristic function $\chi$ of $Y$ produces a projective integral
\begin{equation*}
 \left\langle \ti a_{-1} | \chi \right\rangle = -\frac{2}{d_G}\int_{\m E}  f.
\end{equation*}
Recall from Section \ref{blow-ups} the definition of induced charts $(V_i,\phi_i)$ for $\m E$. Since any such chart covers $\m E$ up to a set of measure zero, it suffices to do this integral in one of them. Thus, 
\begin{equation*}
 \int_{\m E}  f = \int d\hat x f_i(\hat x),
\end{equation*}
where $d\hat x=dx^1\wedge \cdots \wedge \widehat{dx^i} \wedge \cdots \wedge dx^{nd}$ for some $i \in \set{dn}$.

\begin{defn}[Period of a primitive graph]\label{period}
Let $G$ be primitive. The \textit{period} $\mathscr P(G)$ of $G$ is defined as the integral
\begin{equation*}
\mathscr P(G) := \left\langle \ti a_{-1} | \chi \right\rangle = -\frac{2}{d_G}\int_{\m E} f.
\end{equation*} 
\end{defn}
For more on periods see the overview in \cite{os}. Until recently it was believed that all periods in massless $\phi^4$-theory (i.e. $d=4$ and all vertices of the Feynman diagram corresponding to $G$ are 4-valent) are rational combinations of multiple zeta values. But counterexamples \cite{bd} have proven this false, relating a better understanding of these periods to deep questions in algebraic geometry \cite{fb}.  

For general $G$ the Laurent expansion of $\ti w^s$ will contain terms corresponding to contracted graphs. Since we work in local coordinates indexed by $\m B$-nested sets, we need a more sophisticated (local) contraction operation on graphs: 
\begin{defn}
Let $g\subseteq G$ and $\m P \subseteq \m D$. The \textit{contraction relative to $\m P$} is defined as 
\begin{equation*}
 g \ds {\m P}:= \begin{cases}
                 g / (\bigcup_{\gamma \in \m P_{<g}} \gamma ) & \text{ if } g \in \m P, \\
                 g / (g \cap \bigcup_{\gamma \in \m P: \gamma \cap g < g} \gamma ) & \text{ else}.
               \end{cases}
\end{equation*}
\end{defn}
Especially important will be the contraction relative to nested sets. The reader should think of it as a local version of the contraction in the definition of the coproduct in the Hopf algebra of Feynman graphs. It will show up in all formulae that include the coproduct in their usual formulation, say in momentum space. Note that for $g \subseteq G$ the ``normal'' contraction $G/g$ is included in this definition as contraction of $G$ with respect to the nested set $\m N=\{g, G\}$. Moreover, if $\m N$ is nested and $g\in \m N$ or all elements of $\m N$ are contained in $g$, then $g\ds \m N$ is at most logarithmic as well. For a general discussion of which classes of graphs are closed under the contraction operation we refer the reader to \cite{bk}. 

\begin{thm}\label{laurent}
 Let $Y$ be a wonderful model for some building set $\m B$ of $\m D$. Let $\tilde w^s = \beta^* \ti v^s$ be the pullback of the density $\tilde v^s \in  \mathcal{\tilde D'}(X)$ onto $Y$. Then:
 \begin{enumerate}
  \item The Laurent expansion of $\tilde w^s$ at $s=1$ has a pole of order $N$ where $N$ is the cardinality of the largest $\m B$-nested set.
  \item The coefficients $\tilde a_k$ in the principal part of the Laurent expansion, 
  \begin{equation*}
   \tilde w^s= \sum_{-N \leq k \leq -1} \tilde a_k (s-1)^k,
  \end{equation*}
 are densities with $\text{supp } \tilde a_k = \bigcup_{|\m N|=-k} \mathcal{\m E}_{\m N}$, where $\m E_{\m N}:= \bigcap_{\gamma \in \m N}\m E_{\gamma}.$
\item Consider the irreducible elements $I(\mathcal{D})$ as building set. Assume $G\in I(\m D)$. Let $\m N$ be a maximal nested set and denote by $\chi$ the constant function on the wonderful model $Y_{I(\mathcal{D})}$. Then for $N=|\m N|$

\begin{equation*}
 \langle \tilde a_{-N} | \chi \rangle = \sum_{|\m M|=N} \prod_{\gamma \in \m M} \m P(\gamma \ds \m M).
\end{equation*}
 \end{enumerate}
\end{thm}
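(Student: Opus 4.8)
The plan is to prove the three claims in sequence, treating the local pole structure from Proposition \ref{pullback} as the main input and then globalizing.

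For part (1) and (2), the strategy is to work chart by chart. In the chart $U_{\un i}$ with $\un i = (\m N, B)$, Proposition \ref{pullback} gives $\tilde w^s_{\un i} = f^s_{\un i}\, u^s_{\m N}\, |dx|$ where $u^s_g(x_g^{i_g}) = |x_g^{i_g}|^{-1 + d_g(1-s)}$ and $f_{\un i}$ is smooth in exactly the marked variables $\{x_g^{i_g}\}_{g \in \m N}$. Since the marked variables are distinct coordinates, the factors $u^s_g$ involve disjoint variables, so the product $u^s_{\m N}$ is (the kernel of) a tensor product of one-dimensional distributions, each of the form handled by the toy model in Section \ref{ext}, cf. \eqref{distro split}. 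Each factor $u^s_g$ contributes a simple pole at $s=1$ with residue proportional to $\delta(x_g^{i_g})$; taking the product, the Laurent expansion of $u^s_{\m N}$ at $s=1$ has a pole of order $|\m N|$, with principal part coefficients supported on the coordinate subspaces $\{x_g^{i_g} = 0 : g \in \m M\}$ for subsets $\m M \subseteq \m N$ with $|\m M| = -k$. Multiplying by the smooth-in-marked-variables function $f^s_{\un i}$ does not change the pole order or the supports (smoothness in the marked coordinates is exactly what makes the products $f^s_{\un i}\,\delta(x_g^{i_g})$ well-defined and prevents cancellation). Identifying $\{x_g^{i_g}=0\}$ locally with $\m E_g$ (Theorem \ref{geom of wm}(1)) and patching the local descriptions over the atlas — using that densities glue and that the $\m E_\gamma$ are the global irreducible components — yields $\mathrm{supp}\,\tilde a_k = \bigcup_{|\m N| = -k}\m E_{\m N}$; in particular the maximal pole order $N$ is the largest cardinality of a $\m B$-nested set. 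One subtlety to check is that a chart $U_{\un i}$ only ``sees'' the nested set $\m N$ in its index, so the global support statement requires knowing that every nested set $\m M$ with $|\m M| = -k$ arises inside some chart and that the $\m E_{\m M}$ obtained in different charts agree; this follows from Theorem \ref{geom of wm}(2).

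For part (3), with $\m B = I(\m D)$ and $G \in I(\m D)$, fix a maximal nested set $\m N$ with $N = |\m N|$. By part (2), $\langle \tilde a_{-N} | \chi\rangle$ receives contributions only from the deepest strata $\m E_{\m M}$ with $|\m M| = N$, i.e. from maximal nested sets $\m M$. Fix one such $\m M$ and work in the chart $U_{(\m M, B)}$. There $\tilde a_{-N}$ is, up to the product of the residue constants $-\tfrac{2}{d_\gamma}$ over $\gamma \in \m M$, equal to $f_{(\m M,B)}\,\prod_{\gamma \in \m M}\delta(x_\gamma^{i_\gamma})\,|dx|$, so pairing with $\chi$ gives an integral of $f_{(\m M,B)}$ over the codimension-$N$ coordinate subspace cut out by all marked coordinates, which is (a dense chart of) $\m E_{\m M} = \bigcap_{\gamma \in \m M}\m E_\gamma$. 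The key computation is then to recognize this integral as $\prod_{\gamma \in \m M}\m P(\gamma \ds \m M)$: restricting $f_{(\m M,B)}$ to all marked coordinates set to zero factorizes the remaining propagators according to the nested structure of $\m M$ — the edges of $t$ lying in $\gamma \setminus \m N_{<\gamma}$ (for $\gamma \in \m M$) become the integration variables for the graph $\gamma\ds\m M$, and the adapted-spanning-tree property ensures that each factor $\triangle$ either lives entirely in one such block or has been fully scaled away. This is precisely the iterated blow-up/factorization underlying the construction, so the residual integral is a product of period integrals, one for each $\gamma \ds \m M$, each of which is primitive (as noted after the definition of relative contraction, since $\gamma \in \m M$ makes $\gamma \ds \m M$ at most logarithmic, and maximality of $\m M$ together with irreducibility forces it to be primitive). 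Summing over all maximal nested sets $\m M$ gives the claimed formula.

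The main obstacle I expect is in part (3): making rigorous the factorization of $f_{(\m M,B)}\big|_{\text{marked coords}=0}$ into the product of period integrands $\prod_\gamma (\text{integrand of }\m P(\gamma\ds\m M))$. One has to track carefully, using the partial order $\preceq$ on $B$ and the explicit form of $\rho_{\un i}$, which propagators $\triangle(\cdots)$ survive the specialization, verify that the surviving arguments genuinely decouple into blocks indexed by $\m M$, and check that the induced change of variables has Jacobian compatible with the definition of $\m P$ on each contracted graph — i.e. that the leftover measure $d\hat x$ distributes correctly as the product of the induced measures on the exceptional divisors of the blow-ups of the origins in the $\gamma\ds\m M$. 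This bookkeeping is where the role of \emph{adapted} spanning trees is essential and where \cite{bbk} (Theorem \ref{laurent} there) must be followed closely while filling in the combinatorial details.
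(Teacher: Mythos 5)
Your proposal is correct and follows essentially the same route as the paper: expand each one-dimensional factor $u_g^s$ via \eqref{distro split}, multiply and reorder to read off pole order and supports, and for part (3) reduce to the factorization $\delta_{\m M}[f]=\prod_{\gamma\in\m M}f_{\gamma\ds\m M}$ together with the observation that all contractions $\gamma\ds\m M$ along a maximal $I(\m D)$-nested set are primitive. The "main obstacle" you flag is exactly the assertion the paper also isolates and defers to the proof of Theorem \ref{rg thm}, so your outline matches the paper's logical structure.
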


\begin{proof}
1. This follows from the local expression for $\ti w^s$. Using Formula \eqref{distro split} we have
   \begin{equation*}
   \tilde w^s \overset{loc.}{=} f_{\un i}^s u^s_{\m N} |dx| = f_{\un i}^s \prod_{g\in \m N} \left( -\frac{2}{d_g} \delta_g (s-1)^{-1} + u_{g\heartsuit}^s \right) |dx| .
  \end{equation*}
 Since $u_{g\heartsuit}^s$ is regular in $s$, the highest pole order is given by $|\m N|$.

\medskip
2. Expand $ u_{\heartsuit}^s \in  \m {D}'(\mb R)$ into a Taylor series at $s=1$,
\begin{equation*}
 u_{\heartsuit}^s = \sum_{k=0}^{\infty}  u_k (s-1)^k
\end{equation*}
where the distributions $u_k$ are given by
\begin{equation}\label{expansion}
u_k: \varphi \mapsto \int dx \, |x|^{-1}\log^k(|x|) \big(  \varphi(x) - \theta(1-|x|) \varphi(0) \big).
\end{equation}
In the following we write $\theta_g$ for the map $x_g^{i_g} \mapsto \theta \big( 1- |x_g^{i_g}| \big)$ in all coefficients of the expansion of the regular part of $u_g^s$. Expanding $f^s$ gives 
\begin{equation*}
f^s= \exp( \log(f^s))=f \exp \big( (s-1)\log(f) \big)=f\sum_{k=0}^{\infty} \frac{ \log^k(f) }{ k! } (s-1)^k.
\end{equation*}
Fix a $\m B$-nested set $\m N$ with a corresponding marking $B$ and set $\un i=(\m N,B)$. To determine the lower pole parts in the local expression for $\ti w^s$ we multiply all series $u_{g\heartsuit}^s$ for $g \in \m N$ and reorder the sum. Denote by $(u_g)_l$ the $l$-th order coefficient of the expansion of $u_g^s$. Then for $n \in \{ 1, \ldots,  N \}$ the kernel of $\ti a_{-n}$ is given by
\begin{align}\label{ugly formula}
& \sum_{k=0}^{N-n-1} \frac{f_{\un i}\log^k(f_{\un i})}{k!} \left(\! \sum_{j=k}^{N-n-1}\!\! \sum_{ \substack{ \m L \subseteq \m N \\  |\m L|=n+j}} \prod_{\gamma \in \m L} \left(-\frac{2}{d_{\gamma}}\right)\delta_{\gamma \un i} \!\!\!\! \prod_{\substack{ \eta \in \m N \setminus \m L \\ \{ l_{\eta} \in \mb N \mid \sum_{\eta\in \m N\setminus \m L} l_{\eta}=j-k \} }}\!\!\!\! (u_{\eta})_{l_{\eta}}\! \right)\hspace{-1em}  \\
& + \frac{ f_{\un i}\log^{N-n}(f_{\un i}) }{(N-n)!} \prod_{\gamma \in \m N} \left(-\frac{2}{d_{\gamma}}\right)\delta_{\gamma \un i}, \nonumber
\end{align}
with $\delta_\gamma := \delta_{\m E_{\gamma}}$. Recall that locally $\m E_g$ is given by $x_g^{i_g}=0$ and $\m E_\m I\subseteq \m E_\m J$ for $\m J \subseteq \m I \subseteq \m N$. Therefore, the support of $\ti a_{-n}$ is given by the ($k=j=0$)-summand in~\eqref{ugly formula}, carrying the product of $n$ $\delta$-distributions in the marked coordinates of an $n$-element subset of $\m N$. Varying over all $\m B$-nested sets $\m N$ (and the markings) the same holds for all $n$-element subsets of any nested set. Thus, from the expansion formula \eqref{ugly formula} we conclude that the densities $\tilde a_{-n}$ are supported on
 \begin{equation*}
  \bigcup_{ |\m N|=n} \left( \bigcap_{\gamma \in \m N} \m E_{\gamma} \right)=\bigcup_{|\m N|=n} \m E_\m N.
 \end{equation*}

3. This follows essentially from two assertions: 
First, if we view the pairing of a product of delta distributions ($\delta_g\overset{loc.}{=}\delta ( x^{i_g}_g )$) with a function $\varphi$ as an operator $\delta_{\m N}$, locally given by
\begin{equation*}
 \delta_{\m N \un i }: \varphi \in \m D(\kappa_{\un i} ( U_{\un i} ) ) \longmapsto \left(\prod_{\gamma \in \m N}\delta_{\gamma \un i}\right) [\varphi] \in \m D( \kappa_{\un i} ( U_{\un i} \cap \m E_{\m N} ) ),
\end{equation*}
then for $f=f_{\un i}$ the regular part of the pullback $\beta^* \ti v^s$ we have

\begin{equation}
\delta_{\m N}[f]=\prod_{\gamma \in \m N}f_{ \gamma \ds \m N}.
\end{equation}
 Here $f_{g\ds \m N}$ is obtained from $f$ by setting all marked elements corresponding to graphs in $\m N_{< g}$ to zero. It equals the regular part of the pullback of $\ti v^s_{g\ds \m N}$ onto the wonderful model for the graph $g\ds \m N$ in charts corresponding to the nested set $\{g\ds \m N\}$ ($g \ds \m N$ is primitive!). For a precise definition and the proof of this assertion we refer to Section \ref{renormalization group}, Theorem \ref{rg thm}, where this is elaborated in a much more general case. The important point here is that $\delta_{\m N}[f]$ is a product of maps $f_{g\ds \m N}$, each one depending only on the set of variables $\{x_e\}$ with $e$ in $E(t_g)\setminus E(\m N_{<g})$ without all marked elements.

The second assertion is that in every maximal $I(\m D)$-nested set all contracted graphs $g \ds \m N$ are primitive. Note that if $G$ is divergent and irreducible, it must be contained in every maximal nested set. To prove the assertion let $g\in \m N$ and assume $g\ds \m N$ is not primitive. This means there is an $h\in \m D$ with either $h \subseteq g\ds \m N$ or $h\ds \m N \subseteq g\ds \m N$. In both cases we can assume that $h$ is irreducible (if not, then $h$ is the union of irreducible elements and we do the following for every irreducible component of $h$). Then the set $\m N'=\m N \cup \{h\}$ is also nested if $h$ satisfies the following property: For all $g'$ in $\m N$ that are incomparable to $g$ the join $h\vee g'=h \cup g'$ must not lie in $I(\m D)$. But if there is $g'\in \m N$, incomparable to $g$, with $h\leq g'$ then $g$ and $g'$ have both $h$ as common subgraph. By Lemma \ref{irr} this implies that $g\vee g'$ is irreducible, showing that $g$ and $g'$ cannot both lie in $\m N$ because $\m 
N$ 
is $I(\m D)$-nested. If $g\ds \m N'$ is still not primitive, repeat the process until all contracted graphs are primitive. The resulting nested set $\m N'$ is then really maximal; if adding another graph does not violate the property of being nested, then it must necessarily be disjoint from all $g \in \m N'$ (otherwise some $g\ds \m N$ is not primitive) and this is impossible due to $G$ being in $\m N'$.

 For $G\in I(\m D)$ all edges of $G$ lie in some divergent subgraph (if not, say for one edge $e$, then contract all divergent subgraphs. The resulting graph is primitively divergent and contains $e$). Thus, in every maximal nested set $\m N$ all edges of an adapted spanning tree $t$ correspond to some element of $\m N$ and by Definition \ref{wonderful definition 2} we have $U_{\m N,B}=X$ for all maximal nested sets $\m N$. Let $\m N$ be such a maximal nested set. Using 2.\ and the first assertion we have locally in $U_{\un i}=U_{\m N,B}$
\begin{align*}
 \langle \ti a_{-N} | \chi \rangle & \overset{loc.}{=} \int_{\kappa_{\un i} ( U_{\un i} ) } dx  \prod_{\gamma \in \m N} \left(-\frac{2}{d_{\gamma}}\right)\delta_{\gamma \un i } [f_{\un i}] =  \prod_{\gamma \in \m N} \left(-\frac{2}{d_{\gamma}}\right)\int_{\kappa_{\un i} ( V_{\un i} )} d\hat x\, \delta_{ \m N {\un i} }[f_{\un i}] \\
 & = \int_{\kappa_{\un i} (V_{\un i} )} d\hat x \prod_{\gamma \in \m N}  \left(-\frac{2}{d_{\gamma}}\right) (f_{\gamma \ds \m N})_{\un i}.
 \end{align*}
 Here $\hat x$ denotes $\{x_e\}_{e\in E(t)}$ without all marked elements and $V_{\un i}$ is the chart domain for local coordinates on $\m E_{\m N}$, obtained by restriction of the chart $\kappa_{\un i}$ (cf.\ Section \ref{blow-ups}).
 Since it covers $\m E_{\m N}$ up to a set of measure zero (cf.\ Definition \ref{period}), integration in a single chart suffices. Moreover, two components of the exceptional divisor $\m E_{\m N}$ and $\m E_{\m M}$ have non-empty intersection if and only if $\m N \cup \m M$ is nested. But this is impossible due to maximality of $\m N$. Therefore we can sum the contributions from charts given by different maximal nested sets to obtain the global result 
 \begin{equation*}
  \langle \ti a_{-N} | \chi \rangle  = \sum_{ \un i } \int_{\kappa_{\un i}(V_{\un i})} d\hat x \prod_{\gamma \in \m N} \left(-\frac{2}{d_{\gamma}}\right) (f_{\gamma \ds \m N})_{\un i}
 \end{equation*}
where the sum is over all maximal nested sets with some marking. 
Since all $(f_{\gamma \ds \m N})_{\un i}$ depend on mutually disjoint sets of variables, the integral factorizes and since restricting $\kappa_{\un i}|_{V_{\un i}}$ further to $\{ \hat x_e \}_{e\in E(t_{\gamma})}$ is a local chart for $\m E_{\gamma \ds \m N}$, we conclude that
 \begin{align*}
 \sum_{\un i} \int_{\kappa_{\un i}(V_{\un i})} d\hat x \prod_{\gamma \in \m N} \left(-\frac{2}{d_{\gamma}}\right) (f_{\gamma \ds \m N})_{\un i} & = \sum_{\m N} \prod_{\gamma \in \m N} \int_{ \m E_{\gamma \ds \m N} } \left(-\frac{2}{d_{\gamma}}\right) f_{\gamma \ds \m N} \\
  &= \sum_{\m N} \prod_{\gamma \in \m N} \mathscr P (\gamma \ds \m N ).\\[-3.5em]
 \end{align*}
\end{proof}
This theorem is a first hint at the Hopf algebraic formulation of the renormalization group (see \cite{dk}, \cite{ck}). It shows that the poles of $\ti w^s$ are not arbitrary densities but reflect the combinatorics of $\m D$ in a special way. The highest order pole is completely determined by the structure of $\m D$. For the poles of lower order the same holds in a weaker version; they are supported on components of $\m E$ whose stratification is given by the combinatorial structure of $\m D$ as well.

\subsection{Renormalization}\label{renormalization}

With the main result of the previous section we are now able to tackle the renormalization problem. Since all poles of $\tilde w^s$ live on the components of the exceptional divisor, we can get rid of them using local subtractions depending on the direction such a pole is approached. These directions are encoded by nested sets, so that we will employ local versions of the previously defined renormalization maps $r_1$ and $r_{\nu}$, depending on the chosen coordinate system given by $\mathcal B$-nested sets and markings of an adapted basis $B$. 

\begin{defn}[(Local) minimal subtraction]\label{loc ms}
Let $R_1$ denote the collection of renormalization maps $\{ R^{\un i}_1\}$ where $\un i$ runs through all $\m B$-nested sets $\m N$ and marked, adapted bases $B$ (more precisely, the markings since the basis is fixed). $R^{\un i}_1$ removes the poles in the coordinates associated to the marked elements, i.e.
\begin{equation*}
 R_1[\ti w^s]\overset{loc.}{=}R^{\un i}_1[f_{\un i}^s \ti u^s_{\m N}]:= f_{\un i}^s \prod_{g \in \m N}r_1[u_g^s] |dx|.
\end{equation*}
\end{defn}
Recall from Section \ref{distributions} that $r_1[u_g^s]=(u_g^s)_{\heartsuit}$, so there are no poles anymore and we can take the limit $s\to 1$ to obtain a well defined density on $Y$. 
The next definition introduces a renormalization operator that produces a density for $s$ in a complex neighborhood of $1$. It should be thought of as a smooth version of minimal subtraction. 

\begin{defn} [(Local) subtraction at fixed conditions]\label{loc fs}
Let $R_{\nu}$ denote the collection of renormalization maps $\{ R^{\un i}_{\nu}\}$ where $\un i$ runs through all $\m B$-nested sets $\m N$ and markings of $B$. The symbol $\nu=\{\nu^{\un i}_g\}_{g\in \m N}$ stands for a collection of smooth functions on $\kappa(U_{\un i})$. Each $\nu^{\un i}_g$ depends only on the coordinates $x_e$ with $e\in E(t)\cap E(g \setminus \m N_{<g})$ and satisfies 
$\nu^{\un i}_g|_{x^{i_g}_g=0}=1$. Furthermore, it is compactly supported in all other directions. Similarly to $R_1$ the operator $R_{\nu}$ is defined by
\begin{equation*}
 R_{\nu}[\ti w^s]\overset{loc.}{=}R^{\un i}_{\nu}[f_{\un i}^s \ti u_{\m N}^s]:= f_{\un i}^s \prod_{g \in \m N}r_{\nu^{\un i}_g}[u_g^s] |dx|.
 \end{equation*} 
 \end{defn}
 
 In contrast to the definition of $r_{\nu}$ given in Section \ref{distributions} the maps $\nu^{\un i}_g$ depend not only on the marked coordinates $x_g^{i_g}$, but on all $\{x_e\}_{e\in E(t)\cap E(g \setminus \m N_{<g})}$. This is to ensure that all terms are well-defined densities in a neighborhood of $s=1$. There is some ambivalence in defining them, so it pays of to be careful at this point.
 
We introduce another useful expression for $R_{\nu}$. Locally in $U_{\un i}$,
\begin{equation}\label{fix sub}
 R^{\un i}_{\nu}[\tilde w_{\un i}^s]=\sum_{\m K\subseteq \m N} (-1)^{|\m K|}  \nu^{\un i}_{\m K} \cdot (\tilde w^s )_{\mathcal E_{\m K} \un i }.
\end{equation}
Here $\nu_{\m K}:= \prod_{\gamma \in \m K}\nu_{\gamma}$ and $\m E_{\m K}= \bigcap_{\gamma \in \m K}\m E_{\gamma} \subseteq \m E$. 
This is to be understood in the following way: First restrict the regular part $f_{\un i}^s$ of $\ti w_{\un i}^s$ and the test function $\varphi$ to $\kappa_{\un i}(U_{\un i} \cap \mathcal E_{\m K})$, then pull this product back onto $\kappa_{\un i}( U_{\un i} )$, then multiply by $u^s_{\m N}$ and $\nu^{\un i}_{\m K}$ and finally integrate. In formulae
\begin{equation*}
 \big\langle  \nu^{\un i}_{\m K} \cdot (\tilde w^s )_{\mathcal E_{\m K} \un i} | \varphi \big\rangle = \big\langle (p_{\m K \un i})_*(\nu^{\un i}_{\m K} u_{\m N}^s|dx| ) | \delta_{\m K \un i}[f_{\un i}^s \varphi] \big\rangle.
\end{equation*}
Here $p_{\m K}$ is the (canonical) projection $p_{\m K}: Y \to \m E_{\m K}$ and $\delta_{\m K}$ is the corresponding map $\mathcal D(Y) \rightarrow \mathcal D(\m E_\m K)$. For $\m K=\{g_1, \ldots, g_k\}$ they are locally given by 
\begin{align*}
  p_{\m K \un i}:x & \mapsto \big(x_1^1,\ldots , \hat x^{i_{g_1}}_{g_1}, \ldots ,  \hat x^{i_{g_k}}_{g_k}, \ldots , x^d_n \big), \\ 
  \delta_{\m K \un i}: \varphi &\mapsto \varphi |_{x^{i_{g_1}}_{g_1},\ldots,x^{i_{g_k}}_{g_k}=0}. 
  \end{align*}
 Note that $\delta_{\m K \un i}[f_{\un i}^s \varphi]$ remains compactly supported in the coordinates associated to $G \setminus \cup_{\gamma \in \m K} \m N_{<\gamma}$. On the other hand, $\nu^{\un i}_\m K$ is compactly supported in the coordinates associated to $G \cap ( \cup_{\gamma \in \m K} (\gamma \setminus \m N_{<\gamma} ) )$. But these sets cover $G$ and therefore the \textit{counterterms} $ \nu^{\un i}_{\m K} \cdot (\tilde w^s )_{\mathcal E_{\m K} \un i} $ are well-defined densities in all coordinates except the marked elements (cf.\ the proof of Theorem \ref{rg2}).
The notation is chosen to suggest that $( \tilde w^s )_{\m E_{\m K}}$ can be thought of as the ``restriction'' of $\ti w^s$ onto $\m E_{\m K}$ and the symbol ``$\cdot$'' in $\nu_{\m K} \cdot (\tilde w^s )_{\mathcal E_{\m K}}$ is used to highlight the fact that this expression differs from the usual product of distributions and smooth functions. We call it ``product'' as it is linear and multiplicative in $\nu$. 

\begin{lem} \label{comm rules}
Let $g,h \in \mathcal{D}$. The maps $p_{g,h}$ and $\delta_{g,h}$ both fulfill the following ``commutation rules'':
 \begin{align*}
  p_{g,h}&= p^g_{g,h}\circ p_g = p^h_{g,h} \circ p_h, \\
  \delta_{g,h}& = \delta^g_{g,h}\circ \delta_g = \delta^h_{g,h} \circ \delta_h,
 \end{align*}
 where $p^g_{g,h}:\m E_g  \longrightarrow \m E_{g,h}$ is locally given by 
 \begin{equation*}
  \big(x_1^1,\ldots , \hat x^{i_g}_g, \ldots , x^d_n \big) \mapsto \big(x_1^1,\ldots , \hat x^{i_g}_g, \ldots , \hat x^{i_h}_{h}, \ldots , x^d_n \big)
 \end{equation*}
 and $\delta^g_{g,h}: \mathcal{D}(  \m E_g) \mapsto \mathcal{D}( \m E_{g,h}) $ by
 \begin{equation*}
  \varphi \mid_{x^{i_g}_g=0} \ \mapsto \varphi \mid_{x^{i_g}_g=x^{i_h}_h=0}.
 \end{equation*}
\end{lem}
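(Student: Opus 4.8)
The plan is to reduce both identities to a computation in a single local chart, using that a marking of an adapted basis assigns pairwise distinct coordinate directions to distinct graphs. We work with a fixed building set $\m B$ of $\m D$ in the background, so that $g,h\in\m B$ and the components $\m E_\gamma$ together with the projections $p_\gamma,\delta_\gamma$ are defined. By Definition~\ref{loc fs} the maps $p_{\m K}$ and $\delta_{\m K}$, hence $p_g,p_h,p_{g,h}$ and $\delta_g,\delta_h,\delta_{g,h}$ together with the refinements $p^g_{g,h},p^h_{g,h},\delta^g_{g,h},\delta^h_{g,h}$, are given in charts $U_{\un i}$, $\un i=(\m N,B)$, by leaving all coordinates untouched except the marked ones indexed by the graphs in $\m K$, which are either removed (for the $p$'s) or set to zero (for the $\delta$'s). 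Hence it suffices to check the two equalities chart by chart.

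The first thing to settle is the degenerate case. We may assume $\{g,h\}$ is $\m B$-nested: otherwise $\m E_g\cap\m E_h=\emptyset$ by Theorem~\ref{geom of wm}, so $\m E_{g,h}$ is empty and none of the maps in the statement arise --- the lemma is only ever applied to pairs lying in a common nested set, cf.\ Definition~\ref{loc fs}. Granting this, by Theorem~\ref{geom of wm} (normal crossing of $\m E$ together with the nestedness criterion for non-empty intersections of its components) every point of $\m E_{g,h}$ lies in a chart $U_{\un i}$, $\un i=(\m N,B)$, with $g,h\in\m N$, in which $\m E_g=\{x^{i_g}_g=0\}$, $\m E_h=\{x^{i_h}_h=0\}$ and $\m E_{g,h}=\{x^{i_g}_g=x^{i_h}_h=0\}$. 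It then suffices to verify the identities on every such $U_{\un i}$.

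Then I would carry out the local computation. In $U_{\un i}$ the map $p_g$ removes only the marked coordinate $x^{i_g}_g$, the refinement $p^g_{g,h}$ additionally removes only $x^{i_h}_h$ (and symmetrically with $g$ and $h$ interchanged), while $p_{g,h}$ removes $x^{i_g}_g$ and $x^{i_h}_h$ at once. The key observation is that these are genuinely distinct coordinates: by Definition~\ref{adapted basis 1} a marking selects for each $\gamma\in\m N$ a basis vector $b^{i_\gamma}_\gamma$ with $p_{\m N}(b^{i_\gamma}_\gamma)=\gamma$, so from $g\neq h$ it follows that $b^{i_g}_g\neq b^{i_h}_h$, i.e.\ $x^{i_g}_g$ and $x^{i_h}_h$ are different coordinates. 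Therefore removing $x^{i_g}_g$ and removing $x^{i_h}_h$ commute, and doing both --- in either order --- is exactly the local description of $p_{g,h}$; this gives $p_{g,h}=p^g_{g,h}\circ p_g=p^h_{g,h}\circ p_h$ on $U_{\un i}$, hence on all of $Y$. The reasoning for $\delta$ is word for word the same: $\delta_g$ restricts a test function to $\{x^{i_g}_g=0\}$ and $\delta^g_{g,h}$ restricts it further to $\{x^{i_h}_h=0\}$, these two restrictions commute for the identical reason, each preserves compact support, and --- since by Theorem~\ref{geom of wm} the components $\m E_g,\m E_h,\m E_{g,h}$ meet transversally and irreducibly --- smoothness along the relevant submanifold, so the composites take values in $\m D(\m E_{g,h})$ as claimed.

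The step I expect to be the real obstacle is the gluing: one must know that $p_g$, $\delta_g$ and their refinements do not depend on the auxiliary nested set and marking used to write them down, so that the local formulas above genuinely patch to the canonical, globally defined maps of Definition~\ref{loc fs}. This is where the geometry of the wonderful model enters --- by Theorem~\ref{geom of wm}, $\m E_g$ is a single smooth irreducible divisor with $\beta(\m E_g)=A^{\bot}_g$, the transition maps of the atlas $(U_{\un i},\kappa_{\un i})$ preserve its local equation $\{x^{i_g}_g=0\}$, and the $\m E_\gamma$ intersect transversally; these facts make the local descriptions mutually consistent. Once chart-independence is granted, the commutation rules --- and with them the fact that $p_{g,h}$ and $\delta_{g,h}$ are themselves unambiguously defined --- follow from the single remark that removing (or restricting to zero) two distinct coordinates can be carried out in either order.
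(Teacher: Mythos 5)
Your argument is correct and is essentially the paper's own proof spelled out in full: the paper disposes of this lemma with ``Clear from the definition of both maps,'' and your chart-by-chart verification that removing (or setting to zero) the two distinct marked coordinates $x^{i_g}_g$ and $x^{i_h}_h$ commutes is exactly that definition-check. The extra care you take about non-nested pairs and chart-independence is sound but not needed beyond what the definitions already build in, since $p_{\m K}$ and $\delta_{\m K}$ are introduced as local (chart-indexed) operators in the first place.
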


\begin{proof}
Clear from the definition of both maps.
\end{proof}

Obviously this property generalizes to the case where instead of $\{g,h\}$ a finite subset of a nested set is considered, e.g.
\begin{equation*}
 p_{g_1, \ldots , g_k} = p^{g_1,\ldots, g_{k-1} }_{g_1, \ldots , g_k} \circ \cdots \circ p^{g_1}_{g_1,g_2} \circ p_{g_1}
\end{equation*}
and similarly for $\delta_{g_1, \ldots , g_k}$.

Both renormalization operations produce well-defined densities at $s=1$ as is shown in the next proposition.
\begin{prop}
 Let $(Y,\beta)$ be a wonderful model for a building set of the divergent lattice $\m D$. Then $R_1[\tilde w^s]|_{s=1}$ defines a density on $Y$, while $R_{\nu}[\tilde w^s]$ is a density-valued holomorphic function for all $s$ in a neighborhood of $1$ in $\mathbb{C}$.
\end{prop}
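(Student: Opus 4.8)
The plan is to work chart by chart, using Proposition~\ref{pullback} to reduce both statements to the one-dimensional toy model of Section~\ref{ext}, and only then to worry about patching the local densities into a global one on $Y$. The two inputs will be: the behaviour of the one-variable operators $r_1$ and $r_\nu$ near $s=1$ recorded in \eqref{distro split}; and the regularity of the regular part $f_{\un i}$ from Proposition~\ref{pullback}, namely that $f_{\un i}\in L^1_{\text{loc}}(\kappa_{\un i}(U_{\un i}))$ and is smooth in each marked variable $x^{i_g}_g$, $g\in\m N$.

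First I would fix a chart $U_{\un i}$, $\un i=(\m N,B)$, where by Proposition~\ref{pullback}
\begin{equation*}
 \ti w^s_{\un i}=f_{\un i}^s\prod_{g\in\m N}u^s_g\,|dx|,\qquad u^s_g(x^{i_g}_g)=|x^{i_g}_g|^{-1+d_g(1-s)}.
\end{equation*}
Distinct elements of $\m N$ carry distinct marked coordinates (distinct edges of the adapted spanning tree $t$), so $\ti w^s_{\un i}$ is the product of $f_{\un i}^s$ with a tensor product, in the disjoint variables $\{x^{i_g}_g\}_{g\in\m N}$, of the one-dimensional kernels $u^s_g$. By \eqref{distro split} each $u^s_g$ splits as $-\tfrac{2}{d_g}\delta_g(s-1)^{-1}+u^s_{g\heartsuit}$ with $u^s_{g\heartsuit}$ holomorphic in a complex neighbourhood of $s=1$; moreover $r_1[u^s_g]=u^s_{g\heartsuit}$ and $r_{\nu^{\un i}_g}[u^s_g]$ are holomorphic there as well. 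Since $r_1$ and $r_{\nu^{\un i}_g}$ act in a single variable and commute with multiplication by smooth functions (Section~\ref{distributions}), I would conclude that
\begin{equation*}
 R^{\un i}_1[\ti w^s_{\un i}]=f_{\un i}^s\prod_{g\in\m N}u^s_{g\heartsuit}\,|dx|
\end{equation*}
is a well-defined density on $\kappa_{\un i}(U_{\un i})$ for $s$ near $1$: one applies the distributions $u^s_{g\heartsuit}$ one marked variable at a time (legitimate because $f_{\un i}$ is smooth in those variables) and integrates against the remaining variables using $f_{\un i}\in L^1_{\text{loc}}$. Since there is no pole at $s=1$, $R^{\un i}_1[\ti w^s_{\un i}]|_{s=1}$ is a density.

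For $R_\nu$ I would instead use the expansion \eqref{fix sub},
\begin{equation*}
 R^{\un i}_\nu[\ti w^s_{\un i}]=\sum_{\m K\subseteq\m N}(-1)^{|\m K|}\nu^{\un i}_{\m K}\cdot(\ti w^s)_{\m E_{\m K}\un i},
\end{equation*}
and show that each of the finitely many counterterms $\nu^{\un i}_{\m K}\cdot(\ti w^s)_{\m E_{\m K}\un i}$ is a density-valued holomorphic function on a neighbourhood of $s=1$. The mechanism is the one already indicated in the text: $\delta_{\m K\un i}[f_{\un i}^s\varphi]$ stays compactly supported in the coordinates attached to $G\setminus\bigcup_{\gamma\in\m K}\m N_{<\gamma}$, whereas $\nu^{\un i}_{\m K}$ is compactly supported in the complementary coordinates attached to $G\cap\bigcup_{\gamma\in\m K}(\gamma\setminus\m N_{<\gamma})$; since these two families together exhaust the edges of $G$, the pairing against any test function reduces to an integral over a compact set — the bookkeeping spelled out in the proof of Theorem~\ref{rg2}. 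That integral depends holomorphically on $s$ by \eqref{distro split} and Morera's theorem applied to the locally uniformly convergent family, and a finite sum of such terms is again holomorphic. In particular $R_\nu[\ti w^s]$ is density-valued and holomorphic near $s=1$, and at $s=1$ it is a density.

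The point requiring care is the passage from ``density in each chart'' to ``density on $Y$'', i.e.\ the compatibility of the collections $\{R^{\un i}_1[\ti w^s_{\un i}]|_{s=1}\}$ and $\{R^{\un i}_\nu[\ti w^s_{\un i}]\}$ under the transition maps, in the sense of the definition of a distribution density (Section~\ref{distributions}). On an overlap $U_{\un i}\cap U_{\un j}$ the transition relates the marked coordinates of the two charts up to nowhere-vanishing smooth factors — readable off the local blow-up maps $\rho_{\un i}$ of Definition~\ref{wonderful definition 2} together with the commutation rules of Lemma~\ref{comm rules} — and the cutoffs $\nu^{\un i}_g$ are assumed compatibly chosen. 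Because $r_1$ and $r_\nu$ commute with multiplication by smooth functions, and because the terms by which the renormalized densities differ from $\ti w^s$ are supported on the exceptional divisor components $\m E_{\m K}$ — whose stratification, hence whose transformation behaviour, is controlled by the combinatorics of $\m D$ via Theorem~\ref{laurent} — these subtractions transform in the same way as $\ti w^s$ itself. Carrying out this last verification precisely, keeping track of the markings, of the maps $p$ and $P_v$, and of the cutoffs, is the genuinely delicate step; the transformation identities needed for it are established, in greater generality, in Section~\ref{renormalization group} (Theorems~\ref{rg thm} and~\ref{rg2}).
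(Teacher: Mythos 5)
There is a genuine gap in your treatment of $R_\nu$. You claim that ``each of the finitely many counterterms $\nu^{\un i}_{\m K}\cdot(\ti w^s)_{\m E_{\m K}\un i}$ is a density-valued holomorphic function on a neighbourhood of $s=1$.'' This is false: every such term still carries the \emph{full} singular factor $u^s_{\m N}=\prod_{g\in\m N}|x^{i_g}_g|^{-1+d_g(1-s)}$ (the restriction $\delta_{\m K\un i}$ acts only on $f^s_{\un i}\varphi$, not on $u^s_{\m N}$), so each individual term has a pole at $s=1$ of order up to $|\m N|$. The support argument you invoke (compactness of $\delta_{\m K}[f^s\varphi]$ and of $\nu^{\un i}_{\m K}$ in complementary coordinate directions) only shows that each term is a well-defined \emph{meromorphic} density-valued function near $s=1$; it does nothing to remove the ultraviolet poles coming from the marked coordinates. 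Finiteness of $R_\nu[\ti w^s]$ holds only because the poles cancel \emph{across} the alternating sum over $\m K\subseteq\m N$, and establishing that cancellation is the actual content of the proposition. The paper does this by induction on $|\m N|$: for $\m N=\{g\}$ one computes both Laurent expansions explicitly and observes that the residues agree because $\nu^{\un i}_g|_{x^{i_g}_g=0}=1$, so the simple poles cancel in the difference; for the inductive step one adjoins a minimal element $h$, realizes $U_{\un i'}$ as the blow-up of the proper transform of $A_h^{\bot}$ in $U_{\un i}$ (locally the scaling map $\rho_h$), and verifies that one additional one-dimensional subtraction $r_{\nu_h}$ applied to $\rho_h^*R^{\un i}_\nu[\ti w^s_{\un i}]$ reproduces exactly $R^{\un i'}_\nu[\ti w^s_{\un i'}]$. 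Your sketch contains no substitute for this cancellation mechanism, so the central claim is unproved.

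Two smaller remarks. For $R_1$ your chartwise argument is essentially the paper's (Theorem \ref{laurent} plus the definition of $r_1$ discards all poles), but note that $R_1[\ti w^s]$ is a density only \emph{at} $s=1$: for $s\neq1$ the regular parts $u^s_{g\heartsuit}$ do not transform correctly under coordinate changes, so your phrasing ``for $s$ near $1$'' overstates what holds globally. As for the patching step, which you flag as the delicate point, the paper dispatches it in one line: every counterterm contains the same combination $u^s_{\m N}f^s_{\un i}$ as $\ti w^s_{\un i}$ itself and therefore transforms as a density by the argument already given for $\ti w^s$ in Theorem \ref{laurent}; no appeal to the machinery of Section \ref{renormalization group} is needed.
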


\begin{proof} 
 Note that from the proof of Theorem \ref{laurent} it follows in particular that $\ti w^s$ is really a density on $Y$. By the same argumentation we are able to conclude from \eqref{fix sub} that all counterterms in $R_{\nu}$ are densities for $s$ in a neighborhood of 1: Every subtraction term has the same combination of $u_{\m N}^s$ and $f^s$, transforming under a change of coordinates according to the definition of densities.
 
 In the case of minimal subtraction, by Theorem \ref{laurent} and the definition of $r_1$, all poles of $\ti w^s$ have been discarded. Therefore, $R_1[\ti w^s]|_{s=1}$ is a finite density. From the Taylor expansion of $u^s_{\heartsuit}$ in \eqref{expansion} it follows that $R_1[\ti w^s]$ fails to be a density for $s\neq 1$ because the $u_g^s$ do not transform correctly under a change of coordinates.
 
 It remains to show finiteness of $R_{\nu}[\tilde w^s]$. We argue by induction on the cardinality of nested sets. First consider the case where the nested set consists of a single graph, $\m N=\{g\}$ for some $g\subseteq G$. Let $x^{i_g}_g$ denote the marked element. By Definition \ref{loc fs}
 \begin{align*}
  \langle R^{\un i}_{\nu}[\tilde w_{\un i}^s] | \varphi \rangle =   \langle \tilde w_{\un i}^s | \varphi \rangle   - \langle  \nu^{\un i} \cdot (\tilde w^s )_{\m E_g \un i} | \varphi \rangle,
 \end{align*} 
$\nu^{\un i}=\nu^{\un i}_g$ depending on all $x_e$ with $e\in E(t_g)$.
We expand both summands into their Laurent series (focusing on the principle part only) to get
\begin{align*}
\langle \tilde w_{\un i}^s | \varphi \rangle & = \int dx \ |x^{i_g}_g|^{-1+d_g(1-s)}f_{\un i}^s(x)\varphi(x) \\
& = \int dx^{i_g}_g |x^{i_g}_g|^{-1+d_g(1-s)} \  \int d\hat x \ f_{\un i}^s(x^{i_g}_g,\hat x) \varphi(x^{i_g}_g, \hat x) \\
& =:\! \int dx^{i_g}_g |x^{i_g}_g|^{-1+d_g(1-s)} F(s,x^{i_g}_g).
\end{align*}
Using Formula \eqref{distro split} from Section \ref{distributions},
\begin{align*}
 \langle \tilde w_{\un i}^s | \varphi \rangle =\,& -\frac{2}{d_g} F(s,0) (s-1)^{-1}  \\
& + \int dx^{i_g}_g \ |x^{i_g}_g|^{-1+d_g(1-s)} \left( F(s,x^{i_g}_g) - \theta_g(x^{i_g}_g)F(s,0) \right).
\end{align*}
For the counterterm we have
\begin{align*}
\langle \nu^{\un i} \cdot (\tilde w^s )_{\m E_g \un i} | \varphi \rangle &= \int dx \ |x^{i_g}_g|^{-1+d_g(1-s)}  \nu^{\un i} (x_{t_g}) \big( f_{\un i}^s(x)\varphi(x) \big) |_{x^{i_g}_g=0} \\
& = \int dx^{i_g}_g |x^{i_g}_g|^{-1+d_g(1-s)} \  \int d\hat x \ \nu^{\un i}(x_{t_g}) f_{\un i}^s(0,\hat x) \varphi(0, \hat x) \\
& =:\! \int dx^{i_g}_g |x^{i_g}_g|^{-1+d_g(1-s)} G_{\nu}(s,x^{i_g}_g).
\end{align*}
In the same way as above we get
\begin{align*}
\langle \nu^{\un i} \cdot (\tilde w^s )_{\m E_g \un i} | \varphi \rangle =\,& -\frac{2}{d_g} G_{\nu}(s,0) (s-1)^{-1} \\
& + \int dx^{i_g}_g \ |x^{i_g}_g|^{-1+d_g(1-s)} ( G_{\nu}(s,x^{i_g}_g) - \theta_g(x^{i_g}_g) G_{\nu}(s,0) ).
\end{align*}
Since $\nu^{\un i}(x_{t_g})|_{x^{i_g}_g=0}=1$, $F(s,0)=G_{\nu}(s,0)$ and the pole cancels in the difference. Therefore, $\langle R^{\un i}_{\nu}[\tilde w_{\un i}^s] | \varphi \rangle$ is finite for all $\varphi \in \mathcal D (\kappa_{\un i}(U_{\un i}))$. 
Now let $\m N$ be nested and $h\subseteq G$ such that $\m N':=\m N \cup \{h\}$ is also nested. For $\m K\subseteq \m N$ set $\m K':= \m K \cup \{h\}$. Assume $h$ to be minimal in $\m N'$ (if not choose another minimal element). We want to show finiteness of $R^{\un i'}_\nu [\ti w_{\un i'}^s]$ in $\kappa_{\un i'}(U_{\un i'})$ for $\un i'=(\m N',B)$ with $B$ marked for $\m N$ plus an additional marking for the element $h$ (since $h$ is minimal, all markings of $\m N'$ are of this form). By induction hypothesis $R^{\un i}_{\nu}[\tilde w_{\un i}^s]$ is a well-defined density on $\kappa_{\un i}( U_{\un i} )$ 
for all $s$ in a neighborhood of $1$ in $\mb C$. In \cite{dp} it is shown that $U_{\un i'}$ is the blow-up of the proper transform of $A^{\bot}_h$ in $U_{\un i}$. By minimality of $h$ this blow-up $\beta_h$ is locally given by $\rho_h$, i.e.\ by scaling all $\{x_e\}_{e \in E(t_h)}$ with $x_h^{i_h}$. Moreover, the chart $\kappa_{\un i'}$ is just the inverse of the composition of $\rho_h$ with 
$\Gamma(\pi_{\m B}) \circ \rho_{\m N}$. The pullback of $R^{\un i}_{\nu}[\tilde w_{\un i}^s]$ along this blow-up has an additional divergence in the coordinate $x_h^{i_h}$, one more subtraction is needed to obtain a finite density on $U_{\un i'}$:
\begin{align*}
 r_{\nu_h} \left[ \rho_h^*R^{\un i}_{\nu}[\tilde w^s_{\un i}] \right] =  r_{\nu_h} \left[ \rho_h^*\sum_{\m K\subseteq \m N} (-1)^{|\m K|}  \nu^{\un i}_{\m K} \cdot (\tilde w^s)_{\m E_{\m K} {\un i}} \right] =: *
 \end{align*}
Here minimality of $h$ is crucial. It means that $\nu^{\un i}_\gamma = \nu^{\un i'}_\gamma=: \nu_\gamma $ for all $\gamma \in \m N$, so that $R^{\un i}_{\nu}$ commutes with $\rho_h^*$. We compute the pullbacks locally in $U_{\un i}$, the power counting that produces $u_h^s$ works exactly like in the proof of Proposition \ref{pullback}.
 \begin{align*}
  \langle \rho_h^*( \nu_{\m K} \cdot (\ti w^s)_{\m E_{\m K} {\un i}} ) | \varphi \rangle & = \int (\rho_h^*dx) (\nu_{\m K}\circ \rho_h) (u_{\m N}^s\circ \rho_h) p_{\m K \un i}^*\delta_{\m K \un i}[(f_{\un i}^s\circ \rho_h) \varphi] \\
  &=  \int dx \, \nu_{\m K} u_{\m N}^s  u_h^s   p_{\m K \un i'}^*\delta_{\m K \un i'}[f_{\un i'}^s \varphi] \\
  &= \langle \nu_{\m K} \cdot (\ti w^s)_{\m E_{\m K}{\un i'}} | \varphi \rangle.
 \end{align*}
 Thus,
 \begin{align*}
 * &=  \sum_{\m K\subseteq \m N} (-1)^{|\m K|}  \nu_{\m K} \cdot (\tilde w^s)_{\m E_{\m K} {\un i'} } -  \nu_h \cdot  \sum_{\m K\subseteq \m N} (-1)^{|\m K|} \nu_{\m K} \cdot (\tilde w^s)_{\m E_{\m K \cup \{h\}} {\un i'}} \\
 &=  \sum_{\m K\subseteq \m N'} (-1)^{|\m K|} \nu_{\m K} \cdot (\tilde w^s)_{\m E_{\m K} {\un i'}} \\
 &=  R_{\nu}^{\un i'}[\ti w^s_{\un i'}],
\end{align*}
where we used minimality of $h$ again, 
\begin{align*}
 \langle  \nu_h \cdot ( \nu_{\m K} \cdot (\tilde w^s )_{\m E_{\m K} })_{\m E_h  {\un i'} } | \varphi \rangle 
  & =  \int dx \, u_{\m N'}^s  \nu_h \nu_{\m K}|_{x^{i_h}_h=0} \left( f_{\un i'}^s \varphi \right)|_{x^{i_{\m K}}_{\m K}=x^{i_h}_h=0} \\
& =  \int dx \,  u_{\m N'}^s  \nu_{\m K'} \left( f_{\un i'}^s \varphi \right)|_{x^{i_{\m K}}_{\m K}=x^{i_h}_h=0} \\
  & =  \langle  \nu_{\m K'} \cdot (\tilde w^s )_{\m E_{\m K'} {\un i'} } | \varphi \rangle. 
\end{align*}
We see that both densities coincide and the proposition is proven.
\end{proof}

 Both renormalization operators have another property that every sensible renormalization should have; they commute with multiplication by smooth functions.
\begin{lem}
Let $f\in \m C^{\infty}(\kappa_{\un i} (U_{\un i} ))$, then
\[
 R^{\un i}_1[\ti w_{\un i}^s f]= f R^{\un i}_1[\ti w_{\un i}^s], \qquad
 R^{\un i}_{\nu}[\ti w_{\un i}^s f]=f R^{\un i}_{\nu}[\ti w_{\un i}^s].
\]
\end{lem}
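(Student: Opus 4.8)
\emph{Proof proposal.} The statement is local, so I would work inside a single chart $U_{\un i}$, $\un i=(\m N,B)$. The guiding observation is that, by Definitions~\ref{loc ms} and~\ref{loc fs}, both $R^{\un i}_1$ and $R^{\un i}_\nu$ act on a density presented in the factorised form $(\text{regular part})\cdot\ti u^s_{\m N}\cdot|dx|$ by renormalising \emph{only} the singular factor $\ti u^s_{\m N}=\prod_{g\in\m N}u^s_g$ --- applying $r_1$, respectively $r_{\nu^{\un i}_g}$, to the one-variable factors $u^s_g$ --- while the regular part is merely carried along as a multiplier. The plan is therefore to recognise $\ti w^s_{\un i}f$ as a density of exactly this form, now with regular part $ff^s_{\un i}$, so that the multiplier $f$ passes straight through $R^{\un i}_1$ and $R^{\un i}_\nu$.

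The one thing to verify is that $ff^s_{\un i}$ still meets the admissibility hypotheses of those definitions. By Proposition~\ref{pullback}, $f^s_{\un i}$ is locally integrable (smooth in the singular-arrangement case) and smooth in the marked coordinates $x^{i_g}_g$, $g\in\m N$; since $f\in\m C^{\infty}(\kappa_{\un i}(U_{\un i}))$, the product $ff^s_{\un i}$ inherits both properties. Hence $R^{\un i}_1$ and $R^{\un i}_\nu$ are applicable to $\ti w^s_{\un i}f=(ff^s_{\un i})\,\ti u^s_{\m N}\,|dx|$, and the definitions then give directly
\[
 R^{\un i}_1[\ti w^s_{\un i} f]=(ff^s_{\un i})\prod_{g\in\m N}r_1[u^s_g]\,|dx|=f\cdot R^{\un i}_1[\ti w^s_{\un i}],
\]
and in the same way $R^{\un i}_\nu[\ti w^s_{\un i} f]=(ff^s_{\un i})\prod_{g\in\m N}r_{\nu^{\un i}_g}[u^s_g]\,|dx|=f\cdot R^{\un i}_\nu[\ti w^s_{\un i}]$.

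As a cross-check for $R_\nu$ I would rerun the argument through the expansion~\eqref{fix sub}: substituting $ff^s_{\un i}$ for $f^s_{\un i}$ in the $\m K$-th counterterm and using that $\delta_{\m K\un i}$ is multiplicative on smooth functions, so $\delta_{\m K\un i}[(ff^s_{\un i})\varphi]=\delta_{\m K\un i}[f^s_{\un i}(f\varphi)]$, one obtains
\[
 \big\langle\nu^{\un i}_{\m K}\cdot(\ti w^s f)_{\m E_{\m K}\un i}\,\big|\,\varphi\big\rangle=\big\langle(p_{\m K\un i})_*\big(\nu^{\un i}_{\m K}u^s_{\m N}|dx|\big)\,\big|\,\delta_{\m K\un i}[f^s_{\un i}(f\varphi)]\big\rangle=\big\langle\nu^{\un i}_{\m K}\cdot(\ti w^s)_{\m E_{\m K}\un i}\,\big|\,f\varphi\big\rangle,
\]
and summing over $\m K\subseteq\m N$ with the signs $(-1)^{|\m K|}$ (the $\m K=\emptyset$ term being the trivial $\langle\ti w^s_{\un i}f|\varphi\rangle=\langle\ti w^s_{\un i}|f\varphi\rangle$) yields $\langle R^{\un i}_\nu[\ti w^s_{\un i}f]|\varphi\rangle=\langle f\cdot R^{\un i}_\nu[\ti w^s_{\un i}]|\varphi\rangle$.

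I do not expect a genuine obstacle here; the content is essentially the admissibility check for $ff^s_{\un i}$ together with the bookkeeping above. The one point to stay alert to is that the ``$\cdot$''-product in the counterterms of~\eqref{fix sub} is not the ordinary product of a distribution with a smooth function, so one should not try to deduce the claim from the one-dimensional operators commuting with smooth multipliers (for $r_\nu$ this would not even be correct). The actual reason the commutation holds is structural: $R^{\un i}_1$ and $R^{\un i}_\nu$ never feed $f$ into $r_1$ or $r_{\nu^{\un i}_g}$, and multiplication by a smooth function commutes with the restrictions $\delta_{\m K\un i}$ and the pushforwards $(p_{\m K\un i})_*$ out of which that product is built.
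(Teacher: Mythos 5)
Your proposal is correct and matches the paper's argument: the paper's proof is simply the observation that the regular part of the density is treated as a test function (i.e.\ carried along, never fed into $r_1$ or $r_{\nu}$), which is exactly your structural point that $f f^s_{\un i}$ replaces $f^s_{\un i}$ as the regular part. Your admissibility check and the cross-check through \eqref{fix sub} only make explicit what the paper declares ``clear from the definition''.
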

\begin{proof}
 Clear from the definition of both operators. The regular part of the density is treated as a test function, the same happens in the definition of multiplication of densities by smooth functions.     
\end{proof}

Finally, we are able to state a solution of the renormalization problem.

\begin{defn}[Renormalized Feynman rules]
Let $R$ denote one of the renormalization operators $R_1$ or $R_{\nu}$ on a wonderful model $(Y, \beta)$ for the divergent arrangement of an at most logarithmic graph $G$. 
Define the renormalized Feynman distribution by 
\begin{align*}
 \mathscr R[v_G] := \beta_* R[\ti w_G]|_{s=1}. 
\end{align*}
Then the \textit{renormalized Feynman rules} are given by the map
\begin{equation*}
 \Phi_R:G \longmapsto (X^G,\mathscr R(v_G)).
\end{equation*}
The pair $(X^G,\mathscr R(v_G))$ can now be evaluated at $\varphi \in \m D(X^G)$,
\begin{align*}
 \left( \text{eval}_{\varphi}\circ \Phi_R \right) (G) & = \langle \mathscr R[v_G] | \varphi \rangle= \langle \beta_* R[\ti w_G] |_{s=1} \mid \varphi \rangle \\
& = \langle R[\ti w_G] |_{s=1} \mid  \beta^* \varphi \rangle. 
\end{align*}
\end{defn}

\noindent To carry out the evaluation at $\varphi$ we choose a partition of unity $\{\chi_{\un i}\}_{\un i \in \{(\m N,B)\}}$ on $Y$, subordinate to the covering $\{U_{\un i}\}_{\un i \in \{(\m N.B)\}}$. Write $\pi_{\un i}$ for $\chi_{\un i} \circ \kappa_{\un i}^{-1}$. Then
\begin{equation*}
 \big\langle R[\ti w^s_G] |_{s=1} \mid \beta^* \varphi \big\rangle = \sum_{\un i} \big\langle \pi_{\un i} (R[\ti w^s_G])_{\un i}|_{s=1} \mid  \varphi \circ \rho_{\un i} \big\rangle.
\end{equation*}
To see that this definition does not depend on the chosen partition of unity let $\{\chi_{\un j}'\}$ denote another partition, also subordinate to the $\{U_{\un j} \}_{\un j \in \{(\m N,B)\}}$. Then
\begin{align*}
  &\sum_{\un i} \big\langle \pi_{\un i} (R[\ti w_G])_{\un i}|_{s=1} \mid  \varphi \circ \rho_{\un i} \big\rangle \\
= &\sum_{\un i} \int_{\text{supp}(\pi_{\un i})} dx \sum_{\m K \subseteq \m N} (-1)^{|\m K|} u_{\m N} \nu^{\un i}_{\m K} \delta_{\m K \un i}[f_{\un i} (\varphi \circ \rho_{\un i}) \pi_{\un i} ] \\
  = &\sum_{\un i}  \int_{ \text{supp}(\pi_{\un i})} dx\sum_{\m K \subseteq \m N} (-1)^{|\m K|} u_{\m N} \nu^{\un i}_{\m K} \delta_{\m K \un i}\left[ f_{\un i} (\varphi \circ \rho_{\un i}) \pi_{\un i} \sum_{\un j} \pi_{\un j}' \right] \\
  = &\sum_{\un i} \sum_{\un j} \int_{\text{supp}(\pi_{\un i}) \cap \text{supp}(\pi_{\un j}')}dx \sum_{\m K \subseteq \m N} (-1)^{|\m K|} u_{\m N} \nu^{\un j}_{\m K} \delta_{\m K \un j}\left[ f_{\un j} (\varphi \circ \rho_{\un j}) \pi_{\un i} \pi_{\un j}'\right] \\
 = &\sum_{\un j} \int_{\text{supp}(\pi_{\un j}')} dx\sum_{\m K \subseteq \m N} (-1)^{|\m K|} u_{\m N} \nu^{\un j}_{\m K} \delta_{\m K \un j}\left[ f_{\un j} (\varphi \circ \rho_{\un j}) \pi_{\un j}' \sum_{\un i} \pi_{\un i} \right] \\
 = &\sum_{\un j} \big\langle \pi_{\un j}' (R[\ti w_G])_{\un j}|_{s=1} \mid  \varphi \circ \rho_{\un j} \big\rangle.
\end{align*}

This finishes the process of \textit{wonderful renormalization}. From a mathematical point of view we are done, but for a physicist it is not clear yet that we have constructed a reasonable renormalization. In addition to producing finite distributions both schemes have to fulfill another condition that is dictated by physics. It is called the \textit{locality principle} (see \cite{eg}) and, roughly speaking, assures that the renormalized distributions still obey the laws of physics. There are various equivalent formulations of this; we will use a version for single graphs from \cite{bbk}, more about this in Section \ref{eg2}. 
Before that we turn our attention to the dependence of the operators $R$ on the \textit{renormalization points}, i.e.\ we study what happens if we change the collection of maps $\{\nu\}$ or the cutoff in the definition of $u^s_{\heartsuit}$, respectively.

\section{Renormalization group}\label{renormalization group}

In this section we take a closer look at the renormalized distribution densities. First we consider (local) subtraction at fixed conditions. The case of minimal subtraction then follows by similar arguments since it can be thought of as a ``non-smooth'' version of the former. 
What happens if we change the cutoff functions in the definition of the operator $R_{\nu}$? Clearly, for primitive graphs the difference is a density supported on the exceptional divisor $\m E$, its push-forward to $X$ will then be supported on the origin.
To get an idea what happens in the general case it is useful to start with an example. 
Throughout this section we drop the index $\un i$ in all local expressions; to keep track of the counterterms in the renormalization operators we write $R^{\m N}_{\nu}$ for $R^{\un i}_{\nu}$. 

\begin{example}
 Let $G$ be the dunce cap graph (Figure \ref{dunce}). Locally (in $d=4$ dimensions and $\m N=\{g,G\}$, $g$ denoting the divergent fish subgraph) we have for $\varphi \in \mathcal D(\kappa (U) )$
 \begin{align*}
  \big\langle R^{\m N}_{\nu}[\tilde w^s] | \varphi \big\rangle &=\sum_{\m K \subseteq \m N} (-1)^{|\m K|} \langle \nu_{\m K} \cdot (\tilde w^s)_{\m E_{\m K}} | \varphi \rangle \\
  &=\langle \tilde w^s | \varphi \rangle - \big\langle (p_G)_*( \nu_{G} u_{\m N}^s |dx| ) | \delta_G[f^s\varphi] \big\rangle \\
  & \quad - \big\langle (p_g)_*( \nu_{g} u_{\m N}^s |dx| ) | \delta_g[f^s \varphi] \big\rangle \\
  & \quad + \big\langle (p_{g,G})_*( \nu_g \nu_{G} u_{\m N}^s |dx| ) | \delta_{g,G}[f^s \varphi] \big\rangle.
 \end{align*}
 
Changing the renormalization point $\{\nu\}$, by linearity the difference of the two renormalized expressions is again a sum of this form. However, it will contain a mixture of $\nu$ and $\nu'$ as renormalization points. But we can express the terms with $\nu'$ again by $\nu$-terms only and obtain a finite sum of $\nu$-renormalized expressions. Another way to see this is by Taylor expansion using the calculus of variations,
\begin{align*}
 \frac{d}{dt}\mid_{t=0} \big\langle R^{\m N}_{\nu + t \mu }[\tilde w^s] | \varphi \big\rangle & =  -\big\langle (p_g)_*(\mu_g u_{\m N}^s |dx|) | \delta_g[f^s\varphi] \big\rangle \\
 & \quad -\big\langle (p_G)_*(\mu_G u_{\m N}^s |dx|)  | \delta_G[f^s\varphi] \big\rangle \\
 & \quad + \big\langle (p_{g,G})_* \big(( \nu_g \mu_G + \nu_G \mu_g ) u_{\m N}^s |dx| \big)  | \delta_{g,G}[f^s\varphi] \big\rangle , \\
 \frac{d^2}{dt^2}\mid_{t=0} \big\langle R^{\m N}_{\nu + t \mu }[\tilde w^s] | \varphi \big\rangle & =  2 \big\langle (p_{g,G})_* ( \nu_g \nu_G  u_{\m N}^s |dx|)  | \delta_{g,G}[f^s\varphi] \big\rangle, \\
 \frac{d^k}{dt^k}\mid_{t=0} \big\langle R^{\m N}_{\nu + t \mu }[\tilde w^s] | \varphi \big\rangle & = 0 \quad\text{for all $k>2$.}
\end{align*}
Thus, for $\mu_{\gamma}:= \nu_{\gamma}' - \nu_{\gamma}$, $\gamma\in \{g,G\}$, using Lemma \ref{comm rules},
\begin{align*}
 \big\langle R^{\m N}_{\nu'}[\tilde w^s] -  R^{\m N}_{\nu}[\tilde w^s] | \varphi \big\rangle 
 =\,& -\big\langle R^{\{G\}}_{\nu_G}[  \mu_g \cdot (\tilde w^s)_{\m E_g} ] | \varphi \big \rangle - \big\langle R^{ \{g\} }_{\nu_g}[  \mu_G \cdot (\ti w^s)_{\m E_G} ] | \varphi  \big\rangle \\
 &  + \langle  \nu_g \nu_G  \cdot (\tilde w^s)_{\m E_{g,G}} | \varphi \rangle.
\end{align*}
We see, as expected, that the difference is a sum of densities supported on the components of the exceptional divisor, given by subsets of the nested set $\m N$. Since $\mu_{\gamma}=0$ for $x^{i_{\gamma}}_{\gamma}=0$, they are finite, except if a point approaches the intersection of two components $\m E_{\m J} \cap \m E_{\m K}$ for $\m J,\m K \subseteq \m N$. But in this case the necessary subtractions are already provided by the counterterms associated to the set $\m J \cup \m K$. 
\end{example}

\begin{prop}\label{rg prop}
 Let $\{\nu\}$ and $\{\nu'\}$ be two collections of renormalization points on a wonderful model for the divergent arrangement. 
 Locally in $U=U_{\m N, B}$ the difference between the operators $R_{ \nu'}$ and $R_{\nu}$ acting on $\tilde w^s$ is given by 
 \begin{equation*}
  (R_{\nu '} - R_{\nu})[\tilde w^s]\overset{loc.}{=}\sum_{\emptyset \neq \m K \subseteq \m N}(-1)^{|\m K|} R^{\m N\setminus \m K}_{\nu}[\mu_{\m K} \cdot (\ti w^s)_{\m E_{\m K}} ]
 \end{equation*}
with $R^{\emptyset}_{\nu}:= \text{id}_{ \mathcal{\tilde D}'(\kappa (U))  }$.
\end{prop}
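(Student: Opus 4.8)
The statement is purely local, so the whole argument will take place in the single chart domain $\kappa(U)$, $U=U_{\m N,B}$, with no patching needed. The plan is to start from the local formula \eqref{fix sub},
\begin{equation*}
 R^{\m N}_{\nu}[\tilde w^s]=\sum_{\m K\subseteq \m N}(-1)^{|\m K|}\,\nu_{\m K}\cdot(\tilde w^s)_{\m E_{\m K}},\qquad \nu_{\m K}=\prod_{\gamma\in\m K}\nu_{\gamma},
\end{equation*}
together with its analogue for $\nu'$, and to proceed in three steps: first expand $R^{\m N}_{\nu'}-R^{\m N}_{\nu}$ combinatorially in the differences $\mu_{\gamma}:=\nu'_{\gamma}-\nu_{\gamma}$; then regroup the resulting alternating sum into the claimed form using the commutation rules of Lemma \ref{comm rules}; and finally check that each summand $R^{\m N\setminus\m K}_{\nu}[\mu_{\m K}\cdot(\tilde w^s)_{\m E_{\m K}}]$ is a well-defined density.

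For the first step I would write $\nu'_{\gamma}=\nu_{\gamma}+\mu_{\gamma}$ and expand $\nu'_{\m K}=\prod_{\gamma\in\m K}(\nu_{\gamma}+\mu_{\gamma})=\sum_{\m L\subseteq\m K}\mu_{\m L}\,\nu_{\m K\setminus\m L}$, so that $\nu'_{\m K}-\nu_{\m K}=\sum_{\emptyset\neq\m L\subseteq\m K}\mu_{\m L}\,\nu_{\m K\setminus\m L}$. Substituting this into the difference of the two operators, swapping the order of the sums over $\m K$ and $\m L$, and reindexing with $\m J:=\m K\setminus\m L\subseteq\m N\setminus\m L$ (so that $\m K=\m L\cupdot\m J$ and $(-1)^{|\m K|}=(-1)^{|\m L|}(-1)^{|\m J|}$) would give
\begin{equation*}
 (R_{\nu'}-R_{\nu})[\tilde w^s]=\sum_{\emptyset\neq\m L\subseteq\m N}(-1)^{|\m L|}\sum_{\m J\subseteq\m N\setminus\m L}(-1)^{|\m J|}\,\nu_{\m J}\mu_{\m L}\cdot(\tilde w^s)_{\m E_{\m L\cup\m J}}.
\end{equation*}
The second step is then to recognize the inner sum over $\m J$ as $R^{\m N\setminus\m L}_{\nu}[\mu_{\m L}\cdot(\tilde w^s)_{\m E_{\m L}}]$. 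By Definition \ref{loc fs} this operator equals $\sum_{\m J\subseteq\m N\setminus\m L}(-1)^{|\m J|}\,\nu_{\m J}\cdot\bigl(\mu_{\m L}\cdot(\tilde w^s)_{\m E_{\m L}}\bigr)_{\m E_{\m J}}$, so it suffices to establish the identity $\bigl(\mu_{\m L}\cdot(\tilde w^s)_{\m E_{\m L}}\bigr)_{\m E_{\m J}}=\mu_{\m L}\cdot(\tilde w^s)_{\m E_{\m L\cup\m J}}$. This is where Lemma \ref{comm rules} enters: iterating it over the disjoint subsets $\m L,\m J$ of the nested set $\m N$ gives $p_{\m J}\circ p_{\m L}=p_{\m L\cup\m J}$ and $\delta_{\m J}\circ\delta_{\m L}=\delta_{\m L\cup\m J}$, and restricting $f^s$ in stages gives $(f^s|_{\m E_{\m L}})|_{\m E_{\m J}}=f^s|_{\m E_{\m L\cup\m J}}$. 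The only thing to be careful about is that the prefactor $\mu_{\m L}=\prod_{\gamma\in\m L}\mu_{\gamma}$ survives the further $\m J$-restriction unchanged; this holds because the marking condition $p(b^{i_{\eta}}_{e_{\eta}})=\eta$ says precisely that $\eta$ is the minimal element of $\m N\cup\{G\}$ containing the marked edge $e_{\eta}$, so any $\gamma\in\m N$ with $e_{\eta}\in E(\gamma)$ satisfies $\gamma>\eta$ (as $\gamma\neq\eta$ for $\gamma\in\m L$, $\eta\in\m J$), hence $e_{\eta}\in E(\m N_{<\gamma})$ and $\mu_{\gamma}$ does not depend on the coordinate $x^{i_{\eta}}_{\eta}$.

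The remaining point — and the one I expect to be the real obstacle — is to verify that $R^{\m N\setminus\m L}_{\nu}$ can legitimately be applied to $\mu_{\m L}\cdot(\tilde w^s)_{\m E_{\m L}}$, i.e.\ that this density has the structural form demanded in Definition \ref{loc fs}: a regular part which is smooth in the marked coordinates $x^{i_{\eta}}_{\eta}$, $\eta\in\m N\setminus\m L$, multiplied by the appropriate product of $u^s_{\eta}$'s and $|dx|$. Here one invokes the last paragraph of the proof of Proposition \ref{pullback} — $f^s$ is smooth in every marked coordinate, and this persists after the $\m L$-restriction, restriction to a coordinate subspace preserving smoothness — together with the fact that each $\mu_{\gamma}$, $\gamma\in\m L$, is smooth by hypothesis and, by the block-disjointness observation above, does not involve any of the coordinates marked for $\m N\setminus\m L$. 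This is essentially the same bookkeeping as in the inductive proof that $R_{\nu}[\tilde w^s]$ is finite, and I would expect to reuse that induction almost verbatim. Granting it, the regrouped sum from the first two steps is exactly
\begin{equation*}
 (R_{\nu'}-R_{\nu})[\tilde w^s]\overset{loc.}{=}\sum_{\emptyset\neq\m K\subseteq\m N}(-1)^{|\m K|}\,R^{\m N\setminus\m K}_{\nu}[\mu_{\m K}\cdot(\tilde w^s)_{\m E_{\m K}}],
\end{equation*}
with the term $\m K=\m N$ reproducing the top counterterm $\mu_{\m N}\cdot(\tilde w^s)_{\m E_{\m N}}$ via the convention $R^{\emptyset}_{\nu}=\text{id}$ and $\nu_{\emptyset}=1$, as claimed.
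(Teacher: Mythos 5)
Your proposal is correct, but it is organized differently from the paper's proof. The paper argues by induction on $|\m N|$: assuming the formula for a nested set $\m N$, it adjoins a minimal-type element $h$, splits the new counterterms into the parts $B_1,B_2$ according to whether $\nu$ or $\mu$ carries the index $h$, and reassembles term by term — the binomial identity $\prod_i(a_i+b_i)-\prod_i a_i=\sum_{\emptyset\neq J}b_Ja_{J^c}$ appears only inside the induction step. You instead apply that same expansion once to the full product $\nu'_{\m K}=\prod_{\gamma\in\m K}(\nu_\gamma+\mu_\gamma)$, swap the two sums, and reindex $\m K=\m L\cupdot\m J$; the inner sum over $\m J$ is then recognized as $R^{\m N\setminus\m L}_{\nu}$ applied to $\mu_{\m L}\cdot(\ti w^s)_{\m E_{\m L}}$ via the identity $\bigl(\mu_{\m L}\cdot(\ti w^s)_{\m E_{\m L}}\bigr)_{\m E_{\m J}}=\mu_{\m L}\cdot(\ti w^s)_{\m E_{\m L\cup\m J}}$, which is exactly the iterated form of Lemma \ref{comm rules} that the paper also uses (implicitly, when it rewrites $R^{\m N'\setminus\m L}_{\nu}[\mu_{\m L}\cdot(\ti w^s)_{\m E_{\m L}}]$ as $\sum_{\m I}(-1)^{|\m I|}\mu_{\m L}\nu_{\m I}\cdot(\ti w^s)_{\m E_{\m L\cup\m I}}$). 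Your justification that $\mu_{\m L}$ is unaffected by the $\m J$-restriction — the marked coordinate $x^{i_\eta}_\eta$ corresponds to an edge of $\eta\setminus\m N_{<\eta}$ and hence never enters the argument list of $\nu_\gamma$ for $\gamma\neq\eta$ — is the correct and slightly more explicit version of what the paper buries in the minimality-of-$h$ assumption. The direct expansion buys a shorter and more transparent derivation of the same formula; the paper's induction buys, as a byproduct, the running verification that each regrouped block is finite (because the counterterms for the not-yet-renormalized divergence of the newly added $h$ are exactly the terms being absorbed). Your closing remark correctly identifies that this finiteness bookkeeping — resting on $\mu_\gamma|_{x^{i_\gamma}_\gamma=0}=0$ damping the $\m K$-poles and on $R^{\m N\setminus\m K}_{\nu}$ handling the rest — is the part that must be supplied separately in the direct approach, and it does go through as you indicate.
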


\begin{proof}
 Induction on $n=|\m N|$. The statement holds in the cases $n=1$ and $2$ (see example above). Let $\m N$ be a nested set of cardinality $n$ and $h \notin \m N$ an additional divergent subgraph such that $\m N'=\m N \cup \{h\}$ is also nested. For $\m K\subseteq \m N$ set $\m K':=\m K \cup \{h\}$.

 \begin{align*}
  \big(R^{\m N'}_{\nu '} - R^{\m N'}_{\nu}\big)[\tilde w^s] &=\sum_{\emptyset \neq \m K \subseteq \m N}(-1)^{|\m K|} (\nu_{\m K}' -  \nu_{\m K}) \cdot ( \tilde w^s  )_{\m E_{\m K}} \\ 
  & \quad + \sum_{\m K \subseteq \m N}(-1)^{|\m K'|}  ( \nu_{\m K'}' -  \nu_{\m K'})\cdot ( \tilde w^s )_{\m E_{\m K'}} \\
  & =: A +B.
 \end{align*}
 By induction hypothesis
\begin{equation*}
A = \sum_{\emptyset \neq \m K \subseteq \m N} (-1)^{|\m K|} R^{\m N\setminus \m K}_{\nu} [ \mu_{\m K} \cdot (\tilde w^s )_{\m E_{\m K}} ]. 
\end{equation*}
 Using 
 \begin{equation*}
 \prod_{i=1}^n (a_i + b_i) - \prod_{i=1}^n a_i=   \sum_{\emptyset \neq J \subseteq \set{n} } b_J a_{\set{n} \setminus J} 
 \end{equation*}
 we expand $B$ into two parts, depending on whether $\nu$ or $\mu$ carries an index~$h$,
\begin{align*}
 B = B_1 + B_2 = &\sum_{\m K\subseteq \m N} (-1)^{|\m K'|}    \sum_{\m J \subseteq \m K} \mu_{\m J'}  \nu_{\m K \setminus \m J} \cdot (\tilde w^s)_{\m E_{\m K'}} \\
  &+ \sum_{\emptyset \neq \m K\subseteq \m N} (-1)^{|\m K'|}  \sum_{\emptyset \neq \m J\subseteq \m K}  \mu_{\m J} \nu_{\m K'\setminus \m J}   \cdot (\ti w^s)_{\m E_{\m K'}}.
\end{align*}
Note that in $A$ all densities $\mu_{\m L} \cdot (\tilde w^s )_{\m E_{\m L}}$ have an additional, not yet renormalized divergence corresponding to the subgraph $h \in \m N'$. In order to renormalize them we have to add the counterterms associated to $h$, i.e.\ all terms in $B$ containing $\nu_h$. For non-empty $\m L\subseteq \m N$ fixed
\begin{align*}
&(-1)^{|\m L|} R^{\m N\setminus \m L}_{\nu} [ \mu_{\m L} \cdot (\tilde w^s)_{\m E_{\m L}} ] + \sum_{\m L \subseteq \m J \subseteq \m N} (-1)^{|\m J'|}  \mu_{\m L} \nu_{\m J'\setminus \m L } \cdot (\tilde w^s)_{\m E_{\m J'}}\\
=\,& (-1)^{|\m L|}  \sum_{\m I \subseteq \m N\setminus \m L} (-1)^{|\m I|} \big( \mu_{\m L} \nu_{\m I} \cdot (\ti w^s)_{\m E_{\m L\cup \m I}} - \mu_{\m L} \nu_{\m I'} \cdot (\ti w^s)_{\m E_{\m L\cup \m I'}} \big) \\
=\,& (-1)^{|\m L|} \sum_{\m I \subseteq \m N'\setminus \m L} (-1)^{|\m I|}  \mu_{\m L} \nu_{\m I} \cdot (\ti w^s)_{\m E_{\m L\cup \m I}}\\
=\,& (-1)^{|\m L|} R^{\m N'\setminus \m L}_{\nu} [ \mu_{\m L} \cdot (\ti w^s)_{\m E_{\m L}} ]
\end{align*}
is then a finite expression. Doing this for every non-empty $\m L\subseteq \m N$ covers the whole sum $B_2$ because every term $\mu_{\m L} \nu_{\m I'} \cdot (\ti w^s)_{\m E_{\m L \cup \m I'}}$ appears exactly once and the signs match since
\begin{align*}
  \sum_{ \m I \subseteq \m N \setminus \m L} (-1)^{|\m L|+|\m I|+1} \mu_{\m L} \nu_{\m I'} \cdot (\ti w^s)_{\m E_{\m L\cup \m I'}}
 =  \sum_{\emptyset \neq \m K\subseteq \m N, \m L \subseteq \m K } (-1)^{|\m K'|}  \mu_{\m L} \nu_{\m K'\setminus \m L}   \cdot (\ti w^s)_{\m E_{\m K'}}.
\end{align*}
The same argumentation works for $B_1$. Fix $\m L\subseteq \m N$ and consider all terms in $B_1$ containing $\mu_{\m L'}$:
\begin{align*}
 \sum_{\m K\subseteq \m N, \m L \subseteq \m K} (-1)^{|\m K'|}    \mu_{\m L'}  \nu_{\m K \setminus \m L} \cdot (\tilde w^s)_{\m E_{\m K'}} &= \sum_{\m I\subseteq \m N \setminus \m L } (-1)^{|\m L'|+|\m I|}    \mu_{\m L'}  \nu_{\m I} \cdot (\tilde w^s)_{\m E_{\m L' \cup \m I}} \\
 &=  (-1)^{ |\m L'| } R_{\nu}^{\m N' \setminus \m L'} [\mu_{\m L'}\cdot (\ti w^s)_{\m E_{\m L'}}].
\end{align*}
Putting everything together we have shown that locally the difference be\-tween two renormalization operators $R_{\nu '}$ and $R_{\nu}$ is expressible as a sum of densities, supported on the components $\m E_{\m K}$ for $\m K \subseteq \m N$ and renormalized in the remaining directions according to subsets of $\m N \setminus \m K$.
\end{proof}

This is a nice formula showing that a \textit{finite renormalization} (i.e.\ a change of renormalization points) amounts to adding a density supported on the exceptional divisor, as expected from the toy model case on $\mathbb{R}$ (or $\mathbb{R}^d$ for homogeneous distributions). But we can do even better and physics tells us what to expect: The Hopf algebraic formulation of the renormalization group predicts that the densities appearing in $(R_{\nu'} - R_{\nu})[\ti w^s]$ should correspond to graphs showing up in the coproduct of $G$ (for more on this see \cite{ck} and \cite{dk}). In the local formulation presented here the coproduct translates into local contractions, i.e.\ contractions with respect to nested sets $\m N$. 

\begin{example}
Turning back to the example at the beginning of this section where we calculated $\langle R^{\m N}_{ \nu'}[\tilde w^s] -  R^{\m N}_{\nu}[\tilde w^s] | \varphi \rangle $ to be 
\begin{align*}
 -\big\langle R^{ \{G\} }_{\nu_G}[ \mu_g \cdot (\tilde w^s )_{\m E_g} ] | \varphi \big\rangle - \big\langle R^{ \{g\} }_{\nu_g}[  \mu_G \cdot (\tilde w^s)_{\m E_G} ] | \varphi \big \rangle + \langle \nu_g \nu_G \cdot ( \tilde w^s )_{\m E_{g,G}}  | \varphi \rangle,
\end{align*}
we now examine the individual terms in more detail. Eventually we are interested in the pairing with test functions $\varphi$ that are pullbacks of test functions on $X$. Recall that locally $\beta$ is given by the scaling map $\rho$. In $\kappa( U )=X=M^2$, corresponding to $\m N=\{g,G\}$, an adapted spanning tree chosen as in the example in Section \ref{wm revis} and marked elements $x_G, y_g \ (x,y \in \mb R^4)$,
\begin{align*}
   &\quad\ \big\langle R^{ \{G\} }_{\nu_G} [  \mu_g \cdot (\tilde w^s)_{\m E_g} ] | \varphi  \big\rangle\\
   & = \int_{M^2}  d^4xd^4y \, \mu_g |x_G|^{7-8s} |y_g|^{3-4s} \left( \delta_g[f^s \varphi] - \nu_G \delta_{g,G}[f^s \varphi] \right) \\
 & = c_g \int_{M} d^4x \, |x_G|^{7-8s} \left( \frac{\psi(x_G \hat x,0)}{\hat x^{4s} } - \frac{\nu_G(x) \psi(0,0)}{\hat x^{4s} } \right) \\
 & = c_g \big\langle R^{ \{G/g \} }_{\nu_G} [ \tilde w^s_{G/g} ] | \delta_g[\varphi] \big\rangle.
\end{align*}
Here $\ti w^s_{G/ g}$ is the density associated to the contracted graph $G/g$ (more precise, its local expression in $U_{\m N',B'}$ with $B'$ spanned by $x=\{x^i_e\}_{e \in E(t) \setminus E(g) }$ and $\m N'=\{G/g\}$ - the exponent $7-8s$ in $|x_G|$ does not match but we neglect this little technical problem here; see below for the general argument). The coefficient $c_g$ is given by
\begin{align*}
 c_g=  \int_{M} d^4y \, \frac{|y_g|^{3-4s}}{\hat y^{4s}} \big( \nu_g'(y) - \nu_g(y) \big) = \big\langle R^{ \{g\} }_{\nu_g} [ \tilde w^s_g ] | \nu_g' \big\rangle
\end{align*}
because $\nu_g'|_{y_g=0}=1$. In the same manner we calculate 
\begin{align*}
  \big\langle R^{ \{g\} }_{\nu_g}[  \mu_G \cdot (\tilde w^s)_{\m E_G} ] | \varphi  \big\rangle = c_G \psi(0,0)=c_G \langle \delta_G | \varphi \rangle
\end{align*}
 with  
 \begin{align*}
  c_G &= \int_{M^2} d^4xd^4y \, \mu_G |x_G|^{7-8s} |y_g|^{3-4s} \left(  \frac{1}{\hat x^{2s} \hat y^{4s} (\hat x + y_g \hat y )^{2s}} - \frac{\nu_g(y)}{\hat x^{4s} \hat y^{4s} } \right) \\
  & = \big\langle R^{\m N}_{\nu_g, \nu_G } \left[  \tilde w_G^s\right] | \nu_G'  \big\rangle.
 \end{align*}
The last term $ \langle [ \nu_g \nu_G \cdot \tilde w^s ]_{\m E_{g,G}}  | \varphi \rangle$ evaluates to $c_{g,G} \psi(0,0)$ with 
\begin{align*}
 c_{g,G} = \big\langle R^{ \{g\} }_{\nu_g}[ \tilde w_g^s ] | \nu_g' \big\rangle \big\langle R^{ \{G\} }_{\nu_G}[  \tilde w_{G/g}^s ] | \nu_G' \big\rangle.  
\end{align*}
\end{example}

To formulate this in the general case we need to define a contraction operation $\ds$ not only on single graphs but also on nested sets.  
\begin{defn}
 Let $\m N $ be a nested set for some building set $\m B \subseteq \m D$ and let $\m J \subseteq \m N$. The contraction $\m N \ds \m J$ is defined as the poset with underlying set
 \begin{equation*}
  \m N \ds \m J := \{ g\ds \m J \mid g \in \m N \},
 \end{equation*}
partially ordered by inclusion. Since the inclusion operation differs from the one in $\m N$ (contracted graphs may not be subgraphs of $G$ anymore, although we can identify them with subgraphs via their edge sets), we denote this partial order by $\sqsubseteq$. 
\end{defn}

The partial order $\sqsubseteq$ is most easily understood by looking at the Hasse diagram of $\m N$. Replace every $g \in \m N$ by $g\ds \m J$, remove all lines that connect elements of $\m J$ to ``above'' and draw a new line from $o$ to every element that became disconnected in the process. Note that in particular all elements of $\m J$ have become maximal in $\m N\ds \m J$.
In addition, we denote by abuse of notation the corresponding contractions on adapted spanning trees by the same symbol, i.e.\ we define 
\begin{align*}
 t \ds \m J &:= t / t_{\m J} \quad\text{where } t_{\m J}:= \bigcup_{\gamma \in \m J} t_{\gamma}, \\
 t_g \ds \m J & := t_g \ds \m J_{<g}.
\end{align*}

\begin{example}
 Let $G$ be the graph shown in Figure \ref{contr1}. Denote by $\gamma_1$, $\gamma_2$ and $\gamma_3$ the three fish subgraphs from left to right, and let $g$ and $h$ be the full subgraphs on the vertex sets $V(g)=\{0,1,2,3\}$ and $V(h)=\{2,3,4,5\}$. In Figure \ref{contr1} we depict an $I(\m D(G))$-nested set $\m N=\{ \gamma_1, \gamma_2, \gamma_3, g , G\}$ and the poset $(\m N \ds \m J,\sqsubseteq)$ for $\m J=\{\gamma_1,\gamma_3,g\}$.
 \end{example}
 \begin{figure}[h]
 \centering
\begin{tikzpicture}[node distance=1cm and 1cm]
\coordinate[label=above:$0$] (v1);
\coordinate[below=of v1,label=below:$1$] (v2);
\coordinate[right=of v2,label=below:$2$] (v3);
\coordinate[above=of v3,label=above:$3$] (v4);
\coordinate[right=of v3,label=below:$5$] (v6);
\coordinate[right=of v4,label=above:$4$] (v5);

\draw (v2) -- (v3);
\draw (v1) -- (v4);
\draw (v3) to (v6);
\draw (v4) to (v5);
\draw (v1) to[out=-135,in=135] (v2);
\draw (v1) to[out=-45,in=45] (v2);
\draw (v4) to[out=-135,in=135] (v3);
\draw (v4) to[out=-45,in=45] (v3);
\draw (v5) to[out=-135,in=135] (v6);
\draw (v5) to[out=-45,in=45] (v6);
\fill[black] (v1) circle (.05cm);
\fill[black] (v2) circle (.05cm);
\fill[black] (v3) circle (.05cm);
\fill[black] (v4) circle (.05cm);
\fill[black] (v5) circle (.05cm);
\fill[black] (v6) circle (.05cm);
\end{tikzpicture}
\hspace{1cm}
\begin{tikzpicture}[node distance=1cm and 1cm]
\coordinate[label=below:$o$] (v0);
\coordinate[above left=of v0,label=left:$\gamma_1$] (v1);
\coordinate[above=of v0,label=left:$\gamma_2$] (v2);
\coordinate[above right=of v0,label=left:$\gamma_3$] (v3);
\coordinate[above right=of v1,xshift=-.5cm,label=left:$g$] (v4);

\coordinate[above =of v0,yshift=2cm,label=left:$G$] (v6);
\draw (v0) to (v1);
\draw (v0) to (v2);
\draw (v0) to (v3);
\draw (v1) to (v4);
\draw (v2) to (v4);
\draw (v4) to (v6);
\draw (v3) to (v6);
\fill[black] (v0) circle (.05cm);
\fill[black] (v1) circle (.05cm);
\fill[black] (v2) circle (.05cm);
\fill[black] (v3) circle (.05cm);
\fill[black] (v4) circle (.05cm);
\fill[black] (v6) circle (.05cm);
\end{tikzpicture}
 \hspace{1cm}
\begin{tikzpicture}[node distance=1cm and 1cm]
\coordinate[label=below:$o$] (v0);
\coordinate[left=of v0,xshift=0.6cm] (l);
\coordinate[right=of v0,xshift=-0.6cm] (r);
\coordinate[above left=of l,label=left:$\gamma_1$] (v1);
\coordinate[above=of l,label=left:$\gamma_2$] (v2);
\coordinate[above =of r,label=left:$\gamma_3$] (v3);
\coordinate[above right=of r,label=above:$G \ds \m J$] (v4);
\coordinate[above =of v2,label=left:$g \ds \m J$] (v5);
\draw (v0) to (v1);
\draw (v0) to (v2);
\draw (v0) to (v3);
\draw (v0) to (v4);
\draw (v2) to (v5);
\fill[black] (v0) circle (.05cm);
\fill[black] (v1) circle (.05cm);
\fill[black] (v2) circle (.05cm);
\fill[black] (v3) circle (.05cm);
\fill[black] (v4) circle (.05cm);
\fill[black] (v5) circle (.05cm);
\end{tikzpicture}
\caption{The graph $G$ and Hasse diagrams for $\m N$ and $\m N \ds \m J$}\label{contr1}
\end{figure}

\begin{thm}\label{rg thm}
Consider renormalization operators $R_{\nu}$ for two sets of subtraction points $\{\nu'\}$ and $\{\nu\}$. Let $\m N$ be nested for a building set $\m B \subseteq \m D$ and $B$ marked accordingly. Then the local expression for the difference $(R_{\nu'}-R_{\nu})[\ti w^s]$ applied on a test function $\varphi=\beta^*\psi $ for $\psi \in \m D(\beta(U))$ is given by
\begin{equation} \label{rg1}
 \big\langle (R^{\m N}_{\nu'}-R^{\m N}_{\nu})[\ti w^s] | \varphi \big\rangle = \sum_{\emptyset \neq \m K \subseteq \m N} c_{\m K} \big\langle R_{\nu}[\ti w_{G \ds \m K}^s] | \delta_{\m K}[\varphi] \big\rangle 
\end{equation}
where
\begin{equation} \label{rg2}
 c_{\m K}= \prod_{\gamma \in \m K} \big\langle R_{\nu}[\ti w^s_{\gamma \ds \m K}] | \nu_{\gamma}' \big\rangle 
\end{equation}
and $\langle R_{\nu}[\ti w_{ \emptyset }^s] | \delta_{G}[\varphi] \rangle$ is to be understood as $  \langle \delta|\psi \rangle=\psi (0)$.

Define $\m H:=\cup_{\gamma \in \m K}\m H_{\gamma}$ with $\m H_{\gamma}:=\{ h \in \m N \mid h \ds \m K \in (\m N \ds \m K)_{\sqsubset \gamma \ds \m K} \}$. Then the indices for $R_{\nu}$ in \eqref{rg1} are given by $\m N \setminus (\m K \cup \m H)$. Likewise, in \eqref{rg2} $\m H_{\gamma}\cup \{\gamma\}$ is the index in the factor associated to $\gamma \in \m K$.
\end{thm}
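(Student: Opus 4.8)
The plan is to deduce the formula from the local expansion already established in Proposition \ref{rg prop},
\[
 (R_{\nu'}-R_\nu)[\ti w^s]\overset{loc.}{=}\sum_{\emptyset\neq\m K\subseteq\m N}(-1)^{|\m K|}R^{\m N\setminus\m K}_\nu[\mu_{\m K}\cdot(\ti w^s)_{\m E_{\m K}}],
\]
and to rewrite each summand, after pairing with $\varphi=\beta^*\psi$, as one of the terms on the right-hand side of \eqref{rg1}. Fix $\emptyset\neq\m K\subseteq\m N$. The geometric input is that restricting to $\m E_{\m K}=\bigcap_{\gamma\in\m K}\m E_\gamma$ amounts to setting the marked coordinates $x_\gamma^{i_\gamma}=0$ for $\gamma\in\m K$, and because $t$ is $\m D$-adapted the edge set $E(t)$ splits accordingly: for each $\gamma\in\m K$ one obtains the block $E(t_\gamma)\setminus E(t_{\m K_{<\gamma}})$, together with one residual block of the edges not contained in any element of $\m K$. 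The first step is to prove the factorization
\[
 \delta_{\m K}[f^s]=\Bigl(\prod_{\gamma\in\m K}f^s_{\gamma\ds\m K}\Bigr)\,f^s_{G\ds\m K},
\]
where each $f^s_{\gamma\ds\m K}$ depends only on the variables of the corresponding block and is precisely the regular part of the pullback of $\ti v^s_{\gamma\ds\m K}$ onto the wonderful model of $\gamma\ds\m K$, and likewise for $f^s_{G\ds\m K}$; this is the detailed version of the assertion invoked in the proof of Theorem \ref{laurent}, part 3. Concretely one checks, propagator by propagator, that a $\triangle^s$-factor of $G$ either becomes independent of the $\gamma$-block variables — and is then a propagator of $G\ds\m K$ — or, once the marked coordinates are set to zero, reproduces a propagator of exactly one contracted graph $\gamma\ds\m K$; the measure $|dx|$ and the monomial $u^s_{\m N}$ split along the same decomposition.

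With the factorization in hand I would carry out the decoupled integrations, proceeding by induction on $|\m N|$ as in Proposition \ref{rg prop}. Since $\mu_{\m K}=\prod_{\gamma\in\m K}\mu_\gamma$ with $\mu_\gamma$ supported on the $\gamma$-block, and since exactly those counterterms of $R^{\m N\setminus\m K}_\nu$ whose index $h$ sits, after contraction, below $\gamma\ds\m K$ in $(\m N\ds\m K,\sqsubseteq)$ act only on that block, the integral over the $\gamma$-block factors off and equals a renormalized density of the contracted graph $\gamma\ds\m K$, renormalized with respect to the index $\m H_\gamma\cup\{\gamma\}$, paired against $\mu_\gamma$; using $\nu_\gamma|_{x_\gamma^{i_\gamma}=0}=1$ one may replace $\mu_\gamma$ by $\nu'_\gamma$, producing the factor $\langle R_\nu[\ti w^s_{\gamma\ds\m K}]|\nu'_\gamma\rangle$ of $c_{\m K}$. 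The residual integral, over the variables not inside any element of $\m K$, is $\langle R_\nu[\ti w^s_{G\ds\m K}]|\delta_{\m K}[\varphi]\rangle$ with index $\m N\setminus(\m K\cup\m H)$, since $\varphi=\beta^*\psi$ restricts along $\m E_{\m K}$ to the pullback of $\psi$ onto the wonderful model of $G\ds\m K$ and Lemma \ref{comm rules} identifies the iterated restrictions correctly. Finally one sums over $\m K$ and matches the sign $(-1)^{|\m K|}$ coming from Proposition \ref{rg prop} against the signs produced when the counterterms carrying a $\nu'$ are re-expressed through $\nu$-counterterms, exactly as in the worked dunce's cap example, to arrive at \eqref{rg1}--\eqref{rg2}.

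The hard part is the combinatorial bookkeeping, which enters in two places. First, verifying the factorization $\delta_{\m K}[f^s]=\prod_\gamma f^s_{\gamma\ds\m K}\cdot f^s_{G\ds\m K}$ at the level of individual propagators, where the hypotheses that $G$ is at most logarithmic and that $t$ is $\m D$-adapted are used essentially to guarantee that the blocks are genuinely disjoint and that the grouped propagators assemble into the Feynman rule of the appropriate contracted graph. Second, pinning down how the leftover renormalization index $\m N\setminus\m K$ is distributed: one must show, by inspecting the Hasse diagram of $(\m N\ds\m K,\sqsubseteq)$ obtained from that of $\m N$ as in the discussion preceding Figure \ref{contr1}, that each $h\in\m N$ with $h\ds\m K\in(\m N\ds\m K)_{\sqsubset\gamma\ds\m K}$ contributes its counterterm to the $\gamma$-block while all remaining elements renormalize the $G\ds\m K$-block, and that in the sign reorganization no term is lost or counted twice. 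A minor but genuine technical point, already flagged in the example, is that the scaling exponent of the marked coordinate on a contracted graph is inherited from $G$ rather than being the intrinsic one; this is harmless and is handled by a graph-dependent reparametrization of $s$, which affects neither the combinatorics nor the behaviour near $s=1$.
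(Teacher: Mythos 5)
Your proposal is correct and follows essentially the same route as the paper: starting from Proposition \ref{rg prop}, establishing the factorization $\delta_{\m J}[f^s]=\prod_{\gamma\in\m J\cup\{G\}}f^s_{\gamma\ds\m J}$ propagator by propagator, then splitting the counterterm sum according to which elements of $\m N\setminus\m K$ contract below a given $\gamma\ds\m K$ so that Fubini factors the integral into the blocks $\m H_\gamma\cup\{\gamma\}$ and the residual $G\ds\m K$ piece, and finally handling the mismatched exponent by reparametrizing $s$. The only cosmetic difference is that you organize the block-by-block factorization as an induction on $|\m N|$ whereas the paper iterates over the elements $g_1,\ldots,g_n$ of $\m K$ (changing the integration domain to $V_1\times V_2$ up to measure zero), which amounts to the same argument.
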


\begin{proof}
 Using Proposition \ref{rg prop} we examine all terms in $\langle (R^{ \m N}_{\nu'}-R^{ \m N}_{\nu})[\ti w^s] | \varphi \rangle $ separately. The proof consists of two steps. First we study how $\delta_\m K$ acts on the maps $f$ and $\varphi=\beta^*\psi$. This allows then in the second step to show that the integral arising in the evaluation of $\langle (R^{ \m N}_{\nu'}-R^{ \m N}_{\nu})[\ti w^s] | \varphi \rangle $ factorizes into a product of integrals according to \eqref{rg1} and \eqref{rg2}.
  
 Claim: For $\m J\subseteq \m N$ the map $\delta_{\m J}$ operates on $f$ and $\varphi=\beta^*\psi \overset{loc.}{=}\psi \circ \rho$ by
 \begin{align*}
 \varphi &\mapsto \delta_{\m J}[\varphi]=\varphi|_{x_{\m J}=0}, \\
f &\mapsto \delta_{\m J}[f]=\prod_{\gamma \in \m J\cup \{G\}}f_{\gamma \ds \m J}.  
 \end{align*}
Here $f_{g \ds \m J}$ is defined as follows: Contracting $t_g$ with respect to $\m J$ defines an adapted spanning tree for $g \ds \m J$ (contracting graphs in $\m N$ and $t$ accordingly does not change the properties of $t$ being spanning and adapted - cf.\ the construction in Proposition \ref{adsptr exists}). Define
\begin{equation}\label{x tilde g}
X^{g \ds \m J}:=\big\{ (x_{e_1},\ldots, x_{e_k}) \mid \{e_1,\ldots, e_k\}=E(t_g \ds \m J) \big\} 
\end{equation}
with adapted basis $B':=B|_{e \in E(t_g\ds \m J)}$. The set $\m N':={\m N\ds \m J}_{\sqsubseteq g \ds \m J}$ is nested for the building set $\m B':= \{ \gamma \ds \m J \mid \gamma \in \m B \text{ and } \gamma \leq g \}$ in the divergent arrangement of $g\ds \m J$. Mimicing the wonderful construction in this case, we obtain an open set $U_{\m N',B'}$ that is a local piece of a wonderful model for the graph $g \ds \m J$. The function $f^s_{g\ds \m J}$ is then the regular part of the pullback of $\ti v^s_{g\ds \m J}$ in this chart. The factor $f_{G \ds \m J}$ collects all the remaining parts and is defined in the same way, except for one special case: If $G$ does not lie in $\m N$, or even not in $\m D$ (locally in $U_{\m N,B}$ this is the same!), and $G \ds \m J$ is primitive, then $\m N'=\emptyset$ and we do not have a local model to pullback $\ti v^s_{G\ds \m J}$ onto. But in this case $v_{G\ds \m J}=f_{G\ds \m J}$ is already regular and no model is required. Also note that if $G \in {\m N}$, the operation $\
delta_G$ does not 
alter $f$ 
since it does not depend on the variable $x_G^{i_G}$. 

Recall that in coordinates given by an adapted spanning tree the distribution kernel $v$ is a product of factors $(y_e)^{2-d}$ with $e\in E(G)$ and 
\begin{equation*}
 y_e=\begin{cases}
      x_e & \text{ if $e$ is an edge of $t$,} \\
      \sum_{e' \in E(t_e)} \sigma_t(e') x_{e'} & \text{ if $e$ is an edge in $G\setminus t$}.
     \end{cases}
\end{equation*}
Moreover, the blow-down $\beta$ is locally given by the map $\rho=\rho_{\m N,B}$ that scales all $x_e$ with $e\in E(t_g)$ and $g\in {\m N}$ by $x_g^{i_g}$.
To prove the claimed properties of $\delta_{\m J}$ we argue in the same manner as in the proof of Theorem \ref{laurent}: 

1. Since $\varphi = \psi \circ \rho$, we have that $\delta_{\m J} [\varphi]$ is equivalent to $\varphi|_{ \{x_h^{i_h}=0\} }$ for $h$ in $\max {\m J}$, the set of maximal elements of ${\m J}$. This means that the resulting map only depends on the variables $x_e$ with $e\in E(t \cap {\m N}_{> \max \m J} )$. All other vectors are scaled by the $x^{i_h}_h$ and therefore vanish after $\delta_{\m J}$ is applied. Another way to put this is that $\delta_{\m J}[\varphi]$ depends only on the $x_e$ with $e\in E(t \ds {\m J})$. In particular, if $G\in {\m J}$ then $\delta_{\m J}[\varphi]$ is just a constant, $\delta_{\m J}[\varphi]=\langle \delta | \psi \rangle = \psi(0)$. 

2. For the second claim start with ${\m J}=\{g \}$ consisting only of a single subgraph $g\subsetneq G$. The part of $f$ that depends only on the vectors associated to edges of $g$ is unaffected by setting $x^{i_g}_g=0$ because all $x_e$ with $e\in E(t_g)$ get scaled and so the factor $x^{i_g}_g$ pulls out (it is already absorbed into the definition of $u_g^s$). On the other hand, the remaining part of $f$ depends on $x_e$ with $e\in E(t_g)$ only through special linear combinations. These linear combinations express vectors representing edges $e'$ that do not lie in $g$ but are connected to a vertex of $g$ such that $E( t_{e'}) \cap E(t_g ) \neq \emptyset $. They become independent of $x_e$ after setting $x_g^{i_g}$ to zero. Therefore, $\delta_{\m J}[f]$ splits into a product of two factors depending on the mutual disjoint sets of vectors $\{x_e\}_{e\in E(t_g)}$ or $\{x_e\}_{e\in E( t/t_g) }$, i.e.
\begin{equation*}
 \delta_g[f]= f_g f_{G/g} = f_{g\ds \m J}f_{G\ds \m J} .
\end{equation*}
Adding another graph $ h \neq G $ from $\m N$ to ${\m J}$ and using Lemma \ref{comm rules} we can express $\delta_{\m J}$ as 
\begin{equation*}
 \delta_{\m J}[f]= \delta^g_{g,h}[\delta_g[f] ]= \delta_h[ f_g f_{G/g} ].
\end{equation*}
There are three possible cases (due to Lemma \ref{irr} there cannot be two incomparable $g,h$ with non-empty overlap in any nested set): 
\begin{enumerate}
 \item 
$g$ and $h$ are incomparable. Then $f_g$ does not depend on any $x_e$ with $e \in E(t_h)$ and 
\begin{equation*}
 \delta_h[ f_g f_{G/g} ]=f_g \delta_h[f_{G/g} ] = f_{g\ds \m J} \delta_h[f_{G/g} ] .
\end{equation*}

\item
$h$ is contained in $g$; $h\subsetneq g$. Then all $\{x_e\}_{e \in E(t_h)}$ are scaled by $x_g^{i_g}$ and $f_{G/g}$ is independent of these. Thus, only $f_g$ is affected by contracting~$h$,
\begin{align*}
 \delta_h[f_g f_{G/g} ] = \delta_h [f_g] f_{G/g} = \delta_h [f_g] f_{G \ds \m J} .
\end{align*}

\item $h$ contains $g$; $g \subsetneq h$. Then all $\{x_e\}_{e \in E(t_g)}$ are scaled by $x_h^{i_h}$ and $f_g$ is not affected by setting $x_h^{i_h}=0$. Therefore,
\begin{equation*}
 \delta_h[f_g f_{G/g} ] =  f_g \delta_h [ f_{G/g} ] =  f_{g \ds \m J}  \delta_h [ f_{G / g} ].
 \end{equation*}
\end{enumerate}
In all three cases we argue like in the first step to carry out the operation of $\delta_h$ and conclude
\begin{equation*}
 \delta_{\m J}[f] = f_{g \ds \m J} f_{h \ds \m J} f_{G \ds \m J}.
\end{equation*}
For general $\m J \subseteq \m N$ we repeat this procedure for a finite number of steps to show 
\begin{equation*}
 \delta_{\m J}[f]= \prod_{\gamma \in \m J \cup \{G\}} f_{ \gamma \ds J}.
\end{equation*}

With the help of these two assertions we are now able to examine the integrals
\begin{equation}\label{rg int}
 \big\langle R^{\m N \setminus \m K}_{\nu}[ \mu_{\m K} \cdot (w^s)_{\m E_{\m K}} ] | \varphi \big\rangle = \int_{\kappa (U)} \!  dx u_{\m N}^s \mu_{\m K} \sum_{\m J \subseteq \m N \setminus \m K}(-1)^{|\m J|}  \nu_{\m J} \delta_{\m K \cup \m J}[f^s \varphi]
\end{equation}
in detail. Note that $f_{g\ds \m J}$ depends only on the variables $x_e$ associated to edges of $ E(t_g \setminus t_{h_1 \cup \cdots \cup h_k} )$ with $\{h_1,\ldots, h_k\}=\max \m J_{< g}$ (not on the marked element $x_g^{i_g}$ though!). This is exactly the set of coordinates on which the maps $\nu_g$ depend. Therefore divergences corresponding to elements $g \ds \m J \in \m N \ds \m J$ are also renormalized by the subtraction points $\nu_{g}$ associated to $g \in \m N$. 
To simplify notation, for $\m K\subseteq \m N$ write $\ti g$ for the $\m K$-contracted graph $g\ds \m K$. Let $\m K=\{g_1,\ldots, g_n\}$ (if $G\in \m K$ assume $g_n=G$) and define the subsets $\m H_i \subseteq \m N$ by 
\begin{equation*}
\m H_i :=\{  h \mid \ti h \sqsubset \ti g_i\}  \text{ for } i=1,\ldots, n.
\end{equation*}
We want to show that the integral 
\begin{equation*}
 \int_{\kappa (U)} dx \, u^s_\m N \prod_{i=1}^n (\nu_{g_i}'-\nu_{g_i})\sum_{\m J\subseteq \m N\setminus \m K} (-1)^{|\m J|} \nu_\m J  \prod_{\gamma \in \m K \cup \m J \cup \{G\} }f^s_{\ti \gamma \ds \m J} \ \delta_{\m K\cup \m J}[\varphi]
\end{equation*}
factorizes into a product of integrals according to \eqref{rg1} and \eqref{rg2}. To see this split the sum into two parts, the first one summing over subsets $\m I\subseteq \m N \setminus \m K$ that contain an element of $\m H_1$, i.e.\ $\m I\cap \m H_1\neq \emptyset$, the second one over subsets $\m J \subseteq \m N \setminus \m K$ with $\m J \cap \m H_1 = \emptyset$. The first sum can then be written as
\begin{equation*}
 \sum_{ \substack{ \m J\subseteq \m N \setminus \m K \\ \m J \cap \m H_1=\emptyset}} (-1)^{|\m J|} \sum_{\emptyset \neq \m L \subseteq \m H_1} (-1)^{|\m L|} \nu_\m J \nu_\m L \delta_{\m K\cup \m J \cup \m L}[f^s] \delta_{\m K \cup \m J \cup \m L}[\varphi].
\end{equation*}
We have $\delta_{\m K\cup \m J \cup \m L}[\varphi]=\delta_{\m K \cup \m J}[\varphi]$, because all $g \in \m L$ satisfy $\ti g \sqsubset \ti g_1$ and from this follows $g \leq g_1$ for the partial order on $\m N$. Therefore, all $g\in \m L$ are scaled by $x^{i_{g_1}}_{g_1}$ which is set to zero by $\delta_\m K$.
Again, since all elements of $\m L$ are smaller than $g_1$,
\begin{align*}
\delta_{\m K\cup \m J \cup \m L}[f^s]& = f_{\ti G \ds \m J\cup \m L } \prod_{\gamma \in (\m K \cup \m J \cup \m L) \setminus \{G\} } f_{\ti \gamma \ds \m J\cup \m L} \\
& = f_{\ti G \ds \m J} \prod_{\gamma \in \m K \setminus \{G\} }f_{\ti \gamma\ds \m J\cup \m L} \prod_{\xi \in \m J \setminus \{G\} }f_{\ti \xi \ds \m J\cup \m L} \prod_{ \eta \in \m L \setminus \{G\} }f_{\ti \eta \ds \m J\cup \m L}\\
& = f_{\ti G \ds \m J} \ f_{\ti g_1 \ds \m L} \prod_{\gamma \in \m K \setminus \{g_1,G\} }f_{\ti \gamma\ds \m J} \prod_{\xi \in \m J \setminus \{G\} }f_{\ti \xi\ds \m J} \prod_{\eta \in \m L \setminus \{G\} }f_{\ti \eta \ds \m L}  .
\end{align*}
In the last line we have used that $\ti g_1$ is immune to contraction by elements lying outside of $\m H_1$. 
Thus, the factor
\begin{align*}
 \sum_{\emptyset \neq \m L \subseteq \m H_1} (-1)^{|\m L|} \nu_\m L \ f_{\ti g_1 \ds \m L }\prod_{\eta \in \m L \setminus \{G\} }f^s_{\ti \eta\ds \m L}
\end{align*}
can be pulled out of the first sum. 
In the second sum over the subsets $\m J \subseteq \m N\setminus \m K$ with $\m J \cap \m H_1= \emptyset$ the factor $f^s_{\ti g_1}$ appears in every summand because $f^s_{\ti g_1}$ is not affected by $\delta_\m J$. Recall that $\delta_{\m K}[\varphi]$ depends only on the coordinates $\{x_e\}_{e\in E(t\ds \m K)}$ to conclude that 
 \begin{align*}
 & \int_{\kappa (U)} dx\, u^s_\m N \prod_{i=1}^n (\nu_{g_i}'-\nu_{g_i})\sum_{\m J\subseteq \m N \setminus \m K} (-1)^{|\m J|} \nu_\m J \  \delta_{\m K\cup \m J}[f^s] \ \delta_{\m K\cup \m J}[\varphi] \\
 = & \int_{V_1} dx \, u^s_{g_1} u^s_{\m H_1}  (\nu_{g_1}' - \nu_{g_1}) \left( \sum_{\emptyset \neq \m L \subseteq \m H_1} (-1)^{|\m L|} \nu_\m L \prod_{\eta \in \m L \setminus \{G\} }f^s_{\ti \eta \ds \m L} \ f^s_{\ti g_1 \ds  \m L} + f^s_{\ti g_1} \right) \\
 & \times \! \! \int_{V_2} dx \, u^s_{\m N \setminus (\m H_1\cup \{g_1\})} \prod_{i=2}^{n} (\nu_{g_i}'-\nu_{g_i}) \! \! \! \! \! \sum_{ \substack{ \m J\subseteq \m N \setminus \m K \\ \m J \cap \m H_1 = \emptyset } } (-1)^{|\m J|} \nu_\m J \  \delta_{\m K \setminus \{g_1\}\cup \m J}[f^s] \ \delta_{\m K \cup \m J}[\varphi] \\
= &\, \big\langle R^{ \m H_1 \cup \{g_1\}  }_{\nu}[\ti w_{g_1\ds \m K}] | \nu_{g_1}' \big\rangle \int \cdots.
 \end{align*}
Here we have changed the domain of integration from $\kappa(U)$ to $V_1 \times V_2$ with $V_i$ constructed as follows: Pick a linear extension of the partial order on $\m K=\{g_1,\ldots, g_n\}$ and let $g_1=\ti g_1$ be the minimal element (the proof works also without this assumption, but this simplifies it considerably). Define $X^{ \ti g_1 }$ as in \eqref{x tilde g} and $X^{G'}$ similarly for $G' := G / \ti g_1$. 
Recall the wonderful construction from Definition \ref{wonderful definition 2} and set for every $g$ in $\m B$
\begin{equation*}
Z^1_g := Z_g  \cap (X^{\ti g_1} \times \{0\} ) \quad\text{and}\quad  Z^2_g:= Z_g \cap ( \{0\} \times X^{G'} ).
\end{equation*}
Define $V_1 := X^{\ti g_1} \setminus \cup_{\gamma \in \m B} Z^1_{\gamma}$ and $V_2:=X^{G'} \setminus \cup_{\gamma \in \m B} Z^2_{\gamma}$. Then $V_1$ is a local chart domain for the wonderful model for $\ti g_1$ with respect to the nested set $\m H_1 \cup \{g_1\}$ and adapted basis $B|_{ E(t_{\ti g_1}) }$. The same holds for $G'$ with respect to $\m N \setminus (\m H_1 \cup \{g_1\})$ and $B|_{ E(t_{G / \ti g_1}) }$. 
For the original chart on $Y$ we have 
\begin{equation*}
 \kappa(U)=X^G \setminus (\cup_{\gamma \in \m B} Z_{\gamma} ) \subseteq V_1 \times V_2,
\end{equation*}
and the difference is an union of linear subspaces, i.e\ a set of measure zero. Moreover, the integrand is finite because all divergences associated to the elements of $\m K$ get ``damped'' by $\mu_\m K$, the remaining divergences coming from elements of $\m N \setminus \m K$ are renormalized and $\psi \in \m D(\beta(U_{\m N,B}))$ vanishes in a neighborhood of all $Z_{\gamma}$, which covers the divergences of $\m B \setminus \m N$. Thus, changing the domain is justified and by Fubini's theorem the integral factorizes into the desired product. The last equality holds because of
 \begin{align*}
&\quad\ (\nu_{g_1}' - \nu_{g_1}) \left( \sum_{\emptyset \neq \m L \subseteq \m H_1} (-1)^{|\m L|} \nu_\m L \prod_{\eta \in \m L }f^s_{\ti \eta \ds \m L} \ f^s_{\ti g_1 \ds  \m L} + f^s_{\ti g_1} \right)  \\
& = f^s_{\ti g_1}\nu_{g_1}' - \nu_{g_1} ( f^s_{\ti g_1} \nu_{g_1}')|_{x_{g_1}^{i_{g_1}}=0} + \sum_{\emptyset \neq \m L \subseteq \m H_1} (-1)^{|\m L|} \nu_\m L \prod_{\eta \in \m L \cup \{g_1\} }f^s_{\ti \eta \ds \m L} \nu_{g_1}' \\
& \quad - \sum_{\emptyset \neq \m L \subseteq \m H_1} (-1)^{|\m L|} \nu_\m L \nu_{g_1} \left( \prod_{\eta \in \m L \cup \{g_1\} }f^s_{\ti \eta \ds \m L} \nu_{g_1}'\right)|_{x_{g_1}^{i_{g_1}}=0} \\
& = f^s_{\ti g_1}\nu_{g_1}' + \sum_{\emptyset \neq \m L \subseteq \m H_1 \cup \{g_1\} } (-1)^{|\m L|} \nu_\m L \delta_\m L [ f^s_{\ti g_1} \nu_{g_1}']. 
 \end{align*}
 
 A technical detail: If $g_1$ had another divergent subgraph $h \in \m B \setminus \m N$, the renormalization by $R_\nu$ would not take care of this and the integral would still diverge. But in this case all variables $\{x_e\}_{e\in E(t_h)}$ are set to zero by $\delta_{\m K}$. Then the whole summand associated to $\m K$ in \eqref{rg1} vanishes because $\delta_{\m K}[\varphi]=\delta_{\m K}[\psi \circ \rho ]=0$ since supp$(\psi)\subseteq \kappa(U)$ is disjoint from $ \{ x_e=0 \mid e \in E(t_h)\} $. 
 
The remaining integral is of the same structure as the one we started with, so we can repeat the process for $g_2 \in \m K$ (notice how this relies heavily on $\m K\subseteq \m N$ being nested and the stability of $t$ under contractions). After a finite number of steps we obtain a product of renormalized densities, each factor representing an element of $\m K$, possibly times a last remaining factor,
 \begin{equation*}
  \int_{V_2} dz \, u^s_{\m N \setminus ( \cup_{i=1}^n \m H_i \cup \{g_i\})}\sum_{\substack{\m J\subseteq \m N \setminus \m K \\ \m J \cap (\cup_i \m H_i )=\emptyset}} (-1)^{|\m J|} \nu_\m J \ f^s_{\ti G\ds \m J} \prod_{\gamma \in \m J \setminus \{G\}} f^s_{\ti \gamma \ds \m J} \ \delta_{\m K\cup \m J}[\varphi].
  \end{equation*}
If $G$ lies in $\m K$, then $\delta_\m K[\varphi]=\psi(0)$ is constant and the procedure ends before this last step since the $\m H_i$ cover $\m N$. If $G$ is not in $\m K$, then $G \ds \m K$ could have remaining divergences, given by elements of $\m N \setminus (\m K \cup \m H_1 \cup \cdots \cup \m H_n)$. In this case the integral is $\langle R_{\nu}[\ti w^s_{G \ds \m K} ]|\delta_{\m K}[\varphi]\rangle$, the renormalized expression for $\ti w^s_{G \ds \m K}$, applied to the test function $\delta_\m K[\varphi]$, because   
\begin{align*}
 &\quad \sum_{\substack{\m J\subseteq \m N \setminus \m K \\ \m J \cap (\cup_i \m H_i )=\emptyset}} (-1)^{|\m J|} \nu_\m J \ f^s_{\ti G\ds \m J} \prod_{\gamma \in \m J \setminus \{G\}} f^s_{\ti \gamma \ds \m J} \ \delta_{\m K\cup \m J}[\varphi] \\
 & = \sum_{\m J \subseteq \m N \setminus ( \m K \cup \m H_1 \cup \cdots \cup \m H_n  )} (-1)^{|\m J|} \nu_\m J \ \delta_{\m J} [ f^s_{ G \ds \m K}  \delta_\m K[\varphi] ].
\end{align*}

Last but not least, we need to take care of the exponents in $u_g^s$ not matching the ones provided by the definition of $\ti w^s_{G/ \m K}$ and $\ti w^s_{g\ds \m K}$. This would not happen if we had defined subgraphs $g\subseteq G$ as given by there edge set $E(g)\subseteq E(G)$ but with $V(g)=V(G)$ (we chose not to do so because in the formulation presented here, $X$ is spanned by variables associated to edges of an adapted spanning tree, not by the elements of $V'$ like in \cite{bbk}). However, we can also just rescale the complex regularization parameter $s= 1- \frac{d_{\ti g}}{d_g}(1-\ti s)$ without affecting the whole construction to obtain the correct exponent in $u_{\ti g}^s$. On the other hand, this discrepancy does not show up in the limit $s \to 1$ which we are allowed to take because this proof shows that every term in~\eqref{rg1} and \eqref{rg2} is well-defined at $s=1$. Putting everything together we arrive at the desired formula.
\end{proof}

The case of (local) minimal subtraction works in the same manner as above. This is already clear if we think of $R_1$ as a non-smooth version of $R_{\nu}$ by making the (forbidden) substitution $\nu_g(x_g)= \theta (1-|x_g^{i_g}|)$. 
Let $\m N$ be a $\m B$-nested set. Locally the minimal subtraction operator $R_1$ is given by
\begin{equation*}
\ti w^s(x) =  \prod_{\gamma \in \m N} u^s_{\gamma}(x^{i_{\gamma}}_{\gamma}) f^s(x)|dx| \overset{R_1}{\longmapsto}  \prod_{\gamma \in \m N} \big(u^s_{\gamma}(x^{i_{\gamma}}_{\gamma} )\big)_{\heartsuit}f^s(x) |dx|,
\end{equation*}
where $(\cdots)_{\heartsuit}$ denotes the regular part
\begin{equation*}
 \langle \big(u^s_g \big)_{\heartsuit} | \varphi \rangle := \int dx \, |x|^{-1 +d_g(1- s)} \big(\varphi(x) - \theta(1-|x|) \varphi(0) \big)
\end{equation*}
and the factor $f^s$ is treated as test function. If instead $\theta^c(x):=\theta( c - |x|)$ with $c>0$ is used as cutoff, the principal part of the Laurent expansion does not change while the regular part $(u^s_g)_{\heartsuit_c}$ gets an additional $\delta$-term,
\begin{equation*}
2 \sum_{k\geq 0}\frac{ d_g^k \log^{k+1}(c) }{k+1!} (s-1)^k \delta. 
\end{equation*}
Therefore the whole Laurent series for $u_g^s$ is given by
\begin{align*}
  \langle u^s_g | \varphi \rangle & =  \frac{2}{d_g} \langle \delta | \varphi \rangle (1-s)^{-1} + \sum_{k\geq0}  \langle \frac{1}{k!} \big( |x|^{-1} \log^k(|x|) \big)_{\heartsuit_c} | \varphi \rangle (1-s)^k \\
 & =\frac{2}{d_g} \langle \delta | \varphi \rangle (1-s)^{-1} + \sum_{k\geq0}  \langle \frac{1}{k!} \big( |x|^{-1} \log^k(|x|) \big)_{\heartsuit} | \varphi \rangle (1-s)^k\\
 & \quad  + \sum_{k\geq0} \frac{1}{k+1!} d^{k}_g\log^{k+1}(c)\langle \delta | \varphi \rangle  (1-s)^k.
\end{align*}
Write $\m N=\{g_1 , \ldots, g_n \}$ and for $k \in \set{n}$ let $ x_k^{i_k}$ denote the associated marked element. Write $\hat x$ for the collection of all other coordinates. For a test function $\varphi \in \m D(\kappa (U) )$ set 
\begin{equation*}
 \phi_s(x_1^{i_1}, \ldots, x_n^{i_n}):= \int d\hat x \, f^s (x) \varphi(x). 
\end{equation*}
By expanding the successive application of the regular parts $(u^s_g)_{\heartsuit}$ and reordering the sum we see that $R_1$ is expressed by a formula similar to the one for subtraction at fixed conditions:
\begin{align*}
 \big\langle  R^\m N_1[\tilde w^s] | \varphi \big\rangle & = \big\langle (u^s_{g_1})_{\heartsuit}| \langle (u^s_{g_2})_{\heartsuit} |\cdots \langle (u^s_{g_n})_{\heartsuit} | \phi_s \rangle \cdots \rangle \big\rangle \\
& = \big\langle (u^s_{g_1})_{\heartsuit} | \langle (u^s_{g_2})_{\heartsuit} | \cdots \int dx_n^{i_n}  |x_n^{i_n}|^{- 1 + d_n (s-1)}\\
&\quad \times  \big(  \phi_s(x_1^{i_1}, \ldots, x_n^{i_n}) -  \theta^1(x_n^{i_n}) \phi_s(x_1^{i_1}, \ldots, x_{n-1}^{i_{n-1}}, 0)  \big) \rangle \cdots\big\rangle \\
& = \sum_{\m K \subseteq \m N} (-1)^{|\m K|} \int dx \,  \prod_{j=1}^n u^s_{g_j}(x_j^{i_j})  \prod_{\gamma \in \m K} \theta_{\gamma} (x_{\gamma}^{i_{\gamma}}) \, \delta_{\m K} [f^s \varphi](x).
\end{align*}
We do the same for the regular parts obtained by cutting off at $c$,
\begin{align*}
 \big\langle  R^\m N_{c}[\tilde w^s] | \varphi \big\rangle = & \sum_{\m K \subseteq \m N} (-1)^{|\m K|} \int dx \, \prod_{j=1}^n u^s_{g_j}(x_j^{i_j})  \prod_{\gamma \in \m K} \theta^c_{\gamma} (x_{\gamma}^{i_{\gamma}}) \, \delta_{\m K} [f^s \varphi](x),
\end{align*}
and write $\theta^{c'}(x)= \theta^c(x) + \vartheta^{c,c'}(x)$ with
\begin{equation*}
 \vartheta^{c,c'}(x)=\begin{cases}
                      1 & \text{ for } c<|x|< c', \\
                      0 & \text{ else.}
                     \end{cases}
\end{equation*}
Then we can express the difference between two minimal subtraction operators $ R_{c'}^{\m N}$ and $R_{c}^{\m N} $, applied to $\tilde w^s$, by the density
\begin{equation*}
 \varphi \longmapsto \sum_{\emptyset \neq \m K \subset \m N} (-1)^{|\m K|} \int dx \,  u^s_\m N   \Theta_{\m K}^{c',c} \delta_\m K[f^s \varphi].
\end{equation*}
Here $\Theta_{\m K}^{c',c}:= \prod_{\gamma \in \m K} \theta^{c'}(x_{\gamma}^{i_{\gamma}}) - \prod_{\gamma \in \m K} \theta^{c}(x_{\gamma}^{i_{\gamma}})$ is a ``multidimensional cutoff'', supported on 
\begin{equation*}
\big\{ x \in \mathbb{R}^{|\m K|} \mid  c<|x_i|<c' \text{ for all } i \in \{1, \ldots, |\m K|\} \big\}. 
\end{equation*}
Expanding $\Theta^{c,c'}_{\m K}$ and reordering the sum, we have
\begin{align*}
\Theta_{\m K}^{c',c} &= \prod_{\gamma \in \m K} ( \theta^c_{\gamma}  + \vartheta^{c,c'}_{\gamma} ) - \prod_{\gamma \in \m K} \theta^{c}_{\gamma} \\
&= \sum_{\emptyset \neq \m J\subseteq \m K}   \left(\prod_{ \gamma \in \m J} \vartheta^{c,c'}_{\gamma} \right) \left( \prod_{\eta \in \m K \setminus \m J}\theta^c_{\eta} \right) = \sum_{ \emptyset \neq \m J\subseteq \m K } \!\!\! \vartheta^{c,c'}_{ \m J} \theta^c_{\m K \setminus \m J}. 
\end{align*}
Putting everything together, we find that a change of the renormalization point $c$ is expressed by a sum of densities supported on components of the exceptional divisor given by subsets $\m K \subseteq \m N$,
\begin{align*}
 \big\langle ( R^\m N_{c'} - R^\m N_{c} )[\tilde w^s] | \varphi \big\rangle = \sum_{\emptyset \neq \m K \subseteq \m N } (-1)^{|\m K|} \sum_{\emptyset \neq \m J\subseteq \m K}  \int dx \,  u_{\m N}^s \vartheta^{c,c'}_{\m J} \theta^c_{\m K \setminus \m J}   \delta_{\m K}[f^s \varphi] .
\end{align*}

We can repeat the argumentation from the case of subtraction at fixed conditions to arrive at the formulae of Proposition \ref{rg prop} and Theorem \ref{rg thm}:
\begin{align*}
  \big\langle ( R^{\m N}_{c'} - R^{\m N}_c )[\tilde w^s] | \varphi \big\rangle &=  \sum_{\emptyset \neq \m K \subseteq \m N } (-1)^{|\m K|} \big\langle R_{c}^{\m N \setminus \m K}[ \vartheta^{c,c'}_{\m K} \cdot (\ti w^s)_{\m E_{\m K}} ]|\varphi \big\rangle, \\
 \big \langle ( R^{\m N}_{c'} - R^{\m N}_c )[\tilde w^s] | \varphi \big\rangle& =  \sum_{\emptyset \neq \m K \subseteq \m N } (-1)^{|\m K|} c_{\m K} \big\langle R_c[\ti w^s_{G/ \m K}] | \delta_{\m K} [\varphi] \big\rangle.
 \end{align*}
Viewing the maps $\theta^{c'}_{\gamma}$ as test functions, the constants $c_{\m K}$ are exactly the same as in \eqref{rg2}. They are given by densities of $\m K$-contracted graphs, evaluated at their respective renormalization points, $c_{\gamma}= \prod_{\gamma \in \m K} \langle R_c[\ti w^s_{\gamma \ds \m K} ] | \theta^{c'}_{\gamma} \rangle$. 

Eventually one would like to apply the formulae presented here not on distributions given by single graphs but on the formal sum of all graphs expressing a given interaction (an \textit{amplitude}). The study of the behaviour of amplitudes under a change of renormalization points allows in best cases even to derive statements beyond perturbation theory. The main idea is that physical observables do not depend on the choices made in fixing a renormalization scheme. This leads to a differential equation, the \textit{renormalization group equation} (cf.\ \cite{co}), or in the language of the renormalization Hopf algebra, to (combinatorial) \textit{Dyson-Schwinger equations} (cf.\ \cite{dk}). 
So far we have not used any differential methods, but to explore these objects within the wonderful framework it seems that subtraction at fixed conditions is then the way to go. This is reserved for future work.

\section{Back to physics}\label{eg2}

This section connects the geometric method of extending distributions presented here to physics. We show that wonderful renormalization satisfies the \textit{locality principle} of Epstein and Glaser \cite{eg}. After that we finish with an outlook of how to relate our approach to the method of Epstein-Glaser, i.e.\ to the renormalization of amplitudes, and how Hopf algebras can be utilized to describe the wonderful renormalization process. 

\subsection{Connection to the Epstein-Glaser method}
The Epstein-Glaser locality principle is the position space analogue of locality of counterterms. It decides whether a given theory is renormalizable, i.e.\ if adding counterterms to renormalize the Lagrangian keep its form invariant. 
In \cite{bbk} this principle is formulated in a version for single graphs.

\begin{defn}[Locality principle]
Let $G$ be a connected graph. Let $\mathscr R$ denote a renormalization operator. $\mathscr R$ satisfies the \textit{locality principle} of Epstein and Glaser if 
\begin{equation}\label{loc princ}
 \mathscr R [v_G]=\mathscr R[v_g] \mathscr R[v_h] v_{G\setminus (g\cup h)} \text{ on } X^G \setminus X_s^{G \setminus (g\cup h)}
\end{equation}
holds for all disjoint pairs $g,h$ of connected and divergent subgraphs of $G$.
\end{defn}

This is to be understood in the sense of distributions. For all test functions $\varphi \in \m D(X^G)$ with support disjoint from $X_s^{G \setminus (g \cup h)}$ the renormalization of $v_G$ is already determined by the renormalized distributions $v_g$ and $v_h$. Note that $v_g$ and $v_h$ depend on disjoint sets of variables and, outside of $X_s^{G \setminus (g \cup h)}$, $v_{G\setminus (g\cup h)}$ is regular in the coordinates associated to $G \setminus (g \cup h)$. Therefore, the product on the right hand side of \eqref{loc princ} is well-defined.
Another way to formulate the locality principle is that for causal disconnected regions ($g,h$ disjoint) the renormalized distribution is given by ``lower order'' (i.e.\ subgraph-)distributions. In the Epstein-Glaser method this is one of the main ingredients in the construction; it allows to recursively construct the $n$-th order term $T^n$ in the formal series for the $S$-matrix up to the small diagonal in $M^n$.

\begin{thm}
 Let $\mathscr R$ be given by minimal subtraction or subtraction at fixed conditions on the minimal wonderful model for the divergent arrangement of a connected and at most logarithmic graph. 
 Then $\mathscr R $ satisfies the locality principle \eqref{loc princ}.
\end{thm}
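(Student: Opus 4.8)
The plan is to reduce the locality principle to a purely local statement on the minimal wonderful model, where the renormalization operators act chart by chart according to $\m B$-nested sets. First I would fix disjoint connected divergent subgraphs $g,h\subseteq G$ and a test function $\varphi\in\m D(X^G)$ with $\operatorname{supp}\varphi$ disjoint from $X_s^{G\setminus(g\cup h)}$. Pulling back along $\beta$ and using a partition of unity subordinate to the atlas $\{U_{\un i}\}$, it suffices to evaluate $\langle R[\ti w^s_G]|_{s=1}\mid \beta^*\varphi\rangle$ in a single chart $U_{\un i}$, $\un i=(\m N,B)$. The key observation is that over the region where $\varphi$ is supported, every propagator of an edge in $G\setminus(g\cup h)$ is bounded away from its singular locus, so the only components of the exceptional divisor that $\operatorname{supp}(\beta^*\varphi)$ can meet are the $\m E_\gamma$ with $\gamma$ a subgraph of $g$ or of $h$. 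Since $g$ and $h$ are disjoint, $I(\m D)_{\leq g}$ and $I(\m D)_{\leq h}$ involve disjoint edge sets, so any nested set $\m N$ relevant in this chart splits as $\m N=\m N_g\,\dot\cup\,\m N_h$ with $\m N_g\subseteq \m D_{\leq g}$ and $\m N_h\subseteq \m D_{\leq h}$ (no element can straddle both, again by Lemma \ref{irr}, since a subgraph meeting both $g$ and $h$ would force a reducible join that is not in the minimal building set). Here the hypothesis that we use the \emph{minimal} wonderful model is essential: for a larger building set one could have $g\cup h\in\m B$, introducing a component $\m E_{g\cup h}$ that does not factor.

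Next I would use the factorization of the local data established in the proof of Proposition \ref{pullback} and Theorem \ref{laurent}. Choose the adapted spanning tree $t$ so that $t_g$, $t_h$ are spanning trees of $g,h$; then in the coordinates $\{x_e\}_{e\in E(t)}$ the kernel $v_G$ factors as $v_g(\{x_e\}_{e\in E(t_g)})\,v_h(\{x_e\}_{e\in E(t_h)})\,v_{G\setminus(g\cup h)}(\{x_e\})$, and on $\operatorname{supp}(\beta^*\varphi)$ the last factor is smooth. Correspondingly the local blow-down map $\rho_{\un i}$ decomposes: the scalings coming from $\m N_g$ act only on $\{x_e\}_{e\in E(t_g)}$ and those from $\m N_h$ only on $\{x_e\}_{e\in E(t_h)}$. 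Hence $\ti w^s_{G,\un i}=\ti w^s_{g,\un i_g}\cdot \ti w^s_{h,\un i_h}\cdot(\text{regular})$, where $\un i_g=(\m N_g, B|_{E(t_g)})$ is a chart for the wonderful model of $g$ and likewise for $h$. Because the renormalization operator $R$ (either $R_1$ or $R_\nu$) is defined as a product over $\gamma\in\m N$ of the one-variable operators $r_1[u^s_\gamma]$ or $r_{\nu_\gamma}[u^s_\gamma]$, and $\m N=\m N_g\,\dot\cup\,\m N_h$, the operator factorizes: $R^{\un i}=R^{\un i_g}\otimes R^{\un i_h}$ acting on the two disjoint variable blocks, leaving the regular factor untouched (using the lemma that $R$ commutes with multiplication by smooth functions). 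Taking $s\to 1$ and pushing forward by $\beta$ then yields $\mathscr R[v_G]=\mathscr R[v_g]\,\mathscr R[v_h]\,v_{G\setminus(g\cup h)}$ on the prescribed region, where one checks that $\beta$ restricted to the relevant part is itself a product of the blow-downs for $g$ and $h$ times the identity on the remaining coordinates, so the pushforwards match up.

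I would then assemble the chart-wise identities into the global statement, noting (as in the proof that $\mathscr R[v_G]$ is independent of the partition of unity) that the overlaps only contribute via the $\m E_\gamma$ with $\gamma\leq g$ or $\gamma\leq h$, so no spurious cross terms appear, and the sum over charts reproduces exactly the product $\mathscr R[v_g]\,\mathscr R[v_h]$ evaluated against $\delta$-type operators in the respective marked coordinates, times $v_{G\setminus(g\cup h)}$.

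The main obstacle I anticipate is the \textbf{splitting of nested sets} step: justifying rigorously that in every chart meeting $\operatorname{supp}(\beta^*\varphi)$ the nested set decomposes as $\m N_g\,\dot\cup\,\m N_h$ with no element of $\m N$ ``mixing'' $g$ and $h$, and that the building-set structure of $\m D$ restricted to $g$ (resp.\ $h$) coincides with that used to build the wonderful model of $g$ (resp.\ $h$) on its own. This requires a careful application of Lemma \ref{irr} together with the minimality of the building set, and attention to the edge-set bookkeeping under the adapted spanning tree $t$ — in particular that $t_g$, $t_h$ can be chosen to extend to a single $\m D$-adapted $t$, which follows from Proposition \ref{adsptr exists} applied after noting that $g\cup h$ is reducible with $g\cap h=o$. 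The remaining issue — matching the exponents in $u^s_\gamma$ against the ones dictated by $v_g$ versus $v_G$ — is the same harmless rescaling of the regularization parameter already encountered at the end of the proof of Theorem \ref{rg thm}, and disappears in the limit $s\to 1$.
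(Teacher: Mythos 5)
Your proposal is correct and follows essentially the same route as the paper's proof: the splitting of minimal nested sets as $\m N=\m N_g\cupdot\m N_h$ via Lemma \ref{irr}, the covering of $\beta^{-1}(X^G\setminus X_s^{G\setminus(g\cup h)})$ by the corresponding charts, the product decomposition of $\rho_{\m N,B}$ and of the kernel $f=f_gf_hf_{G^-}$, and the resulting factorization of the counterterm sum over $\m K=\m K_g\cupdot\m K_h$. The obstacle you flag (that no element of a nested set can mix $g$ and $h$, and that the building-set structure restricts correctly) is precisely what the paper establishes as its two preliminary claims.
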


\begin{proof}
We follow the lines of \cite{bbk} but correct the proof by adding some essential details missing there.

Let $Y_g$, $Y_h$ and $Y$ denote minimal wonderful models for $g$, $h$ and $G$. Let $t$ be $\m D(G)$-adapted and set $X^-:=X^{ G^- }=\{ \{x_e\} \mid e \in E(t \cap G^-) \}$ for $G^-:=G\setminus (g \cup h)$. In the language of wonderful models the theorem states that $Y':=Y_g \times Y_h \times X^-$ is a (minimal) wonderful model for the divergent arrangement of the graph $g\cup h$ in $S:= X \setminus X_s^{G\setminus (g \cup h)}$. 
Let $\m B$ denote the minimal building set in the divergent arrangement. The proof is based on two claims: Every $\m B(g\cup h)$-nested set is given by a disjoint union $\m N_g \cupdot \m N_h$ of $\m B(g)$- and $\m B(h)$-nested sets (one of them possibly empty). Secondly, $\beta^{-1}(S) \subseteq Y$ is covered by the open sets $U_{\m N,B}$ with $\m N=\m N_g \cupdot \m N_h$ as above and $B$ marked accordingly.

Proof of first claim: Since $g$ and $h$ are disjoint, Lemma \ref{irr} implies $\m B(g\cup h)=\m B(g)  \cup \m B(h)$. This shows that if $\m N_g$ and $\m N_h$ are nested with respect to $\m B(g)$ and $\m B(h)$, then $\m N_g \cup \m N_h$ is $\m B( g\cup h )$-nested. On the other hand, every subset of a nested set is nested itself. With $\m B(g\cup h)=\m B(g)  \cup \m B(h) $ the claim follows.

Proof of second claim: If $\gamma$ is an element of $\m B(G) \setminus \m B(g\cup h)$, then it must contain an edge $e$ in $E(t_{\gamma} \setminus t_g)$. From
\begin{equation*}
 A_{\gamma}^{\bot} = \bigcap_{e' \in E(t_\gamma)}A_{e'}^{\bot} \subseteq A_e^{\bot}
\end{equation*}
and $e \in E(G^-)$ it follows that $\m E_{\gamma} \cap \beta^{-1}(S) = \emptyset$. 

Now let $x \in S$ and set $y:= \beta^{-1}(x)$. We are looking for $\m N=\m N_g \cupdot \m N_h$ and a marking of $B$ such that $y \in U_{\m N,B}$. Consider the $\m B(G)$-nested set $\m N:= \{\gamma,\eta\}$ where $\gamma \in \m B(g)$ and $\eta \in \m B(h)$ (one of them possibly the empty graph $o$). Let $B$ be marked accordingly and let $\underline{\hat x}_\gamma$ and $\underline{\hat x}_\eta$ denote the collection of coordinates $\{x_e\}_{e\in E(t_{\gamma} )}$ and $\{x_e \}_{e\in E(t_{\eta} )}$ where the marked elements $x_{\gamma}^{i_{\gamma} }$ and $x_{\eta}^{i_{\eta} }$ are set to $1$. The map $\rho_{\m N,B}$ scales $\underline{\hat x}_\gamma$ by $x_{\gamma}^{i_{\gamma} }$, $\underline{\hat x}_\eta$ by $x_{\eta}^{i_{\eta} }$ and leaves all other coordinates unaltered - it does not ``mix'' coordinates because $g$ and $h$ are disjoint. Recall that $\rho_{\m N,B}$ is the essential part in the definition of the chart 
\begin{align*}
 \kappa_{\m N,B}^{-1}: X \setminus \bigcup_{\xi \in \m B(G)} Z_{\xi} \longrightarrow U_{\m N,B}, \quad x \longmapsto \big(  x_{\gamma}^{i_\gamma} \underline{\hat x}_{\gamma}, \ldots, x_{\eta}^{i_{\eta}} \underline{\hat x}_{\eta}; [\underline{\hat x}_{\gamma}] , \ldots, [\underline{\hat x}_{\eta}] \big).
\end{align*}
If $\xi$ is in $\m B(G) \setminus \m B(g \cup h)$, the $Z_\xi$ are given by $\{x_e=0 \mid e \in E(t_{\xi})\}$ which is a subset of $X \setminus S$. Similarly, for $\xi \in \m B(g\cup h)$ with either $\xi < \gamma$ or $\xi < \eta$ we have $Z_\xi = \{x_e=0 \mid e \in E(t_{\xi})\}$. If $\xi \in \m B(g \cup h)$ and either $\xi > \gamma$ or $\xi> \eta$, then $Z_\xi = \{x_{\gamma}^{i_\gamma}=0 , x_e = 0 \mid e \in E(t_\xi \setminus t_\gamma )\}$ or with $\gamma$ replaced by $\eta$.  Finally, $Z_\gamma = Z_\eta = \emptyset$. From this description it is clear how to find $U_{\m N,B}$ containing $y$: Pick an appropriate pair $\gamma, \eta$ and a marking of $B$ such that $x$ does not lie in one of the $Z$'s, then find the local preimage of $x$ under $\rho_{\m N,B}$ by solving a trivial system of linear equations.

The previous discussion also shows that for nested sets $\m N = \m N_g \cupdot \m N_h$ and $B$ marked accordingly the map $\rho_{\m N,B}$ restricts to the identity on $X^-$, $\rho_{\m N, B} = \rho_{\m N_g,B_g} \times \rho_{\m N_h,B_h} \times \text{id}$. Moreover, we have a diffeomorphism
\begin{align*}
U_{\m N,B} \cong U_{\m N_g, B_g} \times U_{\m N_h,B_h} \times X^- \setminus \bigcup_{\gamma \in \m B(G) \setminus \m B(g \cup h)} Z_\gamma.
\end{align*}
A similar decomposition holds for the charts $\kappa_{\m N,B} = \kappa_{\m N_g,B_g} \times \kappa_{\m N_h,B_h} \times \text{id}$. Thus, on $\beta^{-1}(S)$ both models $Y$ and $Y'$ locally look the same and we can find a partition of unity
$\chi_{\m N, B}= \chi_{\m N_g,B_g}\otimes \chi_{\m N_h,B_h} \otimes 1$,
subordinate to the sets $U_{\m N, B} \cap \beta^{-1}(S)$. With the notation introduced in the proof of Theorem \ref{rg thm} we have $f=f_g f_h f_{G^-}$, so that in every such chart
\begin{align*}
  \langle R^{\m N,B}_{\nu}[\ti w_G] | \psi \rangle & = \sum_{\m K \subseteq \m N} (-1)^{|\m K|} \langle \nu_\m K \cdot (\ti w_G)_{\m E_\m K} | \psi \rangle \\
 & = \sum_{\m K_g \cupdot \m K_h \subseteq \m N} (-1)^{|\m K_g|+|\m K_h|} \int dx \, u_{\m N} \nu_{\m K_g} \nu_{\m K_h} \delta_{\m K_g \cup \m K_h} [f \psi ] \\
 & = \int dx  \sum_{\m K_g \subseteq \m N_g} (-1)^{|\m K_g|} u_{\m N} \nu_{\m K_g} \delta_{\m K_g}[f_g]    \\
& \quad \times  \sum_{\m K_h \subseteq \m N_h} (-1)^{|\m K_h|}\nu_{\m K_h} \delta_{\m K_h}[f_h] \delta_{\m K_g \cupdot \m K_h}[f_{G^-}] \delta_{\m K_g\cupdot \m K_h} [\psi] \\
& = \big\langle R^{\m N_g,B_g}_{\nu}[\ti w_g] \otimes R^{\m N_h,B_h}_{\nu}[\ti w_h] \mid \langle \ti w_{G^-} | \psi \rangle \big\rangle.
 \end{align*}
Applying this to $\psi =  ( \chi_{\m N,B} \circ \kappa_{\m N,B}^{-1}) \rho_{\m N,B}^* \varphi$ and summing over all nested sets and corresponding markings shows \eqref{loc princ}. The case of minimal subtraction works in the same way (cf.\ the discussion in Section \ref{renormalization group}). This finishes the proof. 
\end{proof}

 To connect the graph by graph method presented in this thesis with the Epstein-Glaser construction we need to renormalize the sum of all graphs with a fixed vertex order. Thus, we need a space that serves as a universal wonderful model for all at most logarithmic graphs on $n$ vertices. There are two obvious candidates, the minimal and the maximal wonderful models of the graph lattice $\m G$ for the complete graph $K_n$. Since every divergent subgraph of a graph is saturated, the set $\m G(K_n)$ contains all possible divergences of such a graph. In other words, these two models are universal in the sense that for every graph $G$ on $n$ vertices there exist canonical proper projections 
 \begin{align}\label{universal}
 p^G_{\text{max}}:Y_{\m G(K_n)} & \longrightarrow Y_{\m D(G)}, \\
 p^G_{\text{min}} : Y_{I(\m G(K_n))} & \longrightarrow Y_{I(\m D(G))}.                                                                                                                                                                                                                                                                                                                                                                                                                                                                                                                                                                                                                                                                                                                                                                                                                                                                                                                                                                                                                                                                                                                                                                                                                                                                     \end{align}
This follows from Definition \ref{wonderful definition 1}.
The theorem above suggests to focus on minimal building sets. Let $G$ be a connected and at most logarithmic graph on $n$ vertices. The idea is to compose the projection $p^G_{\text{min} }$ with the blowdown $\beta$ of the wonderful model $Y_{I(\m D(G))}$ and consider the pullback $\ti w_G$ of $v_G$ under this map, then proceed as before to obtain a renormalized density on $Y_{I(\m G(K_n))}$. A detailed description is left for future work, but we make one further observation that highlights the connection between wonderful renormalization and the Epstein-Glaser method.
Recall that the wonderful model $Y_{I(\m G(K_n))}$ is equivalent to the Fulton-MacPherson compactification of the configuration space $F_n(M)$, for which the structure of $I(\m G(K_n))$-nested sets is encoded by rooted trees \cite{fm}. As shown in \cite{bergkrei}, Epstein-Glaser renormalization can also be formulated in terms of rooted trees. On the other hand, the Hopf algebra of rooted trees $H_{\text{rt}}$ satisfies an universal property in the category of renormalization Hopf algebras \cite{dk}, as does the Fulton-MacPherson compactification in the category of (minimal) wonderful models (cf.\ Equation \eqref{universal})!

\subsection{Connection to renormalization Hopf algebras}

As shown in \cite{bbk}, the renormalization Hopf algebra of Feynman graphs is encoded in the stratification of the exceptional divisor $\m E$ of a wonderful model associated to a graph $G$. We sketch the arguments and finish with a discussion of a Hopf algebraic formulation of wonderful renormalization.

Let $H$ be the free algebra on the vector space spanned by (isomorphism classes) of connected, divergent (at most logarithmic) graphs. The multiplication on $H$ is given by disjoint union, the empty graph being the unit element. In \cite{bk} it is shown that $H$ endowed with a coproduct $\Delta$ given by 
\begin{equation*}
 \Delta(G):=\sum_{\gamma \in \m D} \gamma \otimes G\ds \gamma
\end{equation*}
is indeed a Hopf algebra. To cope with the case of minimal building sets, i.e.\ irreducible graphs, we can mod out by the ideal $I$ generated by all irreducible decompositions as defined in Section \ref{comps2}.
On $\ti H:=H/I$ it is the antipode $S: \ti H \to \ti H$ that disassembles $G$ into parts determined by its irreducible divergent subgraphs and prepares so the renormalization process. In terms of contraction relative to nested sets $S$ is given by 
\begin{equation*}
 S(G)= \sum_{\m N} (-1)^{|\m N|} \prod_{\gamma \in \m N} \gamma \ds \m N
\end{equation*}
where the sum is over $I(\m D(G))$-nested sets containing $G$.
This is the starting point of Hopf algebraic renormalization. The goal is then to formulate the whole wonderful renormalization process in terms of the convolution product of a twisted antipode with Feynman rules, similar to renormalization in momentum or parametric space. This is not straightforward due to the local formulation of the renormalization operators, but motivated by another, more direct approach: The combinatorial character of Zimmermann's forest formula is a first hint at a Hopf algebra structure underlying renormalization. Our locally defined wonderful renormalization operators resemble the classical formula for subtracting divergences only in certain charts. 
To connect with the forest formula and translate it into Hopf algebraic terms we could use the idea that if a graph has only subdivergences nested into each other, then local subtractions resemble the forest formula correctly. Working modulo primitive elements of $H$, or $\ti H$, every graph can be written as a sum of graphs that behaves like an element with purely nested subdivergences \cite{bk}. This shows that in principle wonderful renormalization fits into the Hopf algebraic framework. 
Of course, it is worthwhile to establish the connection on a more abstract level using geometrical methods. Once this is achieved, the whole world of renormalization Hopf algebras can be explored and used in the position space setting.

\bibliographystyle{alpha}
\bibliography{ref}

\end{document}